\documentclass[pdftex,prx,aps,nofootinbib,floatfix,superscriptaddress]{revtex4-2}
\usepackage[utf8]{inputenc}
\setlength{\parindent}{15pt}
\usepackage{braket}
\usepackage{amsmath}
\usepackage{geometry}
\usepackage{amsfonts}
\usepackage{graphicx}
\usepackage[T1]{fontenc}
\usepackage{hyperref}
\usepackage{amsthm}
\usepackage{amssymb}
\newtheorem{definition}{Definition}[section]
\newtheorem{example}{Example}[section]
\newtheorem{claim}{Claim}[section]
\newtheorem{theorem}{Theorem}[section]
\newtheorem{conjecture}{Conjecture}[section]
\newtheorem{lemma}{Lemma}[section]
\usepackage{vwcol}
\usepackage{tikz}
\usepackage{mathtools}
\usepackage{tasks}
\usepackage{appendix}
\usepackage{algorithm}
\usepackage{algpseudocode}
\usepackage{makecell}
\usepackage{wrapfig}

\newcommand{\bigsigma}{\raisebox{-.35\baselineskip}{\huge\ensuremath{\Sigma}}}

 \geometry{
 a4paper,
 total={170mm,257mm},
 left=20mm,
 top=20mm,
 }
\begin{document}

\begin{abstract}
The quantum Schur transform is a fundamental building block that maps the computational basis to a coupled basis consisting of irreducible representations of the unitary and symmetric groups. 
Equivalently, it may be regarded as a change of basis from the computational basis to a simultaneous spin eigenbasis of Permutational Quantum Computing (PQC) [Quantum Inf. Comput., 10, 470–497 (2010)]. By adopting the latter perspective, we present a transparent algorithm for implementing the qubit quantum Schur transform as a unitary operation which uses $O(\log(n))$ ancillas and can be decomposed into a sequence of $O(n^3\log(n)\log(\frac{n}{\epsilon}))$ Clifford + T gates, where $\epsilon$ is the accuracy of the algorithm in terms of the trace norm. We discuss the necessity for some applications of implementing this operation as a unitary rather than an isometry, as is often presented.

By studying the associated Schur states, which consist of qubits coupled via Clebsch-Gordan coefficients, we introduce the notion of generally coupled quantum states. We present six conditions, which in different combinations ensure the efficient preparation of these states on a quantum computer or their classical simulability (in the sense of computational tractability). It is shown that Wigner 6-j symbols and SU(N) Clebsch-Gordan coefficients naturally fit our framework. Finally, we investigate unitary transformations which preserve the class of computationally tractable states. 

\end{abstract}

\title{Generalised Coupling and An Elementary Algorithm for the Quantum Schur Transform}
\author{Adam Wills}
\author{Sergii Strelchuk}
\affiliation{DAMTP, Centre for Mathematical Sciences, University of Cambridge, Cambridge CB30WA, UK}

\maketitle

\section{Introduction}

Identifying the ingredients responsible for quantum  advantage is one of the major challenges in quantum computing. Proofs of unconditional separation between certain complexity classes are few but include, notably,~\cite{bravyi2018quantum,watts2023unconditional}. On the other end, the study of classical simulability of quantum computational processes provides a way to explore the boundary between classical and quantum computational power. Two of the most important examples of classically efficiently simulatable computations come from the Gottesmann-Knill theorem \cite{gottesman1997stabilizer} and the simulation of nearest neighbour matchgate circuits \cite{jozsa2008matchgates}.
 In certain cases, the ability to classically simulate quantum computational processes comes as a surprise -- as in the case of Permutational Quantum Computing (PQC)~\cite{jordan2009permutational,havlicek2018quantum, HavlicekThesis}. The latter represents a restricted class of quantum computations originating in the works of Marzuoli, Rasetti and Penrose \cite{marzuoli2005computing,penrose1971angular} and in its current form formalised by Jordan in \cite{jordan2009permutational}. This class of quantum computations was subsequently shown to be classically efficiently simulatable, first in the Schur basis in \cite{havlicek2018quantum} and later generalised to a wider range of bases in \cite{HavlicekThesis}. PQC operates in a state-space of coupled qubits, where the couplings are  SU(2) Clebsch-Gordan coefficients. The proofs of its classical simulability all rely on the fact that such coefficients satisfy relations that correspond to the conservation of angular momentum. This property enables one to show that the states involved in PQC belong to the broad class of computationally tractable (CT) states -- a very general notion of classical simulation first introduced by Van den Nest in~\cite{VdNProbMethods}. 

The centerpiece of PQC (in the Schur basis) is the quantum Schur transform -- this couples computational basis states and forms arguably the only non-trivial component of the computational process. It finds a number of applications in machine learning~\cite{zheng2022super}, physics~\cite{harrow2005applications}, quantum information~\cite{cirac1999optimal}, chemistry~\cite{pauncz1967alternant} and many others~\cite{keyl2001estimating, knill2000theory,kempe2001theory,hayashi2002optimal}.
In recent years, interest in the Schur transform has led to a number of implementations~\cite{bacon2005quantum,kirby2017practical, krovi2019efficient}. In our work, we present a new, elementary algorithm for this operation.
It will be presented on qubits, but the algorithm readily extends to the qudit case. Furthermore, it also generalises to the other unitaries utilised in PQC (we will refer to them as `PQC unitaries'). Before discussing the algorithms for implementing these more general unitaries, we will first discuss the fascinating connection between Schur-Weyl duality and PQC.
After this, we will introduce the notion of generally coupled states and study their properties. In this setting, Clebsch-Gordan coefficients will be replaced by general `coupling coefficients' that can couple any constant number of systems. Sufficient conditions will be given for these coefficients to define physical quantum states, followed by conditions that make these states efficiently preparable on a quantum computer. Finally, we will introduce sufficient conditions for these states to be computationally tractable. The one condition that will bring these states from efficient preparation on a quantum computer to computational tractability will turn out to be (a very broad notion of) conservation of angular momentum. We will show two examples of coefficients that naturally fit our formalism: the SU(N) Clebsch-Gordan coefficients and the Wigner 6-j symbols. The SU(N) Clebsch-Gordan coefficients are the components of the transformation mapping the basis defined by the tensor product of two SU(N) irreps to that of the corresponding direct sum decomposition -- a widespread operation readily finding use in representation theory. Meanwhile, the Wigner 6-j symbols arise in the recoupling theory of SU(2), and so are in fact relevant in PQC, but find applications far beyond - most notably in the study of spin networks ~\cite{yutsis1962mathematical, varshalovich1988quantum}, with applications to quantum gravity \cite{haggard2011asymptotic} as well as in calculations for molecular scattering \cite{de2003orthogonal}.

\section{Outline of Results}

The paper is structured as follows. In Section \ref{PQCUnitaries}, we introduce the Schur transform from the spin eigenbasis point of view which naturally arises in the context of Permutational Quantum Computing (PQC). We then show how to implement the Schur transform on qubits in an elegant way. This is arguably the simplest algorithm realising the Schur transform as a unitary operation to date and uses $O(n\log(n))$ ancillas. In Appendix \ref{cleanTransform}, we will discuss why emphasis must be placed on the unitarity of the operation for some applications. In Section \ref{logManyAncillas}, we show how a reduction to $O(\log(n))$ ancillas may be achieved. In this section, we will also present exact counts on the number of ancillas used in our original and modified algorithms for the Schur transform. In Section \ref{SchurWeyl}, we discuss the connection between Schur-Weyl duality and PQC, and show how all PQC unitaries can be performed in Section \ref{genPQCUnitaries}. All of these algorithms will have their run times analysed in Appendix \ref{runtime}.

In Section \ref{genCoupling}, we rigorously define the general notion of coupling of quantum systems, where PQC couplings arise as a special case. We give sufficient conditions for the corresponding states to be computationally tractable -- a widely adopted notion of classical simulability -- as well as efficiently preparable on a quantum computer. Finally, in Section \ref{CTgates}, we discuss the subtleties of the notion of computational tractability and the difficulties in defining a sound notion of unitary gates which preserve it, followed by several results and a conjecture about such gates. 

\section{PQC Unitaries}\label{PQCUnitaries}

We first introduce the quantum Schur transform in the context of Permutational Quantum Computing (PQC) -- a restricted model of quantum computation, developed in its current form by Jordan in \cite{jordan2009permutational} with earlier ideas from~\cite{marzuoli2005computing}.  One of the key quantities we will repeatedly use is angular momentum:

\begin{definition}
On $n$ qubits, let $\left(\sigma_x^{(i)}, \sigma_y^{(i)}, \sigma_z^{(i)}\right)$ denote the usual Pauli operators acting on the i-th qubit. The angular momentum operator on the i-th qubit is then defined to be

\begin{equation}
    \vec{S}^{(i)} = \frac{1}{2}\begin{pmatrix}\sigma_x^{(i)} \\ \sigma_y^{(i)} \\ \sigma_z^{(i)} \end{pmatrix}.
\end{equation}
Given a subset of the qubits $a \subseteq \{1, ..., n\}$, we may then define the total angular momentum of this subset as $S^2_a = \left(\sum_{i \in a}\vec{S}^{(i)}\right)\cdot\left(\sum_{i \in a}\vec{S}^{(i)}\right)$, as well as the Z-angular momentum of the subset as $Z_a = \frac{1}{2}\sum_{i \in a}\sigma_z^{(i)}$.
\end{definition}

\noindent Eigenvalues of total angular momentum operators are often referred to as $j$-values, while those of Z-angular momentum operators are commonly referred to as $m$-values. A PQC basis is then a simultaneous eigenbasis of a particular collection of $n$ of the above operators that pairwise commute. The allowed collections of operators are described after Claim \ref{commutability}. We refer to a state from a PQC basis as a PQC state. The results that we will need concerning the commutability of the above operators are detailed below. 

\begin{claim}For non-empty subsets of the qubits a and b,\label{commutability}

\begin{itemize}
    \item If $a \cap b = \emptyset$ then $[S^2_a, S^2_b]$ = $0$.
    \item If $a \subseteq b$ then $[S^2_a, S^2_b]$ = $0$.
    \item If $a \subseteq b$ then $[S^2_a, Z_b]$ = $0$.
\end{itemize}
\end{claim}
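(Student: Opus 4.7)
The plan is to dispose of the three bullets in order, each time reducing to the previous case together with the standard SU(2) algebra. For brevity, I will write $\vec{S}_a := \sum_{i \in a} \vec{S}^{(i)}$, so that $S_a^2 = \vec{S}_a \cdot \vec{S}_a$ and $Z_a$ is the $z$-component of $\vec{S}_a$.

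For the first bullet, I would note that when $a \cap b = \emptyset$, each $\vec{S}^{(i)}$ with $i \in a$ acts on qubit $i$ alone and every $\vec{S}^{(j)}$ with $j \in b$ acts on a different qubit; since Paulis on disjoint qubits commute, every component of $\vec{S}_a$ commutes with every component of $\vec{S}_b$, and in particular $[S_a^2, S_b^2] = 0$.

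For the second bullet, with $a \subseteq b$, I would set $c := b \setminus a$ so that $\vec{S}_b = \vec{S}_a + \vec{S}_c$ with $a$ and $c$ disjoint. Expanding yields $S_b^2 = S_a^2 + S_c^2 + 2\,\vec{S}_a \cdot \vec{S}_c$. The first bullet applied to the pair $(a,c)$ shows that $S_a^2$ commutes with $S_c^2$ and with each component of $\vec{S}_c$, so
\[
[S_a^2, S_b^2] \;=\; 2\sum_{\alpha \in \{x,y,z\}} [S_a^2, S_a^\alpha]\, S_c^\alpha,
\]
and it remains to prove the textbook identity $[S_a^2, S_a^\alpha] = 0$. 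For this I would first verify that $\vec{S}_a$ itself satisfies the SU(2) relations $[S_a^\alpha, S_a^\beta] = i\,\epsilon_{\alpha\beta\gamma}\, S_a^\gamma$: writing $[S_a^\alpha, S_a^\beta] = \sum_{i,j \in a}[S_\alpha^{(i)}, S_\beta^{(j)}]$, the off-diagonal terms vanish by the first bullet and the diagonal terms reproduce the single-qubit relations. The usual pairwise cancellation of $[S_a^\beta S_a^\beta, S_a^\alpha]$ against $[S_a^\gamma S_a^\gamma, S_a^\alpha]$ then delivers the identity.

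For the third bullet, I would decompose $Z_b = Z_a + Z_c$ in the same way. Then $[S_a^2, Z_c] = 0$ by the first bullet, while $[S_a^2, Z_a] = 0$ is just the $\alpha = z$ case of the identity established above, giving $[S_a^2, Z_b] = 0$. Overall, the only nontrivial calculation in the entire argument is this single SU(2) commutator, so I do not anticipate any real obstacle beyond the routine bookkeeping needed to separate diagonal and off-diagonal terms in the sums over qubits.
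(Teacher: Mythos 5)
Your proof is correct. Note that the paper itself only argues the first bullet (by the same disjoint-supports observation you use) and defers the second and third to the appendix of the cited reference, so there is no in-paper proof to compare against for those; your argument is the standard self-contained one. The chain of reductions is sound: writing $\vec{S}_b = \vec{S}_a + \vec{S}_c$ with $c = b\setminus a$ disjoint from $a$, everything collapses onto the single identity $[S_a^2, S_a^\alpha]=0$, which follows once you check that the components of $\vec{S}_a$ satisfy the SU(2) relations (the cross terms in $\sum_{i,j\in a}[S^{(i)}_\alpha,S^{(j)}_\beta]$ vanish qubit-by-qubit) and then use the antisymmetry of $\epsilon_{\alpha\beta\gamma}$ against the symmetry of the anticommutator. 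The degenerate case $a=b$ (so $c=\emptyset$) is trivially covered. No gaps.
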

\noindent The first point is immediate because the operators act on disjoint systems. The proofs of the latter two points can be found in the appendix of \cite{SchurSampling}. Now consider a collection of distinct, non-empty, proper subsets of the qubits $a_1, ..., a_{n-2}$ such that for each pair $a_i$ and $a_j$, $i \neq j$, $a_i \subseteq a_j$, $a_j \subseteq a_i$ or $a_i \cap a_j = \emptyset$. Given the above results, we find that the $n$ operators $(S^2_{a_1}, ..., S^2_{a_{n-2}}, S^2, Z)$ are pairwise commuting, where $S^2 \coloneqq S^2_{\{1, ..., n\}}$ and $Z \coloneqq Z_{\{1, ..., n\}}$\footnote{Eigenvalues of $S^2_a$, where $a \neq \{1, ..., n\}$ are often called internal $j$-values while the eigenvalues of the other two operators are often called the root $J$ and $M$-values, as appropriate. By a small abuse of notation, the root $J$ and $M$-values themselves are sometimes referred to as $S^2$ and $Z$, but strictly $S^2$ and $Z$ are actually operators.}. Being Hermitian operators, these may be diagonalised, resulting in a simultaneous eigenbasis \cite{jordan2009permutational}. Given such a choice of subsets $a_1, ..., a_{n-2}$, the induced simultaneous eigenbasis is called a PQC basis. 

A useful way of representing the above bases, and the states they contain, is via rooted binary trees, illustrated in the following example. PQC bases are in one-to-one correspondence with unlabelled binary trees and PQC states from a given PQC basis are in one-to-one correspondence with the labellings of the corresponding unlabelled tree \cite{jordan2009permutational}.
\begin{example}
Consider the following labelled binary tree.
\\
\begin{center}
\begin{tikzpicture}

\draw[fill=black] (3,1) circle (2pt);
\draw[fill=black] (1.68,2.32) circle (2pt);
\draw[fill=black] (5,3) circle (2pt);
\draw[fill=black] (1,3) circle (2pt);

\draw[thick] (3,0) -- (3,1) -- (1.68,2.32) -- (1,3) -- (0,4);
\draw[thick] (1,3) -- (2,4);
\draw[thick] (1.68,2.32) -- (3.25,4);
\draw[thick] (5,3) -- (4,4);
\draw[thick] (3,1) -- (5,3);
\draw[thick] (5,3) -- (6,4);

\node at (1.2,2.5) {$1$};
\node at (2.1, 1.5) {$\frac{3}{2}$};
\node at (4.2,1.8) {$0$};
\node at (3.5,0.2) {$\frac{3}{2}$, $\frac{1}{2}$};

\end{tikzpicture}
\end{center}

\noindent If one removes the labels from this tree, the obtained unlabelled binary tree is a diagrammatic representation of the simultaneous eigenbasis of the operators $\left(S^2_{\{1, 2\}}, S^2_{\{1, 2, 3\}}, S^2_{\{4, 5\}}, S^2, Z\right)$ in the space of five qubits. With the labels as shown, this is a diagrammatic representation of one of the 32 states from that basis. Here, the first two qubits are in a spin-1 state, the first three are in a spin-3/2 state, etc. At the root, the eigenvalues of $S^2$ and $Z$ are shown respectively. Note that the angular momentum values must obey the usual laws under addition of systems, noting that individual qubits are considered as having spin $\frac{1}{2}$, since they are two-state systems. To find the wavefunction of such a state, we need only look at the structure of the tree and use Clebsch-Gordan coefficients, $C^{J, M}_{j_1, m_1; j_2, m_2}$. This state has wavefunction

\begin{equation}
\sum_{(x_1, ..., x_5) \in \{-\frac{1}{2}, \frac{1}{2}\}^5} C^{3/2, 1/2}_{3/2, x_1 + x_2 + x_3; 0, x_4 + x_5} C^{3/2, x_1 + x_2 + x_3}_{1, x_1 + x_2; 1/2, x_3}C^{1, x_1 + x_2}_{1/2, x_1; 1/2, x_2}C^{0,x_4 + x_5}_{1/2, x_4, 1/2, x_5} \ket{x_1 ... x_5}.
\end{equation}
Throughout the paper it will be convenient to write computational basis states as $\ket{\pm \frac{1}{2}}$, simply so we may most easily refer to the $x_i$ as angular momenta. For clarity, we mean $\ket{-\frac{1}{2}} = \ket{\frac{1}{2},-\frac{1}{2}} = \ket{0}$ and $\ket{\frac{1}{2}} = \ket{\frac{1}{2},\frac{1}{2}} = \ket{1}$, written in our conventions, usual angular momentum notation and the usual computational basis respectively. The Clebsch-Gordan coefficients are coupling quantum systems together by assigning one coefficient per vertex. We will look at more general ways to couple quantum systems with this as the prototypical example in Section \ref{genCoupling}. Part of that will be to generalise the notion of conservation of angular momentum - embodied here in the relation $m_1 + m_2 \neq M \implies C^{J, M}_{j_1, m_1; j_2, m_2} = 0$.
\end{example}

\noindent Special emphasis is placed on the PQC basis defined by the operators $\left(S^2_{\{1, 2\}}, S^2_{\{1, 2, 3\}}, ..., S^2_{\{1, ..., n-1\}}, S^2, Z\right)$. This is called the Schur basis, or sometimes sequentially coupled basis. The unitary operation mapping the computational basis to the Schur basis is called the Schur Transform\footnote{Note that this is quite a physical way to define the Schur transform. An equivalent mathematical formulation exists and can be found in, for example, \cite{bacon2005quantum}.}, while we refer to unitaries mapping the computational basis to a generic PQC basis (a basis with an arbitrary choice of $a_1, ..., a_{n-2}$ satisfying the aforementioned conditions) as PQC unitaries.

A single instance of PQC is the following sequence of steps:

\begin{enumerate}
    \item Prepare a PQC state from some PQC basis.
    \item Apply a permutation to the qubits.
    \item Measure in some PQC basis.
\end{enumerate}

\noindent Thus, a single PQC instance is specified by a labelled binary tree, a qubit permutation and an unlabelled binary tree. As explained in \cite{jordan2009permutational}, by making polynomially many of these measurements, one can estimate the output probabilities to within $\pm \epsilon$ using $O\left(1/\epsilon^2\right)$ iterations by the usual means. The complexity class of problems solvable by a PQC, PQP, is then defined in \cite{jordan2009permutational} to be the problems that may be solved by estimating output probabilities of PQC instances. We will distinguish two PQC models: weak PQC -- the model of computing with access to these probabilities to polynomial precision, and strong PQC, defined as having access to the matrix elements themselves (physically, this would correspond to the computer being able to perform the Hadamard test \cite{aharonov2006polynomial}, for example) i.e. $\bra{\psi}U_{\pi}\ket{\phi}$ for $\ket{\psi}$ and $\ket{\phi}$ PQC states and $U_\pi$ a qubit permutation. Both strong and weak models were shown to be classically, efficiently simulatable in~\cite{havlicek2018quantum} and \cite{HavlicekThesis}. We briefly recall the proof structure:

\begin{itemize}
    \item All PQC states are computationally tractable (CT, as defined in \cite{VdNProbMethods}). This was shown for Schur states in~\cite{havlicek2018quantum} and later for all PQC states in \cite{HavlicekThesis}.
    
    \item Permutations map CT states to CT states (this is discussed in \cite{SchurSampling}, for example).
    
    \item We can classically, efficiently compute the overlap between two CT states to polynomial accuracy, as shown in \cite{VdNProbMethods}.
\end{itemize}

\noindent Despite the classical simulability of PQC, Schur states remain important quantum objects to study, especially in quantum information theoretic protocols \cite{keyl2001estimating, knill2000theory, kempe2001theory, hayashi2002optimal}. More generally, it is unknown how to efficiently simulate a generic operation on such a state, rather than simply a permutation\footnote{It is known how to simulate the action of a CT-preserving gate (like a permutation) on a Schur state - see Section \ref{CTgates} - but simulating the action of any other gate seems to necessitate a quantum computer.}, and so it is desirable to be able to prepare these states on a quantum computer. There are different approaches to implementing the Schur transform on a quantum computer \cite{krovi2019efficient, kirby2017practical, bacon2005quantum}, but none of them are able to implement general PQC unitaries. Furthermore, the algorithms of \cite{kirby2017practical} and \cite{bacon2005quantum} map Schur states to computational basis states that include ancillas that encode those Schur states - essentially the $j$ and $m$-values mentioned above. These are implementations of the Schur transform as an \textit{isometry}. While this may be acceptable in some applications, it is problematic in others, for example if one were to attempt to implement the algorithm of \cite{zheng2022super}, for which the unitary Schur transform is essential. The differences between differing notions of the Schur transform, and the necessity of the unitary operation for some applications, are discussed further in Appendix \ref{cleanTransform}. The remarkable algorithm of \cite{krovi2019efficient} provides an efficient unitary transform of $n$-qudit computational basis states to Schur states, although it relies on a complex representation-theoretic treatment of the problem which poses significant obstacles when applying it in practice\footnote{We also mention that \cite{bacon2005quantum}, along with Harrow's method found in \cite{harrow2005applications} to reduce the time scaling in qudit dimension $d$ to $O(log\;d)$, does perform the desired operation, but again uses somewhat more complex theory.}.

One of the emphases of this work is the implementation of a unitary Schur transform in simple terms along with the provision of an explicit gate count. This enables us to generalise it to implement general PQC unitaries mapping  $n$-qubit computational basis states to PQC states, including Schur states.
This will require no representation theory to understand - only the basic knowledge of SU(2) Clebsch-Gordan coefficients (although the reader may find the brief introduction to Schur-Weyl duality of Section \ref{SchurWeyl} enriching). While we will derive our results for qubit systems, our algorithms can be easily generalised to work with qudits much like all the prior implementations. We will omit the analysis of run time for qudits.

\subsection{The Schur Transform}\label{schurAlgoSection}

We now turn to our algorithm for the implementation for the Schur transform on qubits which, as explained above, is the unitary mapping the computational basis to the PQC basis described by the unlabelled binary tree on $n$ qubits

\begin{center}

\begin{tikzpicture}

\draw[fill=black] (0,1) circle (2pt);
\draw[fill=black] (-0.5,1.5) circle (2pt);
\draw[fill = black] (-0.75,1.75) circle (0.5pt);
\draw[fill = black] (-1,2) circle (0.5pt);
\draw[fill = black] (-1.25,2.25) circle (0.5pt);
\draw[fill = black] (-1.5,2.5) circle (2pt);
\draw[fill = black] (0.5,4) circle (0.5pt);
\draw[fill = black] (1,4) circle (0.5pt);
\draw[fill = black] (1.5,4) circle (0.5pt);
\draw[fill = black] (-2,3) circle (2pt);
\draw[fill = black] (-2.5,3.5) circle (2pt);

\draw[thick] (0,0) -- (0,1) -- (3,4);
\draw[thick] (0,1) -- (-0.5,1.5) -- (2,4);
\draw[thick] (-1.5,2.5) -- (0,4);
\draw[thick] (-1.5,2.5) -- (-2,3) -- (-1,4);
\draw[thick] (-2,3) -- (-2.5,3.5) -- (-2,4);
\draw[thick] (-2.5,3.5) -- (-3,4);

\end{tikzpicture}
\end{center}

\noindent which, in particular, contains $n-1$ nodes. Note that we do not specify any particular ordering of the Schur states i.e. we do not specify which computational basis states map to which Schur states. Our algorithm can work for any ordering by modifying the relevant mappings. We will present the algorithm as if some desired ordering has been fixed. First, we will introduce a compact notation for Schur states. $\ket{j_1, j_2, ..., j_{n-2},J,M}$ refers to the Schur state with an eigenvalue of $j_1$ for the operator $S^2_{\{1, 2\}}$, an eigenvalue $j_2$ for the operator $S^2_{\{1, 2, 3\}}$ and so on, up to an eigenvalue $j_{n-2}$ for the operator $S^2_{\{1, 2, 3, ..., n-1\}}$, an eigenvalue $J$ for the operator $S^2$ and an eigenvalue $M$ for the operator $Z$.

Our algorithm will works in two stages. The first stage is called the pre-mapping. Given a computational basis state on $n$ qubits, $\ket{x}$, where $x \in B_n$, suppose $\ket{x}$ is mapped to $\ket{j_1, j_2, ..., j_{n-2}, J, M}$. The pre-mapping achieves the operation

\begin{equation}
    \ket{x} \mapsto \ket{j_1}\ket{j_2} ... \ket{j_{n-2}}\ket{J}\ket{M}
\end{equation}

\noindent coherently over all computational basis states $\ket{x}$. The right-hand side is just another computational basis state, but it contains ancillary qubits (for simplicity of notation these ancillary qubits are not shown on the left-hand side). The above mapping requires $O(n\;log(n))$ ancillary qubits, however in Section \ref{logManyAncillas} we will show how to reduce this number down to $O(log(n))$. The computational basis state on the right-hand side simply encodes the eigenvalues into each of its $n$ registers. For example, $j_3$ can take values $0$, $1$ or $2$ and so two qubits are required for this register and any encoding may be chosen, say $00$ for $0$, $01$ for $1$ and $10$ for $2$. Using the usual rules of addition of angular momentum, the maximum possible value of $j_i$ is $\frac{i + 1}{2}$, the maximum possible value of $J$ is $\frac{n}{2}$ and $M$ ranges in integer steps between $-J$ and $J$. This means that each variable can take a number of values polynomial (indeed, linear) in $n$, and so each register contains logarithmically many qubits.

The second stage is the coupling itself. The pre-mapped state $\ket{j_1}\ket{j_2} ... \ket{j_{n-2}}\ket{J}\ket{M}$ will be mapped to the corresponding Schur state, 

\begin{equation}
    \sum_{x \in B_n} C^{j_1, x_1 + x_2}_{1/2, x_1; 1/2, x_2} C^{j_2, x_1 + x_2 + x_3}_{j_1, x_1 + x_2; 1/2, x_3} ... C^{J, M}_{j_{n-2}, x_1 + ... + x_{n-1}; 1/2, x_n} \ket{x} \label{SchurState}
\end{equation}

\noindent which is a superposition of $n$-qubit computational basis states forming exactly the desired Schur state\footnote{As before, we find it convenient to denote computational basis states with $\pm \frac{1}{2}$ rather than $0$ and $1$, so strictly $B_n$ is the set $\left\{ \pm \frac{1}{2}\right\}^n$ in this context.}. Note that this is performed coherently over all pre-mapped states and note also that the final state is one on $n$ qubits only, so in this mapping the ancillary qubits have become un-entangled (and discarded).

We will now show how to perform the pre-mapping. The pre-mapping procedure mirrors the coupling of qubits one-by-one, fixing the registers $j_1$, $j_2$ and so on, down to $M$. In the first step, we fix the register $\ket{j_1}$. To do this requires only a mapping on the first two qubits (with ancillary qubits being introduced as needed),

\begin{equation}
    \ket{x_1 x_2} \mapsto \ket{j_1}\ket{m_1}.
\end{equation}

\noindent The above action is essentially the pre-mapping that we would do if we were performing the Schur transform on only 2 qubits. $m_1$ represents the $z$-angular momentum of the first two qubits and is encoded into computational basis states of as many qubits as needed, just as for the other variables. We need only pick some ordering, say $00 \mapsto j_1 = 1, m_1 = 1$, $01 \mapsto j_1 = 1, m_1 = 0$, $10 \mapsto j_1 = 1, m_1 = -1$ and $11 \mapsto j_1 = 0, m_1 = 0$\footnote{It is in making these decisions about the ordering that we can decide to which Schur state each computational basis state maps.}. The next stage of the pre-mapping then takes place on three registers:

\begin{equation}
    \ket{j_1}\ket{m_1}\ket{x_3} \mapsto \ket{j_1}\ket{j_2}\ket{m_2}
\end{equation}

\noindent where, similarly, $m_2$ represents the $Z$-angular momentum of the first three qubits. This, again, is done just by picking some ordering, say $j_1 = 1, m_1 = 1, x_3 = 1/2 \mapsto j_1 = 1, j_2 = 3/2, m_2 = 3/2$ etc., where notice that $j_1$ does not change (but must be acted on as part of this operation effectively as a control register). We continue in this way, where all of the operations except the first take place on three registers. After $i$ steps of this, we have, in total, 

\begin{equation}
    \ket{j_1}\ket{j_2}...\ket{j_i}\ket{m_i}\ket{x_{i+2}}\ket{x_{i+3}}...\ket{x_n}
\end{equation}

\noindent for $1 \leq i \leq n-2$. Note that each of these acts on three (except for the first, which acts on two) registers of logarithmically many qubits and they all have classically, efficiently computable matrix elements (just correctly placed ones and zeros) so all of these can be efficiently implemented. After $n-2$ of these, we perform the final map, which is very much the same but with different labels:

\begin{equation}
    \ket{j_1}\ket{j_2}...\ket{j_{n-2}}\ket{m_{n-2}}\ket{x_n} \mapsto \ket{j_1}\ket{j_2}...\ket{j_{n-2}}\ket{J}\ket{M}.
\end{equation}

\noindent We therefore perform $n-1$ efficiently implementable operations to complete the pre-mapping, which is therefore efficiently implementable.

\begin{algorithm}[H]
	\caption{Pre-Mapping Stage for the Schur Transform\label{premapSchur}}
	\textbf{Input:}  Any superposition of $n$-qubit computational basis states: $\sum_xc_x\ket{x}$.\\
\textbf{Output:} The same superposition with each computational basis state mapped to its corresponding Schur encoding on $O(n\;log(n))$ qubits: $\sum_xc_x\ket{j_1(x)}\ket{j_2(x)}...\ket{j_{n-2}(x)}\ket{J(x)}\ket{M(x)}$.
	\begin{algorithmic}[1]
\State{Map the first two qubits to the registers encoding $j_1$ and $m_1$ in computational basis states, using some ordering: $\ket{x_1x_2} \mapsto \ket{j_1}\ket{m_1}$.}
			\For{$i=1$ to $n-3$}
		    \State{Perform the mapping between computational basis states on three registers $\ket{j_i}\ket{m_i}\ket{x_{i+2}} \mapsto \ket{j_i}\ket{j_{i+1}}\ket{m_{i+1}}$ using some ordering.}
			      \EndFor
		\State{Perform the mapping between computational basis states on three registers $\ket{j_{n-2}}\ket{m_{n-2}}\ket{x_n} \mapsto \ket{j_{n-2}}\ket{J}\ket{M}$ using some ordering.}
		\end{algorithmic}
\end{algorithm}

\noindent The coupling stage will then work in the opposite direction; where the pre-mapping progressed down the tree, the coupling stage moves up the tree, adding the Clebsch-Gordan coefficient relevant to each vertex at each of the $n-1$ steps of this stage. Each step will be efficiently implementable for the same reason as before - we act on logarithmically many qubits at a time, and our matrix elements are classically, efficiently computable (Clebsch-Gordan coefficients can be classically, efficiently computed when their entries are of absolute value $poly(n)$~\cite{havlicek2018quantum}, which they are here as mentioned earlier). The first step acts on the last three registers as

\begin{align}
    \ket{j_1}\ket{j_2}...\ket{j_{n-2}}\ket{J}\ket{M} &\mapsto \sum_{m_{n-2}, x_n}\ket{j_1} \ket{j_2} ... \ket{j_{n-3}}C^{J,M}_{j_{n-2}, m_{n-2}; 1/2, x_n}\ket{j_{n-2}}\ket{m_{n-2}}\ket{x_n}.
\end{align}

\noindent Note that in making this operation, the ancillary qubits on the last register have all become un-entangled.
Further note that this operation, and all the operations in this stage, are unitary due to the following orthogonality property of Clebsch-Gordan coefficients \cite{CGWiki}:

\begin{equation}
    \sum_{m_1, m_2} C^{J,M}_{j_1, m_1; j_2, m_2} C^{J', M'}_{j_1, m_1; j_2, m_2} = \delta_{J J'} \delta_{M M'}
\end{equation}

\noindent where we note that Clebsch-Gordan coefficients are real, at least in the most common phase convention that we adopt -- the Condon-Shortley convention. By acting on three registers at a time, and two registers for the last step, we eventually end up with

\begin{equation}
    \sum_{m_{n-2}, x_n} ... \sum_{m_1, x_3} \sum_{x_1, x_2} C^{j_1, m_1}_{1/2, x_1; 1/2, x_2}C^{j_2, m_2}_{j_1, m_1; 1/2, x_3} ... C^{J,M}_{j_{n-2}, m_{n-2}; 1/2, x_n}\ket{x_1 x_2 ... x_n}
\end{equation}

\noindent which is exactly the desired Schur state in Equation \eqref{SchurState} given that Clebsch-Gordan coefficients preserve angular momentum ($C^{J,M}_{j_1, m_1; j_2, m_2} = 0$ unless $m_1 + m_2 = M$) and so all of the sums over $m_i$ can be made to vanish. All of our operations are performed coherently over computational basis state superpositions, and so the pre-mapping and coupling stages together implement the Schur transform.

\begin{algorithm}[H]
	\caption{Coupling Stage of the Schur Transform\label{couplingSchur}}
\textbf{Input:} Any superposition of computational basis states on $O(n\:log(n))$ qubits encoding the Schur states: $\sum_xc_x\ket{j_1(x)}\ket{j_2(x)}...\ket{j_{n-2}(x)}\ket{J(x)}\ket{M(x)}$.\\
\textbf{Output:} The same superposition with each computational basis state encoding mapped to its corresponding Schur state on $n$ qubits.
	\begin{algorithmic}[1]
\State{Perform the mapping on the last three registers: $\ket{j_{n-2}}\ket{J}\ket{M} \mapsto \newline \sum_{m_{n-2}, x_n}C^{J,M}_{j_{n-2}, m_{n-2}; 1/2, x_n}\ket{j_{n-2}}\ket{m_{n-2}}\ket{x_n}$.}
			\For{$i=n-3$ to 1}
		    \State{Perform the mapping on three registers $\ket{j_i}\ket{j_{i+1}}\ket{m_{i+1}} \mapsto \sum_{m_{i}, x_{i+2}}C^{j_{i+1}, m_{i+1}}_{j_i, m_i; 1/2, x_{i+2}}\ket{j_i}\ket{m_i}\ket{x_{i+2}}$.}
			      \EndFor
		\State{Perform the mapping on two registers $\ket{j_1}\ket{m_1} \mapsto \sum_{x_1, x_2}C^{j_1, m_1}_{1/2, x_1; 1/2, x_2}\ket{x_1x_2}$.}
		\end{algorithmic}
\end{algorithm}

\noindent An analysis of the run time of the whole algorithm for the Schur transform can be found in Appendix~\ref{runtime}, as well as a run time analysis for the preparation of individual Schur states.

Let us discuss briefly why the pre-mapping is a necessary step for a clean and efficient transform. Without it, we are coupling qubits directly. This could be attempted moving down the tree, or up the tree. Attempting to do this directly going down the tree does not lead to an efficient implementation:

\begin{align}
\ket{x} &\mapsto \sum_{\mathclap{\substack{x_1, x_2}}} C^{j_1, x_1 + x_2}_{1/2, x_1; 1/2, x_2} \ket{x}\\
&\mapsto \sum_{\mathclap{\substack{x_1, x_2, x_3}}} C^{j_1, x_1 + x_2}_{1/2, x_1; 1/2, x_2}C^{j_2, x_1 + x_2 + x_3}_{j_1, x_1 + x_2; 1/2, x_3} \ket{x} \mapsto ...
\end{align}

\noindent as we see that by the end, we will be acting simultaneously on all $n$ qubits. It is possible, however, to put Clebsch-Gordan coefficients on the registers, moving down the tree, by going to an encoding as follows:

\begin{align}
\ket{x} &\mapsto \sum_{\mathclap{\substack{j_1, m_1}}}C^{j_1, m_1}_{1/2, x_1; 1/2, x_2}\ket{j_1}\ket{m_1}\ket{x_3...x_n}\\
&\mapsto \sum_{\mathclap{\substack{j_1, j_2\\m_1, m_2}}}C^{j_1, m_1}_{1/2, x_1; 1/2, x_2}C^{j_2, m_2}_{1/2, m_1; 1/2, x_3}\ket{j_1}\ket{j_2}\ket{m_2}\ket{x_4...x_n}
\end{align}

\noindent and so on, where in a general step, we will act on the registers $\ket{j_i}\ket{m_i}\ket{x_{i+2}}$. This does produce an efficient algorithm but ultimately gives us

\begin{equation}
    \sum_{\mathclap{\substack{j_1, ..., j_{n-2}, J, M}}}C^{j_1, x_1 + x_2}_{1/2, x_1; 1/2, x_2}...C^{j_{n-2}, x_1 + ... + x_{n-1}}_{j_{n-3}, x_1 + ... + x_{n-2}; 1/2, x_{n-1}}C^{J,M}_{j_{n-2}, x_1 + ... + x_{n-1}; 1/2, x_n}\ket{j_1}...\ket{j_{n-2}}\ket{J}\ket{M}
\end{equation}

\noindent which is much alike the inverse Schur transform, except it leads to a superposition of encoded states\footnote{Indeed, if we ran the inverse of our pre-mapping stage, then this would give us exactly the inverse of our algorithm.} and it is not easy to see how to compute on these. This is what the main algorithm of \cite{bacon2005quantum} performs, except with mathematical notation on the encoding (and note also, this paper presents several improvements upon this, such as reducing the spatial overhead to logarithmic). \cite{kirby2017practical} performs something similar, but with an alternative encoding.

We now ask what occurs if we attempt to simply couple the qubits from the bottom of the tree upwards, without any pre-mapping at the start. The issue here is that, given some computational basis states $\ket{x}$, it is hard to determine from $\ket{x}$ alone what the eigenvalues corresponding to $\ket{x}$ are at the bottom of the tree i.e. in order to perform, say, 

\begin{equation}
    \ket{x} \mapsto \sum_{\mathclap{\substack{m_{n-2}, x_n}}}C^{J,M}_{j_{n-2}, m_{n-2}, 1/2, x_n}\ket{x_1...x_{n-3}}\ket{j_{n-2}}\ket{m_{n-2}}\ket{x_n}
\end{equation}

\noindent we must act on all the qubits at once to work out the $J, M$ and $j_{n-2}$ that arise from each $x$. Conversely, we find that with the pre-mapping, we can work out the $j_1$ (and $m_1$) arising from each $x$ only by looking at the first two qubits; we can find out the correct $j_2$ from $j_1$ (and $m_1$) and the third qubit, and so on. Together, this is what makes the pre-mapping and coupling stages work as they do.

\subsection{PQC Trees and Schur-Weyl Duality}\label{SchurWeyl}

Before shifting to general PQC unitaries, we briefly discuss the PQC bases in the context of Schur-Weyl duality. We will aim to provide a fresh perspective on Schur-Weyl duality through the lens of PQC. For the sake of generality, we will give background in terms of general qudits, although in most other places in this paper, our attention will be limited to qubits. Further representation-theoretic details can be found in~\cite{bacon2005quantum}. Consider the Hilbert space of $n$ qudits: $V^{\otimes n}$, where $V \cong \mathbb{C}^d$. Schur-Weyl duality is a statement about two representations on this space given by the maps $\mathbf{P}$ and $\mathbf{Q}$, to use the same notation as \cite{bacon2005quantum}. The former gives a representation of the permutation group $S_n$, acting on computational basis states as

\begin{equation}
    \mathbf{P}(\sigma)\ket{i_1\; ... \;i_n} = \ket{i_{\sigma^{-1}(1)}\; ...\; i_{\sigma^{-1}(n)}}
\end{equation}

\noindent for $\sigma \in S_n$. As such, $\mathbf{P}(\sigma)$ simply rearranges the qudits according to $\sigma$. $\mathbf{Q}$ provides a representation of $d$-dimensional unitaries: $U_d$, also acting in a fairly simple way:
\begin{equation}
    \mathbf{Q}(U)\ket{i_1\; ...\; i_n} = \left(U\ket{i_1}\right) \otimes ... \otimes \left(U\ket{i_n}\right)
\end{equation}

\noindent where $U \in U_d$; $U$ is just applied to each individual system. Schur-Weyl duality is the statement that $\mathbf{P}$ and $\mathbf{Q}$ may be simultaneously decomposed into a direct sum of irreps i.e. there exists a basis in which both $\mathbf{P}$ and $\mathbf{Q}$ are simultaneously block-diagonalised. The set of distinct irreps into which $\mathbf{P}$ and $\mathbf{Q}$ decompose are commonly written $\{\mathbf{p}_\lambda\}$ and $\{\mathbf{q}_\lambda^d\}$ respectively, but note that in each decomposition, each irrep may come with some multiplicity. The irrep spaces for $\mathbf{p}_\lambda$ and $\mathbf{q}_\lambda^d$ may be written as $\mathcal{P}_\lambda$ and $\mathcal{Q}_\lambda^d$ respectively. The situation may be summarised in the equation

\begin{equation}
    \left(\mathbb{C}^d\right)^{\otimes n} \overset{U_d \times S_n}{\cong} \bigoplus_\lambda \mathcal{Q}_\lambda^d \otimes \mathcal{P}_\lambda.
\end{equation}

\noindent This equation tells us that our whole space may be decomposed into a direct sum of spaces, $\mathcal{Q}_\lambda^d \otimes \mathcal{P}_\lambda$. Each of these spaces may be viewed either as $dim(\mathcal{Q}_\lambda^d)$ copies of the irrep $\mathbf{p}_\lambda$ or as $dim(\mathcal{P}_\lambda)$ copies of the irrep $\mathbf{q}_\lambda^d$. We are therefore told that in the simultaneous decompositions of $\mathbf{P}$ and $\mathbf{Q}$, the dimension of $\mathbf{p}_\lambda$ equals the multiplicity of $\mathbf{q}_\lambda^d$ and the dimension of $\mathbf{q}_\lambda^d$ equals the multiplicity of $\mathbf{p}_\lambda$.

We now turn to Schur-Weyl duality in the context of PQC, where from now on we specialise to qubits only. It turns out that PQC bases provide the bases needed for the above described block diagonalisation. The representation $\mathbf{P}$ is dealt with in \cite{jordan2009permutational}, where it is pointed out that both $S^2$ and $Z$ commute with $\mathbf{P}(\sigma) \; \forall \; \sigma$. However, the simultaneous eigenspaces of $S^2$ and $Z$ form irrep spaces of $S_n$ \cite{pauncz1967alternant}. Therefore, if we take the set of labelled PQC trees of a given shape with given $S^2$ and $Z$ root labels, we obtain an orthonormal basis for some copy of the irrep $\mathbf{p}_\lambda$. Note in particular that by choosing a different shape for the tree but keeping the same $S^2$ and $Z$ root labels, we obtain a different basis for the same irrep space. A given PQC basis therefore forms an orthonormal basis for the whole space $\left(\mathbb{C}^2\right)^{\otimes n}$ in which the action of $\mathbf{P}$ is block diagonalised. 

The action of $\mathbf{Q}$ is then block diagonalised in these bases also. We can see this as follows. Local unitaries like $\mathbf{Q}(U)$ do not affect total angular momentum of any subset of the qubits. Therefore, any subspace spanned by a set of these basis vectors with fixed $j$-labels (both internal and on the root), for which only the root $M$-label is allowed to vary, is invariant under the action of $\mathbf{Q}$. We know from Schur-Weyl duality that the multiplicity space of $\mathbf{p}_\lambda$ is the representation space of $\mathbf{q}_\lambda^2$, and so we find that this subspace in fact forms an irrep space of $\mathbf{q}_\lambda^2$. 

With this in mind, we wish to emphasise the very intuitive perspective that this provides for us on Schur-Weyl duality. Summing over the irrep label $\lambda$ corresponds to summing over the root $J$-value. With such a label fixed, the irrep $\mathcal{P}_\lambda$ is formed by fixing a root $M$-value and allowing the internal $j$-labels to vary, whereas the irrep $\mathcal{Q}_\lambda$ is formed by fixing internal $j$-labels and allowing the root $M$-label to vary. From this, it is clear why the multiplicity of $\mathbf{p}_\lambda$ equals the dimension of $\mathbf{q}_\lambda^2$ and the multiplicity of $\mathbf{q}_\lambda^2$ equals the dimension of $\mathbf{p}_\lambda$. Moreover, it is also physically natural that these spaces form irrep spaces. In particular, the action of $\mathbf{Q}(U)$ has an analogue in classical angular momentum. Collectively re-orienting a system of bodies all in the same way does not change their combined angular momentum - and so analogously the $j$-value of all qubit subsets remains fixed. The orientation of these classical objects has changed, though, and analogously the $M$-label at the root of our tree may be caused to vary.

\begin{example}

The following shows all of the PQC states on four qubits of the displayed shape with $S^2 = 1$ where, again, a root label of $a,b$ refers to $S^2 = a, Z = b$.
\\
\\

\begin{tikzpicture}

\draw[fill=black] (2-6,1) circle (2pt);
\draw[fill=black] (1-6,2.25) circle (2pt);
\draw[fill=black] (3-6,2.25) circle (2pt);

\node at (1.2-6,1.6) {1};
\node at (2.7-6,1.59) {1};
\node at (2.35-6,0.2) {1,1};

\draw[thick] (2-6,0) -- (2-6,1) -- (1-6,2.25) -- (0-6,3.5);
\draw[thick] (1-6,2.25) -- (1.75-6,3.5);
\draw[thick] (3-6,2.25) -- (2.25-6,3.5);
\draw[thick] (2-6,1) -- (3-6,2.25) -- (4-6,3.5);

\draw[fill=black] (2,1) circle (2pt);
\draw[fill=black] (1,2.25) circle (2pt);
\draw[fill=black] (3,2.25) circle (2pt);

\node at (1.2,1.6) {1};
\node at (2.7,1.59) {0};
\node at (2.35,0.2) {1,1};

\draw[thick] (2,0) -- (2,1) -- (1,2.25) -- (0,3.5);
\draw[thick] (1,2.25) -- (1.75,3.5);
\draw[thick] (3,2.25) -- (2.25,3.5);
\draw[thick] (2,1) -- (3,2.25) -- (4,3.5);

\draw[fill=black] (2+6,1) circle (2pt);
\draw[fill=black] (1+6,2.25) circle (2pt);
\draw[fill=black] (3+6,2.25) circle (2pt);

\node at (1.2+6,1.6) {0};
\node at (2.7+6,1.59) {1};
\node at (2.35+6,0.2) {1,1};

\draw[thick] (2+6,0) -- (2+6,1) -- (1+6,2.25) -- (0+6,3.5);
\draw[thick] (1+6,2.25) -- (1.75+6,3.5);
\draw[thick] (3+6,2.25) -- (2.25+6,3.5);
\draw[thick] (2+6,1) -- (3+6,2.25) -- (4+6,3.5);
\end{tikzpicture}

\begin{tikzpicture}

\draw[fill=black] (2-6,1) circle (2pt);
\draw[fill=black] (1-6,2.25) circle (2pt);
\draw[fill=black] (3-6,2.25) circle (2pt);

\node at (1.2-6,1.6) {1};
\node at (2.7-6,1.59) {1};
\node at (2.35-6,0.2) {1,0};

\draw[thick] (2-6,0) -- (2-6,1) -- (1-6,2.25) -- (0-6,3.5);
\draw[thick] (1-6,2.25) -- (1.75-6,3.5);
\draw[thick] (3-6,2.25) -- (2.25-6,3.5);
\draw[thick] (2-6,1) -- (3-6,2.25) -- (4-6,3.5);

\draw[fill=black] (2,1) circle (2pt);
\draw[fill=black] (1,2.25) circle (2pt);
\draw[fill=black] (3,2.25) circle (2pt);

\node at (1.2,1.6) {1};
\node at (2.7,1.59) {0};
\node at (2.35,0.2) {1,0};

\draw[thick] (2,0) -- (2,1) -- (1,2.25) -- (0,3.5);
\draw[thick] (1,2.25) -- (1.75,3.5);
\draw[thick] (3,2.25) -- (2.25,3.5);
\draw[thick] (2,1) -- (3,2.25) -- (4,3.5);

\draw[fill=black] (2+6,1) circle (2pt);
\draw[fill=black] (1+6,2.25) circle (2pt);
\draw[fill=black] (3+6,2.25) circle (2pt);

\node at (1.2+6,1.6) {0};
\node at (2.7+6,1.59) {1};
\node at (2.35+6,0.2) {1,0};

\draw[thick] (2+6,0) -- (2+6,1) -- (1+6,2.25) -- (0+6,3.5);
\draw[thick] (1+6,2.25) -- (1.75+6,3.5);
\draw[thick] (3+6,2.25) -- (2.25+6,3.5);
\draw[thick] (2+6,1) -- (3+6,2.25) -- (4+6,3.5);
\end{tikzpicture}

\begin{tikzpicture}

\draw[fill=black] (2-6,1) circle (2pt);
\draw[fill=black] (1-6,2.25) circle (2pt);
\draw[fill=black] (3-6,2.25) circle (2pt);

\node at (1.2-6,1.6) {1};
\node at (2.7-6,1.59) {1};
\node at (2.35-6,0.2) {1,-1};

\draw[thick] (2-6,0) -- (2-6,1) -- (1-6,2.25) -- (0-6,3.5);
\draw[thick] (1-6,2.25) -- (1.75-6,3.5);
\draw[thick] (3-6,2.25) -- (2.25-6,3.5);
\draw[thick] (2-6,1) -- (3-6,2.25) -- (4-6,3.5);

\draw[fill=black] (2,1) circle (2pt);
\draw[fill=black] (1,2.25) circle (2pt);
\draw[fill=black] (3,2.25) circle (2pt);

\node at (1.2,1.6) {1};
\node at (2.7,1.59) {0};
\node at (2.35,0.2) {1,-1};

\draw[thick] (2,0) -- (2,1) -- (1,2.25) -- (0,3.5);
\draw[thick] (1,2.25) -- (1.75,3.5);
\draw[thick] (3,2.25) -- (2.25,3.5);
\draw[thick] (2,1) -- (3,2.25) -- (4,3.5);

\draw[fill=black] (2+6,1) circle (2pt);
\draw[fill=black] (1+6,2.25) circle (2pt);
\draw[fill=black] (3+6,2.25) circle (2pt);

\node at (1.2+6,1.6) {0};
\node at (2.7+6,1.59) {1};
\node at (2.35+6,0.2) {1,-1};

\draw[thick] (2+6,0) -- (2+6,1) -- (1+6,2.25) -- (0+6,3.5);
\draw[thick] (1+6,2.25) -- (1.75+6,3.5);
\draw[thick] (3+6,2.25) -- (2.25+6,3.5);
\draw[thick] (2+6,1) -- (3+6,2.25) -- (4+6,3.5);
\end{tikzpicture}
\vspace{0.5cm}

\noindent Each column forms a three dimensional irrep of $U_2$ and each row forms a three dimensional irrep of $S_4$. The rest of the 16 states in this PQC basis may be found by doing the same as the above for $S^2$ = 2 and $S^2$ = $0$. We note that all Schur-Weyl duality enforces in general is $dim(\mathbf{p}_\lambda) = mult(\mathbf{q}_\lambda^d)$ and $dim(\mathbf{q}_\lambda^d) = mult(\mathbf{p}_\lambda)$ but in this case all four values happen to coincide. For concreteness, we find the $U_2$ irrep, in the same way as it is done in \cite{kirby2017practical}. We may write a general unitary acting on a qubit as $U = \begin{pmatrix} a & -b^* \\ b & a^* \end{pmatrix}$ where $a, b \in \mathbb{C}$ satisfy $|a|^2 + |b|^2 = 1$. Recall that $\mathbf{Q}(U)$ acts on the whole space as $\mathbf{Q}(U) = U^{\otimes 4}$. To compute the action of this in the above basis, we then compute $U_{PQC}\mathbf{Q}(U)U_{PQC}^\dagger$, where $U_{PQC}$ is used to denote the PQC unitary mapping the computational basis to the PQC basis of the above shape. The result is a block-diagonal matrix with one five-dimensional block (corresponding to $S^2$ = 2), three three-dimensional blocks (as above) and two one-dimensional blocks (corresponding to $S^2$ = 0). The three-dimensional blocks that correspond to the above irreps are all equal to 

\begin{equation}
    \begin{pmatrix} a^2 & -\sqrt{2}ab^* & (b^*)^2 \\
    \sqrt{2}ab & |a|^2 - |b|^2 & -\sqrt{2}a^*b^* \\
    b^2 & \sqrt{2}a^*b & (a^*)^2
    \end{pmatrix}.
\end{equation}
\end{example}

\noindent PQC bases all vary by the different subgroups of $S_n$ to which they are adapted. The concept of subgroup adaptation is explained in detail in \cite{bacon2005quantum} and also in a way more relevant to this setting in \cite{jordan2009permutational}. As explained in \cite{jordan2009permutational}, suppose we fix a basis for an irrep of a group G, thereby obtaining some collection of matrices. If we restrict the inputs to the representation to some subgroup $H \leqslant G$, we will obtain some representation of H. The basis is called adapted to the subgroup H if the matrices map the elements of H to direct sums of irreducible representations of H i.e. the representation matrices are all block diagonal and each block forms an irrep of H. In some sense, a basis having this property means that it respects the structure of the group.

As noted in \cite{jordan2009permutational}, the PQC bases are adapted to certain $S_n$ subgroups with respect to the representation $\mathbf{P}$. If we fix a node in a PQC tree and consider the subgroup of $S_n$ that only permutes the qubits joining above that node, leaving all others fixed, then the PQC basis will be adapted to that subgroup, and this is true of every node. In particular, note that the Schur basis is adapted to the chain of subgroups $S_2 \leqslant S_3 \leqslant ... \leqslant S_{n-1} \leqslant S_n$, where $S_i$ is the subgroup permuting only the leftmost i qubits, leaving the remainder fixed\footnote{We note that, for the Schur Transform, our discussion for qubits is a simplification of the full story that applies to qudits. In that case, there is a similar subgroup tower corresponding to subgroups of $U_d$ \cite{bacon2005quantum}.}. The following example will elucidate this subgroup adaptation in the context of PQC bases and the representation $\mathbf{P}$.

\begin{example}
Consider the PQC basis on 6 qubits described by the following unlabelled tree:
\\

\begin{center}
\begin{tikzpicture}

\draw[fill=black] (2,1.04) circle (2pt);
\draw[fill=black] (1,2.52) circle (2pt);
\draw[fill=black] (3,3) circle (2pt);
\draw[fill=black] (1.333,3) circle (2pt);
\draw[fill=black] (1.667,3.5) circle (2pt);

\draw[thick] (2,0) -- (2,1) -- (1,2.5) -- (0,4);
\draw[thick] (1,2.5) -- (2,4);
\draw[thick] (2,1) -- (3,3) -- (3.5,4);
\draw[thick] (3,3) -- (2.5,4);
\draw[thick] (1.333,3) -- (0.667,4);
\draw[thick] (1.667,3.5) -- (1.333,4);
\end{tikzpicture}
\end{center}

\noindent We may find the block diagonal structure of the representation $\mathbf{P}$ in this basis by considering $U_{PQC}\mathbf{P}(\sigma)U_{PQC}^\dagger \; \forall \sigma \in S_n$ where now $U_{PQC}$ is the PQC unitary mapping the computational basis to the above PQC basis. We may also see an example of the subgroup adaptivity of this basis by considering $U_{PQC}\mathbf{P}(\sigma)U_{PQC}^\dagger$ where $\sigma$ is restricted to the subgroup of $S_n$ that permutes only the leftmost four qubits and leaves the rightmost two qubits fixed, because this tree has a vertex that joins exactly the leftmost four qubits.

\begin{center}
\begin{tikzpicture}
\foreach \x in {1, ..., 7}
{
    \draw[fill=black] (0.1*\x, -0.1*\x) circle (1pt);
}
\foreach \i in {1, ..., 5}{
    \foreach \x in {1, ..., 5}{
        \foreach \y in {1, ..., 5}{
            \draw[red,fill=red] (0.2+0.1*\x+0.5*\i, -0.2-0.1*\y-0.5*\i) circle (1pt);
        }
    }
}
\foreach \i in {1, ..., 3}{
    \foreach \x in {1, ..., 9}{
        \foreach \y in {1, ..., 9}{
            \draw[blue,fill=blue] (2.3+0.1*\x+0.9*\i, -2.3-0.1*\y-0.9*\i) circle (1pt);
        }
    }
}
\foreach \x in {1, ..., 5}{
    \foreach \y in {1, ..., 5}{
        \draw[brown,fill=brown] (5.9+0.1*\x, -5.9-0.1*\y) circle (1pt);
    }
}
\end{tikzpicture}
\begin{tikzpicture}
\foreach \x in {1, ..., 7}
{
    \draw[fill=black] (0.1*\x, -0.1*\x) circle (1pt);
}
\foreach \i in {1, ..., 5}{
    \foreach \x in {1, ..., 5}{
        \foreach \y in {1, ..., 5}{
            \draw[gray,fill=gray] (0.2+0.1*\x+0.5*\i, -0.2-0.1*\y-0.5*\i) circle (0.25pt);
        }
    }
}
\foreach \i in {1, ..., 5}{
    \foreach \x in {1, ..., 2}{
        \draw[red,fill=red] (0.2+0.5*\i+0.1*\x,-0.2-0.5*\i-0.1*\x) circle (1pt);
    }
    \foreach \x in {3, ..., 5}{
        \foreach \y in {3, ..., 5}{
            \draw[red,fill=red] (0.2+0.1*\x+0.5*\i, -0.2-0.1*\y-0.5*\i) circle (1pt);
        }
    }
}
\foreach \i in {1, ..., 3}{
    \foreach \x in {1, ..., 9}{
        \foreach \y in {1, ..., 9}{
            \draw[gray,fill=gray] (2.3+0.1*\x+0.9*\i, -2.3-0.1*\y-0.9*\i) circle (0.25pt);
        }
    }
}
\foreach \i in {1, ..., 3}{
    \draw[blue,fill=blue] (2.4 + 0.9*\i,-2.4- 0.9*\i) circle (1pt);
    \foreach \j in {1, ..., 2}{
        \foreach \x in {1, ..., 3}{
            \foreach \y in {1, ..., 3}{
                \draw[blue,fill=blue] (2.1 + 0.9*\i+0.3*\j+0.1*\x,-2.1- 0.9*\i-0.3*\j-0.1*\y) circle (1pt);
            }
        }
    }
    \foreach \x in {8, ..., 9}{
        \foreach \y in {8, ..., 9}{
            \draw[blue,fill=blue] (2.3+0.1*\x+0.9*\i, -2.3-0.1*\y-0.9*\i) circle (1pt);
        }
    }
}
\foreach \x in {1, ..., 5}{
    \foreach \y in {1, ..., 5}{
        \draw[gray,fill=gray] (5.9+0.1*\x, -5.9-0.1*\y) circle (0.25pt);
    }
}
\foreach \x in {1, ..., 3}{
    \foreach \y in {1, ..., 3}{
        \draw[brown,fill=brown] (5.9+0.1*\x, -5.9-0.1*\y) circle (1pt);
    }
}
\foreach \x in {4, ..., 5}{
    \foreach \y in {4, ..., 5}{
        \draw[brown,fill=brown] (5.9+0.1*\x, -5.9-0.1*\y) circle (1pt);
    }
}
\end{tikzpicture}
\end{center}

\noindent The left-hand diagram shows all non-zero values in $U_{PQC}\mathbf{P}(\sigma)U_{PQC}^\dagger \; \forall \sigma \in S_n$ and the right-hand diagram shows the same with $\sigma$ restricted to the subgroup described above. Faint, gray dots are zero values - they are only there to allow for comparison to the blocks of the left-hand diagram. The block diagonal structure of the right-hand diagram shows that our basis is adapted to this subgroup with respect to this representation. In the left-hand diagram, black represents copies of the $S^2 = 3$ irrep, red represents copies of the $S^2 = 2$ irrep and blue and brown are $S^2 = 1$ and $S^2 = 0$ respectively. Note that the irreps of maximal spin, in this case $S^2 = 3$, are always trivial (one-dimensional) because states of maximal spin must be totally symmetric, so any $\mathbf{P}(\sigma)$ leaves them untouched. From the left-hand diagram, we can immediately discern the block diagonal structure of the representation $\mathbf{Q}$. We would find, moving down the diagonal, $1$ block of size 7, 5 blocks of size 5, 9 blocks of size 3 and 5 blocks of size $1$. Finally, we note that the basis states have been ordered in such a way as to make the above structures look nice. A different ordering does not change the invariant subspaces, but does change the visual depiction.
\end{example}

\subsection{General PQC Unitaries}\label{genPQCUnitaries}

One advantage of our algorithm for the Schur transform is that it naturally generalises to algorithms for other PQC unitaries taking us from the computational basis to other PQC bases. It is worth noting, however, that while we find this algorithm using a polynomial number of ancillas as a direct generalisation of our algorithm for the Schur transform, we do not find an algorithm for the implementation of a general PQC unitary using only a logarithmic number of ancillas, as is shown for the Schur transform in Section \ref{logManyAncillas}. This issue is discussed more in that section.

A generic PQC tree can have a much more complex structure than the sequential coupling of the PQC tree of the Schur transform. The most instructive way of presenting the algorithm for a general PQC unitary is via an example on a constant number of qubits, but we will later present the full algorithm, having introduced some new notation to describe general PQC trees.

\begin{example}
Consider the unlabelled tree corresponding to the tree from Example \ref{treeExample}, shown here for ease of reference:
\begin{center}
\begin{tikzpicture}
\draw[fill=black] (3,1) circle (2pt);
\draw[fill=black] (1.68,2.32) circle (2pt);
\draw[fill=black] (5,3) circle (2pt);
\draw[fill=black] (1,3) circle (2pt);

\draw[thick] (3,0) -- (3,1) -- (1.68,2.32) -- (1,3) -- (0,4);
\draw[thick] (1,3) -- (2,4);
\draw[thick] (1.68,2.32) -- (3.25,4);
\draw[thick] (5,3) -- (4,4);
\draw[thick] (3,1) -- (5,3);
\draw[thick] (5,3) -- (6,4);

\node at (1.2,2.5) {$j_1$};
\node at (2.1, 1.5) {$j_2$};
\node at (4.2,1.8) {$j_3$};
\node at (3.5,0.2) {$J$, $M$};

\end{tikzpicture}
\end{center}
\noindent where we have shown the labels we will go on to use for each spin eigenvalue. We will show how to implement the PQC unitary mapping the computational basis to this basis according to our algorithm. The two-stage structure of the algorithm is very much the same, with a coupling stage coming after a pre-mapping stage. The pre-mapping stage performs 

\begin{equation}
    \ket{x} \mapsto \ket{j_1}\ket{j_2}\ket{j_3}\ket{J}\ket{M}
\end{equation}

\noindent coherently over all the computational basis states, where it is left implicit that the spin eigenvalues on the right-hand side depend on the computational basis state $\ket{x}$ from which they originate. The spin eigenvalues are encoded into computational basis states of the appropriate number of qubits, as before. The pre-mapping may be performed via

\begin{align}
    \ket{x} = \ket{x_1x_2x_3x_4x_5} &\mapsto \ket{j_1}\ket{m_1}\ket{x_3x_4x_5}\\
    &\mapsto \ket{j_1}\ket{j_2}\ket{m_2}\ket{x_4x_5}\\
    &\mapsto \ket{j_1}\ket{j_2}\ket{m_2}\ket{j_3}\ket{m_3}\\
    &\mapsto \ket{j_1}\ket{j_2}\ket{j_3}\ket{J}\ket{M}
\end{align}

\noindent where, for example, going into the last line, we act on the registers $\ket{j_2}\ket{m_2}\ket{j_3}\ket{m_3}$. The label $m_i$ is used to signify the $z$-angular momentum of the qubits with total angular momentum labelled by $j_i$. Notice that in these general PQC unitaries, we will act on four registers in general, where previously we only acted on three in general, but this is okay as the total number of qubits in question is still logarithmic. The coupling stage may then act back up the tree in the natural way:

\begin{align}
    \ket{j_1}\ket{j_2}\ket{j_3}\ket{J}\ket{M} \mapsto &\sum_{\mathclap{\substack{m_2, m_3\\}}} C^{J,M}_{j_2, m_2; j_3, m_3} \ket{j_1}\ket{j_2}\ket{m_2}\ket{j_3}\ket{m_3} \\
    \mapsto &\sum_{\mathclap{\substack{m_2, m_3\\x_4, x_5\\}}} C^{J,M}_{j_2, m_2; j_3, m_3} C^{j_3, m_3}_{1/2, x_4; 1/2, x_5}\ket{j_1}\ket{j_2}\ket{m_2}\ket{x_4x_5}\\
    \mapsto &\sum_{\mathclap{\substack{m_1, m_2, m_3\\x_3, x_4, x_5}}} C^{J,M}_{j_2, m_2; j_3, m_3} C^{j_3, m_3}_{1/2, x_4; 1/2, x_5}C^{j_2, m_2}_{j_1, m_1; 1/2, x_3}\ket{j_1}\ket{m_1}\ket{x_3x_4x_5}\\
    \mapsto &\sum_{\mathclap{\substack{m_1, m_2, m_3\\x_1, x_2, x_3, x_4, x_5}}} C^{J,M}_{j_2, m_2; j_3, m_3} C^{j_3, m_3}_{1/2, x_4; 1/2, x_5}C^{j_2, m_2}_{j_1, m_1; 1/2, x_3}C^{j_1, m_1}_{1/2, x_1; 1/2, x_2}\ket{x}\\
    =&\sum_{\mathclap{\substack{x_1, x_2, x_3, x_4, x_5}}} C^{J,M}_{j_2, x_1 + x_2 + x_3; j_3, x_4 + x_5} C^{j_3, x_4 + x_5}_{1/2, x_4; 1/2, x_5}C^{j_2, x_1 + x_2 + x_3}_{j_1, x_1 + x_2; 1/2, x_3}C^{j_1, x_1 + x_2}_{1/2, x_1; 1/2, x_2}\ket{x}.
\end{align}
\end{example}

\noindent To present the algorithm in full generality, we must introduce a way to label the vertices of our PQC trees as $v_i$ for $i \in \{1, ..., n-1\}$. $v_i$ represent subsets of the qubits: $v_i \subseteq [n] = \{1, ..., n\}$, such that the qubits labelled by $v_i$ are subsumed by the i-th vertex and we require of our numbering that if $v_i \subseteq v_j$ for $i \neq j$ then $i < j$. We can then refer to spin eigenvalues as $j_i$ and $m_i$ corresponding to those eigenvalues of the qubits subsumed by the i-th vertex. Notice that $j_{n-1} = J$ and $m_{n-1} = M$. Using this notation, it is possible to specify a PQC tree and the corresponding PQC basis using a collection of qubit subsets, $\left(v_i\right)_{i=1}^{n-1}$. The last notation we need in order to present the algorithm in full generality is two functions, $l(i)$ and $r(i)$, which specify the two qubit subsets that join to form $v_i$ i.e. $v_i = v_{l(i)} \cup v_{r(i)}$. We think of these as the `left' and `right' branches joining to form the i-th vertex. The only complication is if one of the branches joining from above to the i-th vertex represents a single qubit, and therefore does not have an i-value. There are many ways one could choose to deal with this to distinguish individual qubit labels from vertex labels, for example setting $l(i) = -3$ if the left-hand branch joining to form the i-th vertex is the third qubit. With this arrangement, we would have, for example $v_{-3} = \{3\}$, $j_{-3} = 1/2$ and $m_{-3} = x_3$. We may then present the first stage of this algorithm, the pre-mapping.

\begin{algorithm}[H]
	\caption{Pre-Mapping Stage for a General PQC Unitary\label{premapPQC}}
\textbf{Input:}  Any superposition of $n$-qubit computational basis states: $\sum_xc_x\ket{x}$ as well as some collection of qubit subsets $\left(v_i\right)_{i=1}^{n-1}$ that specify a PQC basis.\\
\textbf{Output:} The same superposition with each computational basis state mapped to the encoding on $O(n\:log(n))$ qubits of the PQC state to which it corresponds: $\sum_xc_x\ket{j_1(x)}\ket{j_2(x)}...\ket{j_{n-2}(x)}\ket{J(x)}\ket{M(x)}$.
	\begin{algorithmic}[1]
			\For{$i=1$ to $n-2$}
		    \State{Perform the mapping between computational basis states $\ket{j_{l(i)}}\ket{m_{l(i)}}\ket{j_{r(i)}}\ket{m_{r(i)}} \mapsto \ket{j_{l(i)}}\ket{j_{r(i)}}\ket{j_i}\ket{m_i}$ using some ordering.}
			      \EndFor
		\State{Perform the mapping between computational basis states $\ket{j_{l(n-1)}}\ket{m_{l(n-1)}}\ket{j_{r(n-1)}}\ket{m_{r(n-1)}} \mapsto \ket{j_{l(n-1)}}\ket{j_{r(n-1)}}\ket{J}\ket{M}$ using some ordering.}
		\end{algorithmic}
\end{algorithm}

\noindent We note that strictly speaking, in both of these algorithms, no register should be used to encode the total angular momentum on a single qubit, although this is in some sense already encapsulated in our notation, because we always use $\lceil log(k)\rceil$ qubits to encode a variable that can take at most $k$ values, and so no qubits are required to encode a variable that can take only $1$ value.

\begin{algorithm}[H]
	\caption{Coupling Stage for a General PQC Unitary\label{couplingPQC}}
\textbf{Input:} Some collection of qubit subsets $\left(v_i\right)_{i=1}^{n-1}$ that specify a PQC basis as well as any superposition of computational basis states on $O(n\;log(n))$ qubits encoding PQC states from that basis: $\sum_xc_x\ket{j_1(x)}\ket{j_2(x)}...\ket{j_{n-2}(x)}\ket{J(x)}\ket{M(x)}$.\\
\textbf{Output:} The same superposition with each computational basis state encoding mapped to its corresponding PQC state on $n$ qubits.
	\begin{algorithmic}[1]
\State{Perform the mapping on four registers: $\ket{j_{l(n-1)}}\ket{j_{r(n-1)}}\ket{J}\ket{M} \mapsto$\newline$\sum_{m_{l(n-1)}, m_{r(n-1)}}C^{J,M}_{j_{l(n-1)}, m_{l(n-1)}; j_{r(n-1)}, m_{r(n-1)}}\ket{j_{l(n-1)}}\ket{m_{l(n-1)}}\ket{j_{r(n-1)}}\ket{m_{r(n-1)}}$.}
			\For{$i=n-2$ to 1}
		    \State{Perform the mapping $\ket{j_{l(i)}}\ket{j_{r(i)}}\ket{j_i}\ket{m_i} \mapsto \sum_{m_{l(i)}, m_{r(i)}}C^{j_i, m_i}_{j_{l(i)}, m_{l(i)}; j_{r(i)}, m_{r(i)}}\ket{j_{l(i)}}\ket{m_{l(i)}}\ket{j_{r(i)}}\ket{m_{r(i)}}$.}
			      \EndFor
		\end{algorithmic}
\end{algorithm}

\noindent Appendix \ref{runtime} contains a run time analysis for the whole of the above algorithm, as well as a run time analysis for the preparation of individual PQC states.

\section{Generalised Coupling}\label{genCoupling}
We will now see how far we can push our ideas on coupling systems to define a new, very general class of quantum states and discuss the conditions under which they are computationally tractable (CT) and when they can, with our methods, be efficiently prepared on a quantum computer. Within our formalism, the computational tractability and efficient preparation results become manifest. We will show as an example how SU(N) Clebsch-Gordan coefficients fit naturally into our formalism, before looking at the same for Wigner 6-j symbols. Note that in this section we are mostly only interested in individual states, not bases as is the case with PQC trees.

\subsection{Defining Generalised Coupling}

In this general setting, each given vertex may couple any constant number of quantum systems, not just the 2 as for PQC trees. Furthermore, at each vertex, a different number of quantum systems may be coupled and different coupling coefficients may be used. The definition of coupling coefficients at one vertex is given below and after that we show how to generally couple systems using coupling coefficients.

\begin{definition}
Consider coefficients $\alpha^{J, M}_{j_1, m_1; j_2, m_2; ...; j_k, m_k} \in \mathbb{C}$ where $k \in \{2, 3, 4, ...\}$ is a constant and the indices $J, M, j_1, m_k, ..., j_k, m_k$ lie in $\mathbb{Z}$\footnote{$\mathbb{Z}$ is chosen for the sake of simplicity. $\frac{\mathbb{Z}}{2}$ is used in the case of angular momentum but any countable indexing set may be used via an injection into $\mathbb{Z}$. The $j$ and $m$ are now entirely general labels where previously they referred only to total angular momentum and $z$-angular momentum. They may be operator eigenvalues, or something else entirely.}. We call these coupling coefficients if they satisfy the following two conditions.
\\

\noindent \textbf{(GC1)} For each fixed $(J, M, j_1, j_2, ..., j_k)$, $\alpha^{J, M}_{j_1, m_1; j_2, m_2; ...; j_k, m_k}$ is non-zero only for $(m_1, ..., m_k)$ in some finite \indent \indent \; set.
\\

\noindent \textbf{(GC2)} \begin{equation}
    \sum_{m_1, ..., m_k} \alpha^{J,M}_{j_1, m_1; ...; j_k, m_k} \bar{\alpha}^{J,M'}_{j_1, m_1; ...; j_k, m_k} = \delta_{M M'} \indent \forall J, M, M', j_1, ..., j_k.
\end{equation}
\end{definition}

\noindent Notice that the usual Clebsch-Gordan coefficients satisfy this definition. For each $(J, M, j_1, j_2)$, $C^{J,M}_{j_1, m_1; j_2, m_2}$ is non-zero only if $m_1 \in \{-j_1, -j_1 + 1, ..., j_1\}$ and $m_2 \in \{-j_2, -j_2 + 1, ..., j_2\}$. They also satisfy an orthogonality relation that is more restrictive than \textbf{(GC2)}: $\sum_{m_1, m_2} C^{J,M}_{j_1, m_1; j_2, m_2} \bar{C}^{J',M'}_{j_1, m_1; j_2, m_2} = \delta_{J J'}\delta_{M M'}$ \cite{CGWiki}. Notice that having this orthogonality over the $J$-labels for Clebsch-Gordan coefficients allows Schur states and states from other PQC trees to form bases - not having this condition on general coupling coefficients is one important reason why collections of generally coupled states do not form bases without further restriction.

With these coupling coefficients, we may create generally coupled states in the natural way. Take $k$ systems for which the i-th system is considered to have some fixed $j$-value, $j_i$ and so lives in $span\{\ket{j_i,m_i} : j_i\text{ is fixed}\}$. These systems may be coupled to form a state

\begin{equation}
    \ket{J, M} = \sum_{m_1, ..., m_k} \alpha^{J, M}_{j_1, m_1; ... ; j_k, m_k}\ket{j_1, m_1} ... \ket{j_k, m_k}.\label{vertexCoupling}
\end{equation}

\noindent This allows us to build up states like the one below.

\begin{example}\label{treeExample}
Consider the following state on four qubits.
\\
\begin{vwcol}[widths={0.225,0.775}, rule = 0pt]
\phantom{M}
\begin{tikzpicture}

\draw[fill=black] (2,1.04) circle (2pt);
\draw[fill=black] (1,2.54) circle (2pt);

\node at (0.8,2.2) {$(\alpha)$};
\node at (1.7,0.74) {$(\beta)$};
\node at (1.62,2) {$j$};
\node at (2.45,0.15) {$J,M$};

\draw[thick] (2,0) -- (2,1) -- (1,2.5) -- (0,4);
\draw[thick] (1,2.5) -- (1,4);
\draw[thick] (1,2.5) -- (2,4);
\draw[thick] (2,1) -- (4,4);
\end{tikzpicture}
\\
\phantom{M}
\\
\phantom{M}
\\
\phantom{M}
\\
 = $\ket{J,M}$
\\
= $\sum_{m, x_4} \beta^{J,M}_{j,m;1,x_4}\ket{j,m}\ket{x_4}$
\\
= $\sum_{m, x_4} \beta^{J,M}_{j,m;1,x_4}\sum_{x_1, x_2, x_3}\alpha^{j,m}_{1,x_1; 1,x_2; 1,x_3}\ket{x_1x_2x_3x_4}$
\\
= $\sum_{m, x_1, ..., x_4} \beta^{J,M}_{j,m;1,x_4}\alpha^{j,m}_{1,x_1; 1,x_2; 1,x_3}\ket{x_1x_2x_3x_4}$

\end{vwcol}
\vspace{0.5cm}
\noindent where $\ket{x_i}$ are computational basis states of the individual qubits and we have used different coupling coefficients $\alpha$ and $\beta$ at each of the vertices, as shown in the diagram. We have also chosen to use a $j$-value of $1$ to refer to individual qubits in the state, but a different value could be taken.
\end{example}

\noindent The `base systems' in this example i.e. the systems at the leaves of the tree were qubits, and most of the time we would consider them to be qubits or qudits. However, the notion of coupling more general systems could be considered (but see below for a condition). States coupled with coupling coefficients are indeed quantum states as a consequence of \textbf{(GC1)} and \textbf{(GC2)}. \textbf{(GC1)} means that each sum is well-defined whereas \textbf{(GC2)} allows us to show by induction that the state has norm $1$. Indeed, in equation \eqref{vertexCoupling}, \textbf{(GC2)} gives us that $\braket{j_i,m|j_i,n} = \delta_{mn}$ for each $i = 1, ..., $k$ \implies \braket{J,N|J,M} = \delta_{MN}$ and, in particular, the state itself has norm $1$. The base case of the induction must be provided by our base systems - whatever the base systems are, their states that are coupled must be orthonormal - this is clearly true in the case of qudits. For simplicity, the rest of the section will only consider states for which the base systems are qudits.

\subsection{Computational Tractability}

We will now show that, under some slightly more restrictive conditions, these states are computationally tractable (CT), a concept defined by Van den Nest in \cite{VdNProbMethods}. As pointed out there, the notion of CTness strictly applies to families of states $\left(\ket{\psi_n}\right)_{n=1}^\infty$, where $\ket{\psi_n}$ is made up of $n$ qubits, but this fact is often omitted for brevity. A CT `state' is commonly used as shorthand for a CT state family. The notion of computational tractability provides a broad framework for studying the efficient classical simulation of quantum computations. The definition is repeated here for ease of reference.

\begin{definition}\label{CTdef}
An $n$-qubit state $\ket{\psi}$ is computationally tractable (CT) if
\begin{enumerate}
    \item One can classically sample from the distribution $\{|\braket{x|\psi}|^2 : x \in \{0,1\}^n\}$ in poly(n) time.
    \item One can classically compute $\braket{x|\psi}$ for each $x \in \{0,1\}^n$ in poly(n) time.
\end{enumerate}
\end{definition}

\noindent We will go on to show that a state formed by the coupling of $n$ qubits is computationally tractable if each of its coupling coefficients satisfy the conditions \textbf{(GC3)} - \textbf{(GC5)} and the state itself satisfies \textbf{(GC6)}, which are as follows.
\\

\noindent \textbf{(GC3)} For each $(J, M, j_1, m_1, ..., j_k, m_k)$, $\alpha^{J, M}_{j_1, m_1; j_2, m_2; ...; j_k, m_k}$ may be classically computed in poly($J$, $M$, $j_1$, \indent\indent\;\;$m_1$, ..., $j_k$, $m_k$) time.
\\

\noindent \textbf{(GC4)} For each $(J, j_1, m_1, ..., j_k, m_k)$, there is at most one $M$ for which $\alpha^{J, M}_{j_1, m_1; j_2, m_2; ...; j_k, m_k}$ is non-zero. In \indent\indent\;\;poly($J$, $j_1$, $m_1$, ..., $j_k$, $m_k$) time, one can classically compute this value of $M$, or conclude that it \indent\indent\;\;does not exist, from the input $(J, j_1, m_1, ..., j_k, m_k)$.
\\

\noindent \textbf{(GC5)} For each $(J, M, j_1, ..., j_k)$, $\alpha^{J, M}_{j_1, m_1; j_2, m_2; ...; j_k, m_k}$ is non-zero only for $(m_1, m_2, ..., m_k)$ in a set of size poly($J$, \indent\indent\;\,$M$, $j_1$, $j_2$, ..., $j_k$), which one can classically compute in poly($J$, $M$, $j_1$, $j_2$, ..., $j_k$) time.
\\

\noindent \textbf{(GC6)} For each coupling coefficient $\alpha^{J, M}_{j_1, m_1; j_2, m_2; ...; j_k, m_k}$ in the state, for each series of indices $(J, M, j_1, m_1, ..., \\ \indent\indent\;\: j_k, m_k)$ inputted to the coefficient, each index is of size poly(n) in absolute value.
\\

\noindent It is worth remarking why we have written `in poly($J$, $M$, $j_1$, $m_1$, ..., $j_k$, $m_k$) time' (etc.) in \textbf{(GC3)} - \textbf{(GC5)} and then said that each of these inputs must be of size poly(n) in \textbf{(GC6)} rather than just writing `in poly(n) time' in \textbf{(GC3)} - \textbf{(GC5)} and omitting \textbf{(GC6)}. The reason is simply to be most faithful to the prototypical example of SU(2) Clebsch-Gordan coefficients and PQC states. The Clebsch-Gordan coefficient $C^{J,M}_{j_1, m_1; j_2, m_2}$ can be computed in poly($J$, $M$, $j_1$, $m_1$, $j_2$, $m_2$) time via its relation to the corresponding Wigner 3j-coefficient $C^{J,M}_{j_1, m_1; j_2, m_2} = (-1)^{-j_1 + j_2 - M}\sqrt{2J+1}\begin{pmatrix} j_1 & j_2 & J \\ m_1 & m_2 & -M \end{pmatrix}$ \cite{CGWiki} and the Racah formula for Wigner 3j-coefficients \cite{Wigner3jWolfram}. Additionally, each $j$ and $m$-value in a PQC state inputted into a Clebsch-Gordan coefficient is bounded between $-\frac{n}{2}$ and $\frac{n}{2}$, so PQC states satisfy \textbf{(GC6)}.

Clebsch-Gordan coefficients also satisfy \textbf{(GC4)} and \textbf{(GC5)}, since $C^{J,M}_{j_1, m_1; j_2, m_2}$ is zero unless $M = m_1 + m_2$ and $(m_1, m_2) \in \{-j_1, -j_1 + 1, ..., j_1\} \times \{-j_2, -j_2 + 1, ..., j_2\}$. From this, we see that \textbf{(GC4)} is a generalisation of the conservation of angular momentum, which is not needed to define a quantum state but is necessary for computational tractability (at least by these methods).

To show computational tractability, we will use the same ideas as those employed by Havlíček in \cite{HavlicekThesis} to show the computational tractability of PQC states, although some care will be required to make sure that the proof goes through in this general a setting. It will make most sense to start with the second of the two points in Definition \ref{CTdef}. Consider what happens when we expand such a coupled state in the same way as we did in Example \ref{treeExample}: from the root upwards. From the general expression of Equation \eqref{vertexCoupling}, suppose we further expand the first subsystem, assuming it can be further expanded. This gives

\begin{align}
    \ket{J, M} &= \sum_{m_1, ..., m_k} \alpha^{J, M}_{j_1, m_1; ... ; j_k, m_k}\left(\sum_{n_1, ..., n_l}\beta^{j_1, m_1}_{i_1, n_1; ...; i_k, n_k}\ket{i_1, n_1} ... \ket{i_l, n_l}\right)\ket{j_2, m_2} ... \ket{j_k, m_k}\\
    &= \sum_{m_2, ..., m_k}\sum_{n_1, ..., n_l} \alpha^{J, M}_{j_1, \hat{m}_1; ... ; j_k, m_k}\beta^{j_1, \hat{m}_1}_{i_1, n_1; ...; i_l, n_l}\left(\vphantom{\int}\ket{i_1, n_1} ... \ket{i_l, n_l}\right)\ket{j_2, m_2} ... \ket{j_k, m_k}\label{firstSubsystemExpansion}
\end{align}

\noindent where a hat denotes the unique value for which the coefficient $\beta$ is non-zero as given by \textbf{(GC4)}. We can see from this that expanding all $k$ subsystems (assuming they can all be expanded) leads to the sum over the $m_i$ vanishing. Thus, in general, $\ket{J,M}$ may be expressed as a linear combination of computational basis states where each coefficient is a product of at most $(n-1)$ coupling coefficients (because we have one coupling coefficient for each vertex). Moreover, we can classically and efficiently calculate this product, now by moving from the leaves of the tree to the root. To calculate $\braket{x|\psi}$, we work down the tree, where at each vertex, except the root, we compute the value of $M$ given by \textbf{(GC4)} that gives a non-zero coupling coefficient, returning the answer $\braket{x|\psi} = 0$ if there is none. At the root, the $m$-value in the upper line of the coupling coefficient is fixed to M. We can then calculate the coefficients and the product using \textbf{(GC3)}. All of the above is made possible by \textbf{(GC6)}.

\begin{example}
For the state in Example \ref{treeExample}, we have 
\begin{equation}
\braket{x|\psi} = \begin{cases}\beta^{J,M}_{j,\hat{m};1,x_4}\alpha^{j,\hat{m}}_{1,x_1;1,x_2;1,x_3} & \text{if }\exists \: m \; s.t. \; \alpha^{j,m}_{1,x_1;1,x_2;1,x_3} \neq 0\\
0 & \text{otherwise.}\end{cases}
\end{equation}

\noindent Our algorithm starts by calculating if there is an $m$ for which $\alpha^{j,m}_{1,x_1;1,x_2;1,x_3} \neq 0$, returning $\braket{x|\psi} = 0$ if not. If there is, it calls this value $\hat{m}$ and returns the value $\beta^{J,M}_{j,\hat{m};1,x_4}\alpha^{j,\hat{m}}_{1,x_1;1,x_2;1,x_3}$. It is instructive to note that if there is no $m$ such that $\beta^{J,m}_{j,\hat{m};1,x_4} \neq 0$, or if such an $m$ does exist but it is not equal to our fixed $M$, $\braket{x|\psi} = 0$ and indeed our algorithm returns $0$.
\end{example}

\noindent We now turn to the first part of computational tractability, sampling from $\left\{|\braket{x|\psi}|^2\right\}$. For this algorithm, we will work from the root to the leaves, performing one sample for each vertex. At each vertex, we sample the $m$-values of the subsystems above the vertex, fixing these values to those found by the sample, before proceeding further up the tree. Thus, suppose at the root we have the expansion given by Equation \eqref{vertexCoupling}, repeated here for ease of reference:

\begin{equation}
    \ket{J, M} = \sum_{m_1, ..., m_k} \alpha^{J, M}_{j_1, m_1; ... ; j_k, m_k}\ket{j_1, m_1} ... \ket{j_k, m_k}.\label{vertexCoupling2}
\end{equation}

\noindent The first step will be to sample from the distribution $\left\{\left|\alpha^{J, M}_{j_1, m_1; ... ; j_k, m_k}\right|^2\right\}_{(m_1, ..., m_k)}$ which we can do efficiently as a consequence of \textbf{(GC3)}, \textbf{(GC5)} and \textbf{(GC6)} and then fix the $m$-values for these subsystems to the values found in this sample. Then, suppose that the first subsystem can be further decomposed in the same way as it was in Equation \eqref{firstSubsystemExpansion}. In this case, we will next sample from the distribution $\left\{\left|\beta^{j_1, m_1}_{i_1, n_1; ... ; i_l, n_l}\right|^2\right\}_{(n_1, ..., n_l)}$, and so on. In total, we make as many of these samples as there are vertices in the tree, which is at most $(n-1)$, as already discussed, and so the whole procedure takes place in poly(n) time. Care will be necessary in showing that this algorithm samples from the correct distribution. To show this, we start from the general expression

\begin{equation}
    \braket{x|\psi} = \alpha^{J^{(root)}, M^{(root)}, root}_{j_1^{(root)}, \hat{m}_1^{root}; ...; j_{k_{root}}^{(root)},\hat{m}_{k_{root}}^{(root)}} \prod_{v \in \{Vertices\} \setminus \{root\}} \alpha^{J^{(v)}, \hat{M}^{(v)}, v}_{j_1^{(v)}, \hat{m}_1^{v}; ...; j_{k_{v}}^{(v)},\hat{m}_{k_{v}}^{(v)}}.\label{generalOverlap}
\end{equation}

\noindent In this expression, we have added a third entry to the upper line for each coupling coefficient, which only serves as an easy way of referring to different coupling coefficients at each vertex. At a vertex $v$, $k_v$ systems are coupled; the i-th system being coupled has $j$ and $m$-values $j_i^{(v)}$ and $m_i^{(v)}$ respectively while the resulting system (below the vertex) has $j$ and $m$-values $J^{(v)}$ and $M^{(v)}$ respectively. As above, we use a hat to denote an $m$-value that is computed using \textbf{(GC4)}, although here, for convenience, if such an $m$-value does not exist, we set $\hat{m}$ to any value, at which point the coefficient will return $0$, causing the overlap $\braket{x|\psi}$ to vanish also, as it should. Also for convenience, if the i-th system coupled at vertex $v$ is the j-th qubit, we let $\hat{m}_i^v = x_j$.

By repeatedly applying \textbf{(GC4)}, we see that in order to sample $(x_1, ..., x_n)$, there is at most one sequence of $m$-values that must be sampled in order to make it possible to sample this bit string, and if this sequence exists, it is exactly the hatted values given in Equation \eqref{generalOverlap}. The probability that the bit string $x$ is sampled is exactly the probability that each correct $m$-value is sampled and then each correct bit $x_i$ is sampled, which is exactly $|\braket{x|\psi}|^2$. This will be significantly elucidated by an example.

\begin{example}
Returning once again to the state in Example \ref{treeExample}, we have $|\braket{x|\psi}|^2 = \left|\beta^{J,M}_{j,\hat{m};1,x_4}\right|^2\left|\alpha^{j,\hat{m}}_{1,x_1;1,x_2;1,x_3}\right|^2$, where now we assign $\hat{m}$ any value in the case that $\nexists$ $m$ $s.t.$ $\alpha^{j,m}_{1,x_1;1,x_2;1,x_3} \neq 0$. Our algorithm samples from the distribution $\left\{\left|\beta^{J,M}_{j,m;1,x_4}\right|^2\right\}_{(m,x_4)}$, fixes the sampled values, and then samples from the distribution $\left\{\left|\alpha^{j,m}_{1,x_1;1,x_2;1,x_3}\right|^2\right\}_{(x_1, x_2, x_3)}$. Let us ask what the probability is that we sample a given bit string $y = (y_1, y_2, y_3, y_4)$. If there is no $m$ that gives a non-zero $\alpha^{j,m}_{1,y_1;1,y_2;1,y_3}$, then it will be impossible to sample this bit string and, indeed, $\braket{y|\psi} = 0$. If there is such an $m$, call it $\hat{m}$, it will only be possible to sample the bit string $y$ if in the first step we sample $(m, x_4) = (\hat{m}, y_4)$, which occurs with probability $\left|\beta^{J,M}_{j,\hat{m};1,y_4}\right|^2$. The rest of the bit string $y$ is then sampled with probability $\left|\alpha^{j,\hat{m}}_{1,y_1;1,y_2;1,y_3}\right|^2$. Thus, in total, we can see that the probability we sample the bit string $y$ is exactly $|\braket{y|\psi}|^2$.
\end{example}

\noindent It is interesting to consider the computational tractability of this broad family of states within the wider context of the theory of classical simulation. \cite{VdNProbMethods} identifies certain families of quantum states known already to be CT. Most notably for this discussion, these include stabiliser states (\hspace{1sp}\cite{gottesman1998heisenberg}, \cite{dehaene2003clifford}) and states formed from the application of a polynomial number of nearest neighbour matchgates to a computational basis state (\hspace{1sp}\cite{valiant2001quantum}). In both of these important cases, there is a closed set of gates that can be applied to keep the state CT (namely Cliffords in the former case). It is interesting to consider that the computational tractability of these generally coupled states does not come from structure endowed by any gate set, but from the structure of the states themselves. For example, we know that Schur states are CT, but if we apply the Schur transform to a Schur state, it is not straightforward to see how the result is CT, and indeed, we do not know. Section \ref{CTgates} contains a discussion on gates that preserve CTness. We illuminate the difficulties involved in defining a notion of CT-preservation, showing that if one is naive, it appears that any unitary gate is arbitrarily close to CT-preserving.

\subsection{Efficient State Preparation}\label{statePrep}

We will now develop algorithms in the style of those of Section \ref{PQCUnitaries} that allow us to efficiently prepare generally coupled states under the restriction of \textbf{(GC6)} and whose coupling coefficients are all under the restrictions \textbf{(GC3)} and \textbf{(GC5)}. Note that \textbf{(GC4)} is not required for efficient state preparation, but since it is sufficient for computational tractability, \textbf{(GC4)} demarcates the regime of potentially interesting quantum algorithms. To construct this algorithm, the idea is very much the same as that of the coupling stage of the general PQC unitaries, Algorithm \ref{couplingPQC}, except that each vertex corresponds to a unitary rotation of any constant number of systems, not just 2, as well as the difference that here we are only interested in preparing individual states rather than rotating the computational basis into a whole other basis, as is the case with PQC unitaries. This latter fact leads to some simplifications, as we now see. In particular, pre-mappping is not required.

Suppose we wish to prepare a generally coupled state on qudits with internal $j$-values $j_1, ..., j_a$ and root $J$ and $m$-values $J$ and $M$ respectively (and some given tree structure, as well as some given coupling coefficients). We may simply begin the preparation with the state $\ket{j_1}\ket{j_2}...\ket{j_a}\ket{J}\ket{M}$. This looks much like our state in our procedure for implementing a PQC unitary after the pre-mapping, except here, we do not need to encode each $j$ and $m$-value into a logarithmic number of qudits. We are only preparing one state and so these values can only be one thing. As such, we may `encode' these values into $1$ qudit - in any state. For definiteness, we could say that the starting state is, in fact, $\ket{0^{a+2}}$, and the labels serve only as names to call the qudits.

At the root, suppose that the tree splits according to the general expression of Equation \eqref{vertexCoupling2}. Our first unitary rotation will act on the registers $\ket{J}$, $\ket{M}$, as well as all the registers $\ket{j_i}$ such that $j_i$ refers to an internal $j$-value immediately above the root\footnote{This is unless one of the branches above the root is an individual qudit, in which case there is no $j$-register to act on - see the first rotation of Example \ref{statePrepExample}.}. Suppose  that these are $(j_1, ..., j_k)$ without loss of generality. Our first rotation then simply enacts

\begin{equation}
    \ket{j_1}...\ket{j_a}\ket{J}\ket{M} \mapsto \sum_{m_1, ..., m_k}\alpha^{J, M}_{j_1, m_1; ... ; j_k, m_k}\ket{j_1}\ket{m_1}...\ket{j_k}\ket{m_k}\ket{j_{k+1}}...\ket{j_a}.
\end{equation}

\noindent The individual kets on the right-hand side are, as always, computational basis states encoding the given values. Note, however, that the $m_i$ can vary over many values, unlike the $j_i$ and so in general these values must be encoded into more than one qudit. Because of \textbf{(GC5)} and \textbf{(GC6)}, each $\ket{m_i}$ register may consist of logarithmically many qudits, and so in total this transformation acts on logarithmically many qudits and has classically, efficiently computable matrix elements (due to \textbf{(GC3)} and \textbf{(GC6)}), and so may be efficiently implemented. Much like in the procedure for the implementation of PQC unitaries, we apply one such unitary for each vertex in the tree, proceeding from the root to the leaves, rendering the whole process efficient.

\begin{example}\label{statePrepExample}

We can now show how to prepare the state from Example \ref{treeExample}, under the assumptions that both its coupling coefficients satisfy \textbf{(GC3)} and \textbf{(GC5)} (but need not satisfy \textbf{(GC4)}) and the state itself satisfies \textbf{(GC6)}. We start from $\ket{j}\ket{J}\ket{M}$, where, as discussed above, each register may be as small as one qubit. Our procedure performs the following:

\begin{align}
    \ket{j}\ket{J}\ket{M} &\mapsto \sum_{m, x_4} \beta^{J,M}_{j,m;1,x_4}\ket{j}\ket{m}\ket{x_4}\\
    &\mapsto \sum_{m, x_4}\beta^{J,M}_{j,m;1,x_4}\sum_{x_1, x_2, x_3}\alpha^{j,m}_{1,x_1;1,x_2;1,x_3}\ket{x_1x_2x_3x_4}
\end{align}

\noindent As discussed above, note that the only register here needing more than one qubit in general is $\ket{m}$.

\end{example}

\noindent It is interesting to note that this procedure automatically uses a logarithmic number of ancillas in the case that the tree has the same structure as the Schur Transform (sequential coupling). This is because there will only ever be at most one register containing more than one qubit, which is some $\ket{m}$. This is interesting because we find it to be possible to reduce the number of ancillary qubits for the Schur Transform to logarithmic (see Section \ref{logManyAncillas}) but not for general PQC unitaries, although the reasons in each case are somewhat different.

\subsection{SU(N) Clebsch-Gordan Coefficients}\label{SUN}

We will now show that SU(N) Clebsch-Gordan coefficients fit very naturally into our formalism, finding that the computational tractability axiom \textbf{(GC4)} cannot be guaranteed for $N > 2$. In fact, we will show explicitly that it is violated for SU(3) and it seems very likely that it is similarly violated for all SU(N) with $N > 2$. We will review very briefly the facts on these coefficients that we find useful, but \cite{alex2011numerical} provides a very clear and complete source of further information.

An irrep of SU(N) may be specified by a list of $N$ integers $S = (m_{1,N}, ..., m_{N,N})$ for which $m_{k,N} \geq m_{k+1, N} \; \forall k$. $\left(m_{k,N}\right)_{k=1}^N$ and $\left(m_{k,N} + c\right)_{k=1}^N$ specify the same irrep for any $c \in \mathbb{Z}$. Up to this equivalence, irreps are uniquely specified, so $N-1$ independent components specify an SU(N) irrep. To avoid ambiguity, it is common to set $m_{N,N} = 0$. Within the irrep $S = \left(m_{k, N}\right)_{k=1}^N$, the states may be labelled by triangular arrays of integers $M = \left(m_{k, l}\right)$ called GT-patterns:

\begin{equation}
M = \begin{pmatrix} m_{1,N} \hspace{0.5cm} m_{2,N} \hspace{1cm} \ldots \hspace{0.8cm} m_{N,N} \\
m_{1,N-1} \hspace{0.5cm} \ldots \hspace{0.5cm} m_{N-1, N-1}\\
\ddots \hspace{1cm} \reflectbox{\(\ddots\)}\\
m_{1,1}
\end{pmatrix}
\end{equation}

\noindent satisfying $m_{k,l} \geq m_{k,l-1} \geq m_{k+1,l}$, which is known as the betweenness condition. The dimension of the irrep $S = \left(m_{k, N}\right)$ can then be found by counting the number of such triangular arrays allowed by a fixed top row $\left(m_{k, N}\right)$, and there exists a closed formula for this. 

The higher dimensional analogue of the j-variable for SU(2), the irrep label, is therefore $\left(m_{k, N}\right)$. What is the analogue of the m-variable? In SU(2), $m$ is an eigenvalue of the operator $J_3$. For SU(N), each of the states $\ket{M}$ is a simultaneous eigenstate of the $N-1$ operators $J_z^{(l)}$, for $1 \leq l \leq N-1$:

\begin{equation}
    J_z^{(l)}\ket{M} = \lambda_l^M\ket{M}
\end{equation}

\noindent where $\left(\lambda_l^M\right)_{l=1}^{N-1}$ is known as the $z$-weight of the state $\ket{M}$ (we will often refer to this simply as the `weight' of $\ket{M}$). For completeness, we note that $\lambda_l^M = \sigma_l^M - \frac{1}{2}\left(\sigma_{l+1}^M + \sigma_{l-1}^M\right)$, where $\sigma_l^M = \sum_{k=1}^lm_{k,l}$ for $l \geq 1$ and $\sigma_0^M = 0$. It is here that we encounter the difference between SU(2) and SU(N) for $N > 2$ that will be most important for us. For SU(2), the weights within each irrep are non-degenerate. Indeed, every state within every irrep of SU(2) can be labelled by some

\begin{equation}
    M = \begin{pmatrix} 2j \hspace{1.2cm} 0 \\ j-m \end{pmatrix}
\end{equation}

\noindent which is better known simply as $\ket{j,m}$. This is the unique state in the $j$-th irrep with the $z$-weight $-m$. However, for SU(N) for $N > 2$, the weights may have some multiplicity. This is referred to as `inner multiplicity', and it will have important consequences for us. Thus, in terms of eigenvalues, the analogue of the m-variable for general SU(N) is the weight $\left(\lambda_l^M\right)$, but in terms of labelling states, it is the lower $N-1$ rows of the GT-pattern, but we will refer to the whole GT-pattern $M$ for this task for simplicity.

When the tensor product of two SU(N) irreps is taken, the result is isomorphic to a direct sum of SU(N) irreps. In terms of the representation spaces, this may be expressed as 

\begin{equation}
    V^S \otimes V^{S'} = \oplus_{S''}N^{S''}_{S\;S'} V^{S''}\label{decomp}
\end{equation}

\noindent where $V^S$ is a representation space of the irrep S. The positive integers $N^{S''}_{S\;S'}$ denote the fact that there are, in general, multiple copies of the same irrep on the right-hand side. This is another notable difference between SU(2) and SU(N) for $N > 2$ and is known as `outer multiplicity'; for SU(2), because $j \otimes j' = |j-j'| \oplus ... \oplus (j+j')$, $N^{S''}_{S\;S'} \leq 1$. Outer multiplicity will not turn out to have significant consequences for us, however.

The equation \eqref{decomp} tells us that we may write 

\begin{equation}
    \ket{M'', \alpha} = \sum_{M, M'} C^{M'', \alpha}_{M, M'} \ket{M}\otimes\ket{M'} \label{unitary}
\end{equation}

\noindent which is a change of basis between the tensor product basis $\left\{\ket{M}\otimes\ket{M'}\right\}$ and the basis displaying the decomposition $\left\{\ket{M'', \alpha}\right\}$. The coefficients of the change of basis, $C^{M'', \alpha}_{M, M'}$, are the Clebsch-Gordan coefficients for SU(N). $\alpha = 1, ..., N^{S''}_{S\;S'}$ indexes the outer multiplicity of the irreps $S''$ on the right-hand side of equation \eqref{decomp}. The natural way to package these labels into our formalism is to write coefficients $\alpha^{\tilde{J},\tilde{M}}_{\tilde{j_1}, \tilde{m_1}; \tilde{j_2}, \tilde{m_2}}$\footnote{Tildes are used to differentiate generic $m$-labels such as $\tilde{M}$ from GT-patterns such as M.}, where $\tilde{J} = (S'', \alpha)$, $\tilde{M} = M''$, $\tilde{j_1} = S$, $\tilde{m_1} = M$, $\tilde{j_2} = S'$ and $\tilde{m_2} = M'$\footnote{Note that a single irrep is referred to twice in the coefficient - for example the irrep $S$ is encoded both in $S$ and in $M$ - but, again, it is best to do this for simplicity.}.

We will now show that of the conditions \textbf{(GC1 - 5)}, all are true for constant $N$ (i.e. $N = O(1)$) for these coefficients, other than \textbf{(GC4)}. \textbf{(GC2)} is true since the Clebsch-Gordan coefficients form a unitary in equation \eqref{unitary}. Indeed, as given in \cite{alex2011numerical}, 

\begin{equation}
    \sum_{M,M'}C^{M'', \alpha}_{M, M'}\left(C^{\hat{M}'', \hat{\alpha}}_{M, M'}\right)^* = \delta_{M'', \hat{M}''}\delta_{\alpha, \hat{\alpha}}.
\end{equation}

\noindent \textbf{(GC3)} is then true as shown in the same paper. \textbf{(GC5)} can be shown for $N = O(1)$, immediately implying \textbf{(GC1)} by a simple argument. We note that the simplest way to talk about the `size' of an irrep label $S$ or a state $M$ is to consider only the integer $m_{1,N}$, as this bounds the other integers in the multi-indices from above. Then, considering some $(S'', \alpha)$, $M''$, $S$ and $S'$, $(M, M')$ may run over a set of size at most $dim(S)\cdot dim(S')$. Using the expression for the dimension of an irrep $S$ from \cite{alex2011numerical}, we get 

\begin{equation}
    dim(S) = \prod_{1 \leq k < k' \leq N} \left(1 + \frac{m_{k,N} - m_{k',N}}{k'-k}\right) \leq \prod_{1 \leq k < k' \leq N} \left(1 + m_{1,N}\right) = poly(m_{1,N})
\end{equation}

\noindent for $N = O(1)$\footnote{In fact, \textbf{(GC1)} and \textbf{(GC2)} are both satisfied for any $N$.}. We can thus efficiently prepare coupling using these coefficients for constant $N$, as long as the state satisfies \textbf{(GC6)}. Note that \textbf{(GC6)} is satisfied naturally in the case of the Schur transform, or the analogues of PQC unitaries, for SU(N). This is true because when the product of two SU(N) irreps $S$ and $S'$ is taken, the irrep $S''$ in the decomposition with the highest $m_{1,N}$ is simply $S'' = S + S'$. \textbf{(GC6)} is then seen to be true, since we take the base systems to be qudits of dimension d = $N$, which are in the fundamental representation $S = (1, 0, ..., 0)$. These higher dimensional PQC unitaries may be implemented in very much the same way as the algorithms of Section \ref{PQCUnitaries} implement them for SU(2), as already mentioned.

What is most interesting for us is that the SU(N) Clebsch-Gordan coefficients do not respect \textbf{(GC4)}, and so SU(N) Schur states and PQC states do not appear to be computationally tractable, at least in the same way as the usual PQC states. Indeed, while it is true that if the $z$-weight of the state $\ket{M''}$ does not equal the sum of the $z$-weights of $\ket{M}$ and $\ket{M'}$ then $C^{M'', \alpha}_{M, M'} = 0$, which is a generalisation of the usual conservation of angular momentum for SU(2), the inner multiplicities of states in the irreps of the direct sum decomposition mean that \textbf{(GC4)} is not respected. Indeed, we can show an explicit violation for SU(3). Here, it is true that

\begin{equation}
    (1,0,0) \otimes (1,1,0) = (2,1,0) \oplus (0,0,0)
\end{equation}

\noindent which may be more familiar to physicists when written as $\mathbf{3} \otimes \mathbf{\bar{3}} = \mathbf{8} \oplus \mathbf{1}$. The eight dimensional irrep $(2,1,0)$ has two states with $z$-weight $(0,0)$, which may be written as

\begin{equation}
    \frac{\ket{\mathbf{3},1}\otimes\ket{\mathbf{\bar{3}},3} + \ket{\mathbf{3}, 2}\otimes\ket{\mathbf{\bar{3}},2}}{\sqrt{2}}
\end{equation}

\noindent and

\begin{equation}
    -\frac{1}{\sqrt{6}}\ket{\mathbf{3},1}\otimes\ket{\mathbf{\bar{3}},3} + \frac{1}{\sqrt{6}}\ket{\mathbf{3},2}\otimes\ket{\mathbf{\bar{3}},2} + \sqrt{\frac{2}{3}}\ket{\mathbf{3},3}\otimes\ket{\mathbf{\bar{3}}, 1}
\end{equation}

\noindent where $\ket{\mathbf{3}, i}$ and $\ket{\mathbf{\bar{3}}, j}$ label the states of $(1,0,0)$ and $(1,1,0)$ respectively, giving the violation of \textbf{(GC4)}. The fact that these states are efficiently preparable but do not immediately appear to be computationally tractable, as well as the fact that the unitaries may be efficiently implemented, opens up a possible new avenue for interesting quantum algorithms.

\subsection{Wigner 6-j Symbols}

The Wigner 6-j symbols $\begin{Bmatrix} a & b & f \\ c & e & d \end{Bmatrix}$ are important objects in the recoupling theory of SU(2) \cite{martin2019primer} but arise in many other areas \cite{yutsis1962mathematical}, \cite{varshalovich1988quantum}, \cite{haggard2011asymptotic}, \cite{de2003orthogonal}. For us, they are most naturally exposed via their relation to the recoupling tensor \cite{jordan2009permutational}

\begin{equation}
    \begin{bmatrix}a & b & f \\ c & e & d \end{bmatrix} = (-1)^{a+b+c+e}\sqrt{(2d+1)(2f+1)}\begin{Bmatrix}a & b & f \\ c & e & d \end{Bmatrix}
\end{equation}

\noindent which forms the coefficients of the following change of basis:

\begin{center}
\begin{tikzpicture}

\draw[fill=black] (-3,0.52) circle (2pt);
\draw[fill=black] (-2,1.5) circle (2pt);

\draw[fill=black] (3,0.52) circle (2pt);
\draw[fill=black] (2,1.5) circle (2pt);

\draw[thick] (-3,-0.25) -- (-3,0.5) -- (-5,2.5);
\draw[thick] (-3,0.5) -- (-1,2.5);
\draw[thick] (-2,1.5) -- (-3,2.5);

\draw[thick] (3,-0.25) -- (3,0.5) -- (1,2.5);
\draw[thick] (3,0.5) -- (5,2.5);
\draw[thick] (2,1.5) -- (3,2.5);

\node at (-2.77,0) {$e$};
\node at (-5,2.7) {$a$};
\node at (-3,2.7) {$b$};
\node at (-1,2.7) {$c$};
\node at (-2.35,0.85) {$d$};

\node at (3.23,0) {$e$};
\node at (1,2.7) {$a$};
\node at (3,2.7) {$b$};
\node at (5,2.7) {$c$};
\node at (2.31,0.83) {$f$};
\node at (-0.15,1.1) {$=\underset{f}{\bigsigma}\begin{bmatrix}a & b & f \\ c & e & d\end{bmatrix}$};
\end{tikzpicture}
\end{center}

\noindent This uses the PQC notation as introduced in Section \ref{PQCUnitaries}, where all labels are $j$-variables and the change in tree structure may occur at any node in a given tree. From this it is natural, and correct, to assume that there is a relation between Wigner 6-j symbols and SU(2) Clebsch-Gordan coefficients. By virtue of being the components of a change of basis, these satisfy the orthogonality relation \cite{6jwiki}

\begin{equation}
\sum_f \begin{bmatrix} a & b & f \\ c & e & d \end{bmatrix} \begin{bmatrix} a & b & f \\ c & e & d' \end{bmatrix} = \delta_{d \; d'} \{e \;\; a \;\; d\}\{d \;\; b \;\; c\}\label{6jorthog}
\end{equation}

\noindent where $\{\alpha\;\; \beta \;\; \gamma\}$ is the `triangular delta', equalling

\begin{equation}
\begin{cases} $1$ & \text{if } \alpha \in \left\{\left|\beta-\gamma\right|, ..., \beta + \gamma\right\}\\
0 & \text{otherwise.}\end{cases}
\end{equation}

\noindent Wigner 6-j symbols may be packaged into our formalism as

\begin{equation}
\alpha^{J, M}_{j_1, m_1; j_2, m_2} = \delta_{m_1  m_2} \begin{bmatrix}J & b & m_1 \\ j_1 & j_2 & M \end{bmatrix} = \delta_{m_1 m_2}(-1)^{J + b + j_1 + j_2}\sqrt{(2M+1)(2m_1+1)}\begin{Bmatrix}J & b & m_1 \\ j_1 & j_2 & M\end{Bmatrix}\label{package}
\end{equation}

\noindent where b is any $j$-variable that we may choose (it may be different in each coefficient). The only constraint that we impose is that the triangular deltas of equation \eqref{6jorthog} are always equal to $1$, which is a very natural condition to impose as these are the `triangle conditions' given to us by the usual rules of addition of angular momentum. We may now examine our axioms with this packaging.

\textbf{(GC2)} is quickly seen to be satisfied, noting that Wigner 6-j symbols are real:

\begin{align}
\sum_{m_1, m_2} \alpha^{J, M}_{j_1, m_1; j_2, m_2}\bar{\alpha}^{J, M}_{j_1, m_1; ; j_2, m_2} &= \label{doubleSum}
\sum_{m_1} \begin{bmatrix}J & b & m_1 \\ j_1 & j_2 & M \end{bmatrix}\begin{bmatrix}J & b & m_1 \\ j_1 & j_2 & M' \end{bmatrix}\\
&=\delta_{M M'}
\end{align}

\noindent under the above stipulation. At this point, the reader may consider the packaging of the variables $m_1$ and $m_2$ into the $\delta_{m_1 m_2}$ of equation \eqref{package} to be somewhat unnatural or perhaps something of a `cheat' to make the double sum of equation \eqref{doubleSum} collapse to a single sum so that we may employ the orthogonality relation. However, this trick may be considered in very much the same spirit as the same double sum in the case of SU(2) Clebsch-Gordan coefficients secretly being a single sum due to the conservation of angular momentum relation.

\textbf{(GC3)} is satisfied as seen via the formulae of \cite{jordan2009permutational}. \textbf{(GC5)} is seen to be true via the fact that $\begin{bmatrix}J & b & m_1 \\ j_1 & j_2 & M \end{bmatrix}$ is zero unless $\{m_1\;\; j_1\;\; j_2\} = 1$, as discussed in \cite{6jwiki}, which also gives us \textbf{(GC1)}. We may then again show the explicit violation of \textbf{(GC4)}, opening up a further set of possible interesting quantum algorithms. Indeed, both $\begin{bmatrix}1/2 & 1/2 & 0 \\ 1/2 & 1/2 & 0 \end{bmatrix}$ and $\begin{bmatrix}1/2 & 1/2 & 0 \\ 1/2 & 1/2 & 1 \end{bmatrix}$ are non-zero.

\section{Reducing to Logarithmically Many Ancillas}\label{logManyAncillas}

Here, we show how to reduce the number of ancillary qubits used by our algorithm for the Schur transform to logarithmic, thereby matching the smallest number of ancillas used in previous implementations. Furthermore, this algorithm has the same run time (asymptotically) as our original one, as is discussed in Appendix \ref{runtime}, which is the lowest proved gate sequence length for the quantum Schur transform of which we are aware. Interestingly, we will not be able to perform this same reduction to a logarithmic number of ancillas on general PQC unitaries - the problem as to whether this can be done is left open. 

Our reduction to logarithmically many ancillas for the Schur transform relies on the following observation. As well as the root $J$ and $M$ values, a Schur state is specified by $n-2$ internal $j$-values, $j_1, ..., j_{n-2}$. Given $j_p$ for $p < n-2$, $j_{p+1}$ may either be $j_p+1/2$ or $j_p-1/2$, unless $j_p = 0$, in which case $j_{p+1} = 1/2$, by the usual rules of addition of angular momentum. Therefore, $j_{p+1}$ has at most two `choices' when given the value of $j_p$, which is a decision that we may encode into a qubit\footnote{Note that for a general PQC unitary, given the two $j$-values above a vertex, the $j$-value below the vertex may take more than 2 values - in general it may take linearly many values. This prevents our reduction to a logarithmic number of ancillary qubits from generalising to all PQC unitaries.}.

The natural way to do this is using Yamanouchi symbols - explained, for example, in \cite{SchurSampling}. These are bit strings representing such sequences of $j$-values. The idea is, quite simply, that a $1$ represents an increased $j$-value, a $0$ represents a decrease, and we start from $1/2$. Thus, the sequence of $j$-values $j_1 = 0$, $j_2 = 1/2$, $j_3 = 1$ is represented by the Yamanouchi symbol 011.

The first stage of the algorithm was the pre-mapping which coherently performs

\begin{equation}
    \ket{x_1 ... x_n} \mapsto \ket{j_1}\ket{j_2}...\ket{j_{n-2}}\ket{J}\ket{M}
\end{equation}

\noindent where recall the individual kets on the right-hand side are computational basis states serving simply as labels. We will replace this with the pre-mapping 

\begin{equation}
    \ket{x_1 ... x_n} \mapsto \ket{y_1 ... y_{n-2}}\ket{j_{n-2}}\ket{J}\ket{M}
\end{equation}

\noindent where $y_1 ... y_{n-2}$ is the Yamanouchi symbol corresponding to $j_1, ..., j_{n-2}$ and so $\ket{y_1 ... y_{n-2}}$ represents $n-2$ qubits in a computational basis state. The pre-mapping may be performed as follows, using only a logarithmic number of ancillas. All of the following operations may be efficiently implemented as they will all act on a logarithmic number of qubits and have classically, efficiently computable matrix elements, although in some cases we will have to take care to make sure the operation is in fact unitary. As in previous explanations, it will not be explicitly mentioned when register sizes change but we again make the comment that when register sizes get smaller, ancillary qubits that are no longer needed are returned to the state $\ket{0}$, therefore becoming un-entangled with the bulk of the state.

Begin with the first three steps of the usual pre-mapping $\ket{x_1 ... x_n} \mapsto \ket{j_1}\ket{j_2}\ket{j_3}\ket{m_3}\ket{x_5 ... x_n}$. We will then make a copy of both $\ket{j_1}$ and $\ket{j_2}$ using ancillas in the state $\ket{0}$ and CNOT gates. This gives us

\begin{align}
    &\ket{j_1}\ket{j_2}\ket{j_3}\ket{m_3}\ket{x_5}...\ket{x_n}\\
    &\ket{j_1}\ket{j_2}
\end{align}

\noindent where we write copied registers on a second line for clarity but note they are part of the same quantum state as the upper line. At this stage, we may unitarily map the upper $\ket{j_1}$ to its Yamanouchi symbol $\ket{y_1}$:

\begin{align}
    &\ket{y_1}\ket{j_2}\ket{j_3}\ket{m_3}\ket{x_5}...\ket{x_n}\\
    &\ket{j_1}\ket{j_2}.
\end{align}

\noindent We will then perform a mapping on the remaining $\ket{j_1}$ register and the upper $\ket{j_2}$ register to replace $\ket{j_2}$ with its Yamanouchi symbol $\ket{y_2}$:

\begin{align}
    &\ket{y_1}\ket{y_2}\ket{j_3}\ket{m_3}\ket{x_5}...\ket{x_n}\\
    &\ket{j_1}\ket{j_2}.
\end{align}

\noindent This is a mapping from and to computational basis states and is therefore unitary if each computational basis state in its domain is mapped to a distinct computational basis state. This is indeed the case, because given a $(j_1,j_2)$, we can deduce a unique $(j_1, y_2)$ and vice versa. We may then act on the registers $\ket{j_1}, \ket{j_2}$ and $\ket{y_2}$ to `eliminate' the register $\ket{j_1}$ (i.e. return it to $\ket{0}$), which is a unitary operation by the same argument - from any given $(j_1, j_2, y_2)$, we may deduce a unique $(j_2, y_2)$ and vice versa. This leaves us with

\begin{align}
    \ket{y_1}&\ket{y_2}\ket{j_3}\ket{m_3}\ket{x_5}...\ket{x_n}\\
             &\ket{j_2}.
\end{align}

\noindent We may then continue in this way, with each further iteration having four steps. The next iteration will be: perform the next stage of the pre-mapping to get $\ket{m_3} \mapsto \ket{j_4}$ and $\ket{x_5} \mapsto \ket{m_4}$, copy the $\ket{j_3}$ register, map the upper $\ket{j_3} \mapsto \ket{y_3}$ and then eliminate the remaining $\ket{j_2}$, resulting in

\begin{align}
    \ket{y_1}\ket{y_2}&\ket{y_3}\ket{j_4}\ket{m_4}\ket{x_6}...\ket{x_n}\\
                      &\ket{j_3}.
\end{align}

\noindent We may observe that only a logarithmic number of ancillary qubits is needed because at each step there are only $n + O(log(n))$ qubits making up our state. Eventually, we will reach

\begin{align}
    \ket{y_1}...\ket{y_{n-3}}&\ket{y_{n-2}}\ket{J}\ket{M}\\
                             &\ket{j_{n-2}}
\end{align}

\noindent and we may simply write the $\ket{j_{n-2}}$ register in the upper line to achieve the desired pre-mapping.

The Clebsch-Gordan transforms may then be done as follows. We begin with the first such transform to give

\begin{equation}
    \sum_{m_{n-2}, x_n} C^{J,M}_{j_{n-2}, m_{n-2}; 1/2, x_n}\ket{y_1}...\ket{y_{n-2}}\ket{j_{n-2}}\ket{m_{n-2}}\ket{x_n}.
\end{equation}

\noindent By acting on the $\ket{y_{n-2}}$ and $\ket{j_{n-2}}$ registers, we may then unitarily map $\ket{y_{n-2}}$ to $\ket{j_{n-3}}$ and then perform the next Clebsch-Gordan transform to produce

\begin{equation}
    \sum_{\mathclap{\substack{m_{n-2}, m_{n-3}\\x_{n-1}, x_n}}} C^{j_{n-2}, m_{n-2}}_{j_{n-3}, m_{n-3}; 1/2, x_{n-1}} \ket{y_1} ... \ket{y_{n-3}}\ket{j_{n-3}}\ket{m_{n-3}}\ket{x_{n-1}x_n}.
\end{equation}

\noindent We continue in this way until we have

\begin{equation}
    \sum C ... C\ket{y_1}\ket{j_1}\ket{j_2}\ket{m_2}\ket{x_4 ... x_n}
\end{equation}

\noindent where we have suppressed labels for readability. We may then (unitarily) eliminate $\ket{y_1}$ by acting on the registers $\ket{y_1}$ and $\ket{j_1}$ before performing the last two Clebsch-Gordan transforms to produce the desired Schur state. Notice that, as in the pre-mapping, there are only ever a constant number of registers comprised of more than one qubit - and they are comprised of logarithmically many qubits - and so throughout the procedure, logarithmically many ancillary qubits are used.

For completeness, we produce an exact count of the total number of qubits used in both the original algorithm for the Schur transform and the one of this section that uses logarithmically many ancillas. We will in fact find that, for low $n$, we are usually better off using the original algorithm, but this section's algorithm represents an exponential improvement asymptotically.

To count the number of qubits used in the original algorithm, we count the number of qubits in the state $\ket{j_1}\ket{j_2}...\ket{j_{n-2}}\ket{J}\ket{M}$ because this is a state representing the most qubits used at any point in the computation. For the same reason, to find the number of qubits used in the modified version of the algorithm in this section, we count the number of qubits in the state 

\begin{align}
    \ket{y_1}...&\ket{y_{n-3}}\ket{y_{n-2}}\ket{J}\ket{M}\\
                &\ket{j_{n-3}}\hspace{0.03cm}\ket{j_{n-2}}.
\end{align}

\noindent With this, we find that the total number of qubits used in our original algorithm for the Schur transform is 

\begin{equation}
    \sum_{k=1}^{n-1}\left\lceil log\left(\left\lfloor\frac{n+3-k}{2}\right\rfloor\right)\right\rceil + \left\lceil log(n+1)\right\rceil
\end{equation}

\noindent while the total number of qubits used in the modified algorithm that reduces the number of ancillas to logarithmic is

\begin{equation}
    n - 3 + \left\lceil log\left(\left\lfloor\frac{n}{2}\right\rfloor\right)\right\rceil + 2\left\lceil log\left(\left\lfloor\frac{n+1}{2}\right\rfloor\right)\right\rceil + \left\lceil log\left(\left\lfloor\frac{n+2}{2}\right\rfloor\right)\right\rceil + \left\lceil log\left(n+1\right)\right\rceil.
\end{equation}

\noindent Table \ref{qubitCount} contains counts for the total number of qubits used by both the original algorithm (Algorithms \ref{premapSchur} and \ref{couplingSchur}) for the Schur transform as well as the modified version of this section, for low $n$.

\renewcommand{\arraystretch}{1.5}
\begin{table}[H]

\centering
\begin{tabular}{c|c|c||c|c|c}
$n$ & Original & Modified & $n$ & Original & Modified\\\hline
     4 & 7 & 9 & 10 & 23 & 23\\
     5 & 9 & 12 & 11 & 26 & 24\\
     6 & 11 & 14 & 12 & 29 & 25\\
     7 & 13 & 15 & 13 & 32 & 26\\
     8 & 17 & 18 & 14 & 35 & 27\\
     9 & 20 & 21 & 15 & 38 & 28\\
		   
\end{tabular}
\caption{The total number of qubits used in the original algorithm for the Schur transform (Algorithms \ref{premapSchur} and \ref{couplingSchur}) as well as the modified version of this section, for low $n$. Notice that (other than an unimportant exception at $n=2$), for $n < 10$ we are better off with the original algorithm than the modification, but the modification is better asymptotically.}\label{qubitCount}
\end{table}

\section{CT-Preserving Gates}\label{CTgates}

In this section, we aim to highlight how subtle is the notion of computational tractability and, in particular, the notion of a unitary gate that `preserves CTness'. Such a set might be considered a natural notion of a `unifying classical gate set' for the study of classical simulation of quantum computations. The notion was alluded to in \cite{VdNProbMethods} but not written out as a definition. Such a definition might be written as the following.

\begin{definition}
Let $U$ be a gate acting on $k$ qubits. $U$ is called CT-preserving if, for every $n \geq k$, for every CT state $\ket{\psi}$ on $n$ qubits, $U\ket{\psi}$ is CT, where $U$ may be applied to any of the $k$ qubits of $\ket{\psi}$ (with the identity applied implicitly on all the other qubits).\label{CTpreserving}
\end{definition}

\noindent Immediately, we must note that this could never be a wholly rigorous definition, because a CT `state' really denotes an infinite family of states, whereas we refer to individual states here. However, it is common to do this in the literature and indeed \cite{VdNProbMethods} states explicitly that a CT `state' is to be taken as a state family, even when not stated explicitly. One might therefore reasonably hope that the above definition provides a well-defined notion under this unwritten assumption. Working under this assumption, we will still see that this fairly straightforward-looking definition cannot be a good one.

We start by identifying some gates that were already identified as CT-preserving in Lemma 2 of \cite{VdNProbMethods}.

\begin{definition}
Let $U$ be a gate on $n$ qubits. $U$ is called basis-preserving if for each $n$-qubit computational basis state $\ket{x}$, $U\ket{x}$ is equal to $e^{i\theta(x)}\ket{y}$ for some phase $\theta(x)$ and computational basis state $\ket{y}$.
\end{definition}

\begin{lemma}

Efficiently computable basis-preserving operations\footnote{Describing a $k$-qubit basis-preserving operation as efficiently computable means that for each $x$, $\theta(x)$ and $y$ can be classically computed in $poly(k)$ time.} are CT-preserving (\cite{VdNProbMethods}).\label{basops}
\end{lemma}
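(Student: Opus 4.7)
The plan is to verify both conditions of Definition \ref{CTdef} for $U\ket{\psi}$, namely efficient amplitude computation and efficient sampling, by reducing each back to the corresponding oracle for the underlying CT state $\ket{\psi}$. Write $U\ket{x} = e^{i\theta(x)}\ket{\sigma(x)}$ for all $x \in \{0,1\}^k$, where $\sigma:\{0,1\}^k \to \{0,1\}^k$ is a bijection (forced by unitarity of $U$) and, by hypothesis, both $\sigma(x)$ and $\theta(x)$ can be computed in $\mathrm{poly}(k)$ time. Without loss of generality assume $U$ acts on the first $k$ qubits, so for $x = (x_1, x_2)$ with $x_1 \in \{0,1\}^k$ and $x_2 \in \{0,1\}^{n-k}$ we have $(U \otimes I)\ket{x_1, x_2} = e^{i\theta(x_1)}\ket{\sigma(x_1), x_2}$.

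For amplitude computation, fix a target $y = (y_1, y_2)$ and expand
\begin{equation}
\braket{y|(U \otimes I)|\psi} = \sum_{x_1, x_2} e^{i\theta(x_1)} \delta_{\sigma(x_1), y_1} \delta_{x_2, y_2} \braket{x_1, x_2|\psi} = e^{i\theta(\sigma^{-1}(y_1))} \braket{\sigma^{-1}(y_1), y_2|\psi}.
\end{equation}
The right-hand side involves a single amplitude of $\ket{\psi}$, which is efficiently computable by assumption, together with a single phase. So the step reduces to computing $\sigma^{-1}(y_1)$. I would obtain this by brute enumeration: iterate over all $x_1 \in \{0,1\}^k$, compute $\sigma(x_1)$ in $\mathrm{poly}(k)$ time, and return the unique $x_1$ matching $y_1$. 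This costs $2^k \cdot \mathrm{poly}(k)$, which is polynomial in $n$ in the intended regime where $U$ acts on a constant (or at worst $O(\log n)$) number of qubits, as is standard for the gates considered in \cite{VdNProbMethods}.

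For sampling, the key observation is that
\begin{equation}
\bigl|\braket{(y_1, y_2)|(U \otimes I)|\psi}\bigr|^2 = \bigl|\braket{(\sigma^{-1}(y_1), y_2)|\psi}\bigr|^2,
\end{equation}
so the output distribution is just the pushforward of $\{|\braket{x|\psi}|^2\}$ under the bijection $(x_1, x_2) \mapsto (\sigma(x_1), x_2)$. I would therefore draw $x = (x_1, x_2) \sim \{|\braket{x|\psi}|^2\}$ using the sampling oracle guaranteed by CTness of $\ket{\psi}$, and then return $(\sigma(x_1), x_2)$. This uses only the forward direction of $\sigma$, so it is immediate from efficient computability of the basis-preserving operation.

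The one delicate point is the inversion of $\sigma$ for the amplitude step: in general, even a bijection that is computable in $\mathrm{poly}(k)$ time need not have its inverse computable in $\mathrm{poly}(k)$ time. For the lemma to go through as stated one therefore relies on $k$ being small enough that enumeration is polynomial, or equivalently on regarding the gate $U$ itself as being of ``local'' size with respect to $n$. The sampling half of the proof is entirely free of this issue, so the only real obstacle is cleanly stating the regime of $k$ in which enumeration of $\sigma^{-1}$ is efficient; once that is fixed, both CT conditions follow directly from the corresponding conditions on $\ket{\psi}$.
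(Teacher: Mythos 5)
Your proof is correct and follows essentially the same route as the paper's: reduce the amplitude $\braket{y|U|\psi}$ to a single amplitude of $\ket{\psi}$ via the permutation-with-phases structure of $U$, and sample by pushing the distribution $\{|\braket{x|\psi}|^2\}$ forward through that permutation. Your explicit remark that computing $\sigma^{-1}$ may require brute enumeration over $2^k$ inputs (harmless for constant or logarithmic $k$) is a point the paper's proof glosses over, and is a worthwhile clarification rather than a divergence.
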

\begin{proof}
Let $U$ be a $k$-qubit efficiently computable basis-preserving operation and let $U$ act on any $k$ qubits of the computationally tractable $n$-qubit state $\ket{\psi}$. Then, for an $n$-qubit computational basis state $\ket{x}$, $\braket{x|U|\psi} = e^{i\theta(y)}\braket{y|\psi}$, where $U\ket{y} = e^{i\theta(y)}\ket{x}$, and we see $e^{i\theta(y)}\braket{y|\psi}$ may be efficiently, classically computed. Then, to sample from the set $\{|\braket{x|U|\psi}|^2 : x \in B_n\}$, we simply sample from the set $\{|\braket{y|\psi}|^2 : y \in B_n\}$ to get a sample $\hat{y}$ and conclude the answer $\hat{x}$, where $U\ket{\hat{y}} = e^{i\theta(\hat{y})}\ket{\hat{x}}$, because $|\braket{\hat{y}|\psi}|^2$ = $|\braket{\hat{x}|U|\psi}|^2$. 
\end{proof}

\noindent In some sense, the above gates are CT-preserving in a very natural way. We will now show that the Hadamard gate is also CT-preserving, although in a less natural way.

\begin{lemma}
The Hadamard gate is CT-preserving.\label{hadamard}
\end{lemma}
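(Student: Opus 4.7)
The plan is to verify the two defining properties of computational tractability (Definition \ref{CTdef}) for the state $H_i \ket{\psi}$, where $\ket{\psi}$ is a CT state on $n$ qubits and $H_i$ is the Hadamard applied to qubit $i$. Throughout, write $x^{(0)}$ and $x^{(1)}$ for the bit strings obtained from $x \in \{0,1\}^n$ by fixing the $i$-th coordinate to $0$ and $1$ respectively.

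For the overlap computation, I would use the identity
\begin{equation}
\braket{x | H_i | \psi} = \tfrac{1}{\sqrt{2}} \bigl( \braket{x^{(0)} | \psi} + (-1)^{x_i} \braket{x^{(1)} | \psi} \bigr).
\end{equation}
Since $\ket{\psi}$ is CT, each of the two overlaps on the right-hand side is classically computable in $\text{poly}(n)$ time, and hence so is $\braket{x | H_i | \psi}$. This immediately dispatches condition~2 of Definition \ref{CTdef}.

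The main obstacle is condition~1 — efficient sampling from $\{|\braket{x | H_i \psi}|^2\}$. The key observation is that a single-qubit unitary on qubit $i$ does not change the reduced state on the remaining $n-1$ qubits: tracing out qubit $i$ commutes with the action of $H_i$, so the marginal distribution of the bits $x_{\neq i}$ under $H_i \ket{\psi}$ coincides with the marginal under $\ket{\psi}$ itself. Therefore I would sample in two stages. First, draw a sample $y$ from the CT distribution of $\ket{\psi}$ and discard $y_i$; this yields correctly distributed bits $y_{\neq i}$. Second, sample $x_i \in \{0,1\}$ from the conditional distribution of qubit $i$ given $y_{\neq i}$ in the state $H_i\ket{\psi}$.

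To make the second stage efficient, I would compute the conditional probability as
\begin{equation}
P(x_i = b \mid y_{\neq i}) = \frac{\bigl|\braket{y_{\neq i},\, b \,|\, H_i \psi}\bigr|^2}{\bigl|\braket{y_{\neq i},\,0\,|\,\psi}\bigr|^2 + \bigl|\braket{y_{\neq i},\,1\,|\,\psi}\bigr|^2},
\end{equation}
where I have used that the marginal denominator is preserved by $H_i$ and equals the sum over the two values of qubit $i$ of squared overlaps with $\ket{\psi}$. The numerator is obtained from the overlap formula of the first paragraph, and both quantities are efficiently classically computable from the CT property of $\ket{\psi}$. Flipping a biased coin with this explicit probability yields $x_i$. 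Setting $x = (y_{\neq i}, x_i)$, the overall probability of outputting $x$ is precisely $P(y_{\neq i}) \cdot P(x_i \mid y_{\neq i}) = |\braket{x | H_i \psi}|^2$, as required. Since both stages run in $\text{poly}(n)$ time, $H_i\ket{\psi}$ is CT, completing the proof.
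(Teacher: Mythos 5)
Your proof is correct and follows essentially the same route as the paper: the same overlap identity for $\braket{x|H_i|\psi}$, followed by the same two-stage sampling procedure (sample from $\ket{\psi}$'s distribution, then resample qubit $i$ with explicitly computable conditional probabilities). Your framing via preservation of the marginal on the other $n-1$ qubits is just a slightly more explicit way of stating why the conditional probabilities normalise correctly, which the paper instead verifies directly by the law of total probability.
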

\begin{proof}
Suppose that we act with the Hadamard gate on the i-th qubit of the $n$-qubit computationally tractable state $\ket{\psi}$. We then find that

\begin{equation}
    \braket{x|H_i|\psi} = \frac{\braket{x|\psi}+(-1)^{x_i}\braket{x'|\psi}}{\sqrt{2}} \label{HadamardOverlap}
\end{equation}
where $x'$ is simply the bit string $x$ with the i-th bit flipped. Because we can classically, efficiently compute both $\braket{x|\psi}$ and $\braket{x'|\psi}$, we can classically, efficiently compute $\braket{x|H_i|\psi}$. We can also show that it is possible to classically and efficiently sample from $\{|\braket{x|H_i|\psi}|^2: x \in B_n\}$ in a way that is very similar to the core idea of \cite{bravyi2022simulate}. We start by sampling from $\{|\braket{x|\psi}|^2: x \in B_n\}$ to obtain an outcome $\hat{x}$. We then sample our final answer $y$ from the set $\{\hat{x}, \hat{x}'\}$ over the distribution

\begin{equation}
    p(y) = \frac{|\braket{y|H_i|\psi}|^2}{|\braket{\hat{x}|\psi}|^2 + |\braket{\hat{x}'|\psi}|^2}
\end{equation}
which is something that we are able to do classically and efficiently because we can classically and efficiently calculate the two probabilities. If we look at this sampling procedure as a whole, in order to obtain some bit string $y$, we must first obtain either $y$ or $y'$ when we sample from the set $\{|\braket{x|\psi}|^2\}$. Given this, and using the law of total probability, we find that the probability that the whole sampling algorithm samples $y$ is

\begin{equation}
    |\braket{y|\psi}|^2\frac{|\braket{y|H_i|\psi}|^2}{|\braket{y|\psi}|^2 + |\braket{y'|\psi}|^2} + |\braket{y'|\psi}|^2\frac{|\braket{y|H_i|\psi}|^2}{|\braket{y|\psi}|^2 + |\braket{y'|\psi}|^2} = |\braket{y|H_i|\psi}|^2
\end{equation}
and so we are indeed sampling from the desired distribution. This completes the proof.
\end{proof}

\noindent This is somewhat alarming at first, because, from Lemma \ref{basops}, we know that the T-gate, $\begin{pmatrix} 1 & 0 \\ 0 & e^{i\frac{\pi}{4}}\end{pmatrix}$, and CNOT are both CT-preserving. This means that the gates from the universal set $\{H, T, CNOT\}$ are all CT-preserving operations, and so it appears, under this definition, that any unitary gate may be approximated arbitrarily well by a CT-preserving gate. This is alarming because one might see this and naively conclude that all quantum states may be arbitrarily well-approximated by a CT state, which would suggest the possibility of efficient, classical simulation of all quantum computations. 

However, what saves us is the fact that we have not shown that the Hadamard gate may be applied polynomially many times to a CT state to leave a CT state. In fact, we can see in our proof where this difficulty comes in. We can see from Equation \eqref{HadamardOverlap} that each time we apply a Hadamard gate, the overlaps with computational basis states take at least twice as long to compute, in general. We therefore cannot do this any polynomial number of times. Conversely, re-examining the proof of the CT-preservation of efficiently computable basis-preserving operations, we can see that applying such a gate to a CT state, even a polynomial number of times, will leave a CT state. This separation between the `naturally' CT-preserving gates, which may be applied a polynomial number of times to preserve CTness, and the `unnatural' gates, for which this appears possible for some constant number of applications, but impossible for a general polynomial number, shows that Definition \ref{CTpreserving} is insufficient. Indeed, the idea of a gate set that preserves some set of states after a certain number of applications, but not after a certain other, greater number of applications, is nonsensical. Greater care must therefore be taken in defining a notion of a CT-preserving gate.

Even so, Lemma \ref{basops} does provide a method to simulate the application of polynomially many efficiently computable basis-preserving operations on some state from a CT state family with only a polynomial increase in the computational run time (i.e. the run time to compute computational basis overlaps and to sample from the corresponding distribution), while Lemma \ref{hadamard} does allow for the efficient simulation of a single application of a Hadamard gate to a state from a CT state family with constant overhead in the run time. Moreover, Lemma \ref{hadamard} can easily be generalised to any one qubit gate. This naturally begs the question of which gates are expected to be applicable to a state from a CT state family a polynomial number of times to only incur a polynomial increase in the relevant computational run times. We do not expect this to be any unitary gate - this would imply that we could classically and efficiently calculate an overlap of any state in the computational basis and sample from the corresponding distribution. We therefore also do not expect this set of gates to be universal. Under this reasonable complexity assumption, the question is answered for one qubit gates in the following theorem as being nothing more than the set of basis-preserving gates. We conjecture that the full set is just the basis-preserving gates also.

\begin{theorem}
Let B be the set of basis-preserving unitary gates. The set of one-qubit gates from outside B that may be combined with B to not produce a universal set is empty.
\end{theorem}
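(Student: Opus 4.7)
The plan is to show that for every single-qubit $U \notin B$, the closure of the group generated by $B \cup \{U\}$ contains a dense subgroup of $SU(2)$ on any single qubit, which together with $\mathrm{CNOT} \in B$ yields approximate universality on $n$ qubits. My first step is to pin down the single-qubit elements of $B$: a one-qubit gate is basis-preserving iff it is diagonal or anti-diagonal with unimodular entries. Using the unitarity identities $|a|^2+|c|^2 = |a|^2+|b|^2 = 1$ and their column analogues I conclude that $|a|=|d|$ and $|b|=|c|$; hence $U \notin B$ forces all four entries of $U = \begin{pmatrix} a & b \\ c & d \end{pmatrix}$ to be nonzero, and I may write $|a|=|d|=\cos\theta$, $|b|=|c|=\sin\theta$ for a unique $\theta \in (0, \pi/2)$. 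Writing $a = e^{i\alpha_1}\cos\theta$, $b = e^{i\alpha_2}\sin\theta$, $c = e^{i\alpha_3}\sin\theta$, $d = e^{i\alpha_4}\cos\theta$, the off-diagonal relation $a^*b + c^*d = 0$ reduces to the single phase constraint $\alpha_1 + \alpha_4 \equiv \alpha_2 + \alpha_3 + \pi \pmod{2\pi}$.

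My second step is to normalise $U$ using the diagonal phase gates in $B$. Since every $\mathrm{diag}(e^{i\phi_1}, e^{i\phi_2})$ is basis-preserving, the composite $D_1 U D_2$ is always in the group generated by $B \cup \{U\}$. A direct calculation, consistent precisely because of the phase constraint above, shows that $D_1, D_2$ can be chosen so that $D_1 U D_2 = \begin{pmatrix} \cos\theta & \sin\theta \\ -\sin\theta & \cos\theta \end{pmatrix} = R_y(-2\theta)$, up to a global phase (itself in $B$). Thus $R_y(2\theta) \in \langle B \cup \{U\} \rangle$ for some angle with $2\theta \notin \pi\mathbb{Z}$, since $\theta \in (0, \pi/2)$.

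My third step is the density argument. Because $B$ contains every diagonal phase gate, it contains the full one-parameter family $\{R_z(\varphi) : \varphi \in \mathbb{R}\}$, i.e., the entire $z$-axis maximal torus of $SU(2)$. The closure of the subgroup generated by this torus together with $R_y(2\theta)$ is a closed, hence Lie, subgroup of $SU(2)$ by Cartan's theorem, and is connected since it already contains a continuous one-parameter family. The classification of connected closed subgroups of $SU(2)$ then forces the closure to be either a maximal torus or all of $SU(2)$; a maximal torus is ruled out since $R_y(2\theta)$ with $2\theta \notin \pi\mathbb{Z}$ does not lie in, nor commute with, the $z$-axis torus. Hence the single-qubit gates generated by $B \cup \{U\}$ are dense in $SU(2)$. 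Combining with $\mathrm{CNOT} \in B$, the standard universality theorem (any entangling two-qubit gate plus a dense family of single-qubit unitaries approximately generates $U(2^n)$) completes the argument.

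The main obstacle is the density step: while it reads quickly via Cartan's theorem, one must verify carefully that $R_y(2\theta)$ for $2\theta \notin \pi\mathbb{Z}$ genuinely leaves the $z$-axis torus and does not merely normalise it. The elementary alternative is to produce rotations by arbitrarily small angles directly: conjugating $R_y(2\theta)$ by $R_z(\varphi)$ yields rotations by $2\theta$ about a continuously varying family of axes in the $xy$-plane, and composing a suitable pair of these produces a rotation whose angle depends continuously on $\varphi$ and attains arbitrarily small positive values, from which denseness in $SU(2)$ is then extracted explicitly.
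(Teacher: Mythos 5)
Your proof is correct, but it takes a genuinely different route from the paper's. The paper works entirely with Bloch-sphere rotations: it observes that $\hat z$-rotations are basis-preserving, composes the given gate (written as $R_{\hat m}(\phi)$) with $R_{\hat z}(\theta)$ for a generic $\theta$, and argues via a counting trick — $2\cos(\tilde\theta/2)$ is an algebraic integer whenever $\tilde\theta$ is a rational angle, algebraic integers are countable, and the resulting $\cos(\tilde\theta/2)$ is a non-constant continuous function of $\theta$ — that some choice of $\theta$ yields a rotation by an irrational angle about an axis other than $\hat z$, which is then universal by the standard repeated-application argument. It must separately dispose of the degenerate case $\phi=\pi$, $\hat m$ equatorial (which turns out to be exactly the anti-diagonal basis-preserving gates) and of the possibility $\tilde n=\hat z$. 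You instead characterise the one-qubit members of $B$ as the diagonal and anti-diagonal unimodular matrices, use the diagonal ones to normalise any $U\notin B$ to $R_y(-2\theta)$ with $2\theta\in(0,\pi)$ — the phase bookkeeping here is consistent precisely because of the unitarity constraint you identify — and then obtain density from the full $z$-torus plus one non-normalising rotation. This avoids the algebraic-integer machinery altogether and absorbs the paper's exceptional cases into the classification of $B$, which is arguably cleaner; the paper's argument, by contrast, needs no structure theory and is closer to an explicit recipe. One caution on your primary density argument: the closure of $\langle B\cup\{U\}\rangle\cap SU(2)$ need not be connected merely because it contains a one-parameter subgroup (the normaliser of the torus is a counterexample to that inference), so the classification of \emph{connected} closed subgroups does not apply directly; the correct statement is that the identity component contains the $z$-torus and is normalised by $R_y(2\theta)$, which fails to normalise the torus when $2\theta\notin\pi\mathbb{Z}$, forcing the identity component to be all of $SU(2)$. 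You flag exactly this point as the main obstacle and your elementary fallback — conjugating $R_y(2\theta)$ by $z$-rotations to sweep the axis through the $xy$-plane and composing a pair of such rotations to reach arbitrarily small rotation angles by continuity — is sound and closes the gap.
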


\begin{proof}
We consider rotations of the Bloch sphere:

\begin{equation}
    R_{\hat{n}}(\theta) \coloneqq e^{-i\theta\hat{n}\cdot\vec{\sigma}/2} = \cos(\theta/2)I - i\sin(\theta/2)\hat{n}\cdot\vec{\sigma}
\end{equation}

\noindent where $\vec{\sigma} = (\sigma_x, \sigma_y, \sigma_z)$. From \cite{nielsen2002quantum}, we find that one-qubit gates and CNOT are universal, and so our aim is simply to determine the one-qubit gates which, when combined with the set of one-qubit basis preserving gates, form a universal set for one-qubit unitaries. We note that contained within the set of one-qubit basis preserving gates are $\hat{z}$ rotations:

\begin{equation}
    R_{\hat{z}}(\theta) = e^{-i\theta Z/2} = \begin{pmatrix}e^{-i\theta/2} & 0 \\ 0 & e^{i\theta/2}\end{pmatrix}.
\end{equation}
From the errata of \cite{nielsen2002quantum}, we find that given any two non-parallel three-dimensional unit vectors $\hat{n}$ and $\hat{m}$, any one-qubit unitary $U$ can be written, up to a global phase, as a finite product of rotations about the $\hat{n}$ and $\hat{m}$ axis. We also find from \cite{nielsen2002quantum} that, given some rotation $R_{\hat{n}}(\theta)$, where $\theta$ is an irrational angle\footnote{An irrational angle is an irrational multiple of $2\pi$.}, repeated application of $R_{\hat{n}}(\theta)$ can be used to approximate $R_{\hat{n}}(\phi)$ arbitrarily well for any $\phi \in [0, 2\pi)$.

From these facts, we may immediately conclude that any one-qubit unitary that is, up to a global phase, a rotation by an irrational angle about an axis other than $\hat{z}$ will be universal when combined with B. Any rotation about $\hat{z}$ is in B and so it only remains to ask which rotations through rational angles about axes other than $\hat{z}$ are universal for one-qubit unitaries when combined with the set of one-qubit basis-preserving gates. Let $\hat{m} \neq \hat{z}$ and $\phi \in [0,2\pi)$ be a rational angle. We compute

\begin{multline}
    R_{\hat{z}}(\theta)R_{\hat{m}}(\phi) = \left(\cos\left(\frac{\theta}{2}\right)\cos\left(\frac{\phi}{2}\right)-\sin\left(\frac{\theta}{2}\right)\sin\left(\frac{\phi}{2}\right)\left(\hat{z}\cdot\hat{m}\right)\right)I \\- i\left(\sin\left(\frac{\theta}{2}\right)\cos\left(\frac{\phi}{2}\right)\hat{z} + \cos\left(\frac{\theta}{2}\right)\sin\left(\frac{\phi}{2}\right)\hat{m}+\sin\left(\frac{\theta}{2}\right)\sin\left(\frac{\phi}{2}\right)\hat{z}\times\hat{m}\right)\cdot\vec{\sigma}
\end{multline}
which we set equal to $R_{\tilde{n}}(\tilde{\theta})$, where 

\begin{equation}
    \cos\left(\frac{\tilde{\theta}}{2}\right) = \cos\left(\frac{\theta}{2}\right)\cos\left(\frac{\phi}{2}\right) - \sin\left(\frac{\theta}{2}\right)\sin\left(\frac{\phi}{2}\right)\left(\hat{z}\cdot\hat{m}\right)\label{newangle}
\end{equation}
The following is now useful for us (\hspace{1sp}\cite{stackexchange}):
\begin{claim}
$x$ is a rational multiple of $\pi$ $\implies$ $2\cos(x)$ is an algebraic integer.
\end{claim}
\begin{proof}
$x$ is a rational multiple of $\pi \implies (\cos(x)+i\sin(x))^n = 1$ for some $n \in \mathbb{N}$ and so $\cos(x) + i\sin(x)$ is an algebraic integer. Similarly, $\cos(x) - i\sin(x)$ is an algebraic integer. Consequently, $2\cos(x)$ is an algebraic integer.
\end{proof}
\noindent Therefore, if we can show that for some $\theta$, $2\cos\left(\frac{\tilde{\theta}}{2}\right)$ is not an algebraic integer, then $\tilde{\theta}$ is not a rational angle and, as long as $\tilde{n} \neq \hat{z}$, we are done. Algebraic integers are countable, and so it suffices to prove that the right-hand side of Equation \eqref{newangle} is non-constant as a function of $\theta$. It can only be constant if $\cos\left(\frac{\phi}{2}\right) = \hat{z}\cdot\hat{m} = 0$ which is if and only if $\phi = \pi$ and $\hat{m}$ lies on the equator.

In this case, $R_{\hat{m}}(\phi) = R_{\hat{m}}(\pi) = -i(m_1X + m_2Y)$ where $m_1^2 + m_2^2 = 1$. Up to a global phase,

\begin{align}
   R_{\hat{m}}(\phi) &= \begin{pmatrix} 0 & m_1 - im_2 \\ m_1 + im_2 & 0 \end{pmatrix}\\
   &= \begin{pmatrix} 0 & \cos\delta - i\sin\delta \\ \cos\delta + i\sin\delta & 0 \end{pmatrix}\\
   &= \begin{pmatrix} 0 & e^{-i\delta} \\ e^{i\delta} & 0 \end{pmatrix}
\end{align}

\noindent for some $\delta$. These are basis-preserving gates (and indeed, when global phases are re-introduced, these are exactly the one-qubit basis-preserving gates with off-diagonal elements).

In the case that $\phi \neq \pi$ or $\hat{z}\cdot\hat{m} \neq 0$, it remains to address the possibility that $\tilde{n} = \hat{z}$. For this, it suffices to look at the x-component of the unit vector $\tilde{n}$ which we can extract from Equation \eqref{newangle}. In order for it to be the case that $\tilde{n} = \hat{z}$, we must have

\begin{equation}
    \cos\left(\frac{\theta}{2}\right)\sin\left(\frac{\phi}{2}\right)m_1 - \sin\left(\frac{\theta}{2}\right)\sin\left(\frac{\phi}{2}\right)m_2 = 0.
\end{equation}

\noindent Assuming that the gate being added is not the identity, $\phi \neq 0$, and so $\sin\left(\frac{\phi}{2}\right)$ may be divided out of this equation. This equation is then solved only by a finite set of values $\theta \in [0, 2\pi)$. Since $\theta$ varies continuously, we conclude that as long as $\phi \neq 0 $ or $\hat{z}\cdot\hat{m} \neq 0$, we may pick a $\theta$ that makes $\tilde{\theta}$ an irrational angle and $\tilde{n} \neq \hat{z}$. Thus, given any one-qubit unitary that is not basis-preserving, it may be combined with some element of B to give a rotation of the Bloch sphere through an irrational angle about an axis other than $\hat{z}$, implying universality.
\end{proof}

\begin{conjecture}

\noindent Let $U$ be any unitary gate on $n$ qubits that is not basis-preserving. $U$ combined with the set of basis-preserving gates is universal.

\end{conjecture}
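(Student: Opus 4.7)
The plan is to show directly that the closed subgroup $G \leq U(2^n)$ topologically generated by $\{U\}\cup B$ is all of $U(2^n)$; once this is established, universality follows because $B$ already contains CNOT and the standard Nielsen--Chuang decomposition applies. The approach is Lie-theoretic: I identify $B$ with the normaliser of the diagonal torus in $U(2^n)$ and use transitivity of the Weyl-group action on roots to pin down the possible $B$-invariant Lie subalgebras of $u(2^n)$.

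First, recognise $B$ as the group of monomial unitaries. Every basis-preserving gate is, by definition, a permutation of the $2^n$ computational basis states composed with a diagonal phase, and, conversely, every permutation matrix of size $2^n$ (not merely those obtained from qubit-level swaps) is itself unitary and trivially basis-preserving. Therefore $B = N_{U(2^n)}(T)$, where $T$ is the maximal torus of diagonal unitaries; in particular $\operatorname{Lie}(B) = \operatorname{Lie}(T)$ is the standard Cartan subalgebra, and $B$ contains a lift of the full Weyl group $S_{2^n}$.

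Set $\mathfrak g := \operatorname{Lie}(G)$. Since $T \subset B \subset G$, $\operatorname{Lie}(T) \subseteq \mathfrak g$, and $G$-invariance under the adjoint action makes $\mathfrak g$ invariant under both $T$ and the Weyl group. Using the root-space decomposition $u(2^n) = \operatorname{Lie}(T) \oplus \bigoplus_\alpha \mathfrak g_\alpha$ together with the fact that $T$ acts non-trivially by rotations on each two-dimensional real root space $\mathfrak g_\alpha$, any $T$- and Weyl-invariant subspace containing $\operatorname{Lie}(T)$ must have the form $\operatorname{Lie}(T) \oplus \bigoplus_{\alpha\in S}\mathfrak g_\alpha$ for some Weyl-stable set $S$ of roots. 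Because $S_{2^n}$ acts transitively on the roots $\{e_i - e_j : i \ne j\}$ of $u(2^n)$, $S$ is either empty or the full root system, so $\mathfrak g$ equals $\operatorname{Lie}(T)$ or $u(2^n)$. In the former case the identity component $G^0$ is a connected Lie subgroup with Lie algebra $\operatorname{Lie}(T)$ and hence equals $T$; since $G^0$ is always normal in $G$, this forces $G \subseteq N_{U(2^n)}(T) = B$, contradicting $U \notin B$. Therefore $\mathfrak g = u(2^n)$ and, since $U(2^n)$ is connected, $G = U(2^n)$.

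The main obstacle is delicate rather than deep: the argument relies crucially on $B$ containing \emph{all} $2^n!$ basis-state permutations, because without the full Weyl group one loses transitivity on the roots and the classification step collapses. A secondary subtlety is that when $U$ has finite order the naive move of placing $-i \log U$ directly inside $\mathfrak g$ is not valid; the formulation above sidesteps this by working with the closed subgroup $G$ throughout and invoking the identity-component / normaliser step to obtain the contradiction in the $\mathfrak g = \operatorname{Lie}(T)$ branch.
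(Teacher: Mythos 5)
The statement you are addressing is left as a \emph{conjecture} in the paper: the authors prove only the single-qubit case (the theorem immediately preceding it), by an explicit Bloch-sphere computation with rational versus irrational rotation angles, and give no argument for general $n$. So there is no proof of record to compare against, and your proposal has to stand on its own merits --- which, as far as I can tell, it does. The two pillars are both sound. First, under the paper's definition a basis-preserving $n$-qubit gate is precisely a monomial unitary, so the $n$-qubit part of $B$ is all of $N_{U(2^n)}(T)$; in particular it contains the full $2^n$-dimensional diagonal torus and all $(2^n)!$ permutation matrices, not merely those generated by local gates. You are right to flag that the argument leans on this, but it is consistent with the paper's definition (which places no locality or efficiency restriction on basis-preserving gates). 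Second, the classification of $\operatorname{Ad}(G)$-invariant subspaces containing $\mathfrak{t}$ works because the real root spaces $V_{ij}$ of $u(2^n)$ are pairwise non-isomorphic irreducible real $T$-representations, so an invariant subspace is the direct sum of its intersections with $\mathfrak{t}$ and with the individual $V_{ij}$ (no ``diagonal'' invariant subspaces can occur), and $S_{2^n}$ acts transitively on the unordered pairs $\{i,j\}$. Your escape from the $\mathfrak{g}=\mathfrak{t}$ branch via $G^0\trianglelefteq G$ and $N(T)=B$ correctly avoids any appeal to $-i\log U\in\mathfrak{g}$, which is the right way to handle finite-order $U$. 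Two small steps deserve an explicit sentence: the passage from density in $U(2^n)$ to universality on $m>n$ qubits (for $n\geq 2$ you obtain all two-qubit gates on arbitrary pairs; for $n=1$ you combine dense $U(2)$ with the basis-preserving CNOT), and the observation that the centraliser of $T$ in $u(2^n)$ is exactly $\mathfrak{t}$, so the trivial isotypic component contributes nothing beyond $\mathfrak{t}$. Compared with the paper's $n=1$ proof, your route is less elementary but strictly more general, and it also explains why the irrational-angle computation there succeeds: it is a hands-on substitute for the fact that $N(T)$ is a maximal closed subgroup of $U(2)$, which is exactly what fails to generalise for non-simply-laced groups but holds for $U(2^n)$ by Weyl-group transitivity on the roots. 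I see no gap; modulo the two sentences above, this resolves the conjecture.
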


\noindent \textit{Acknowledgements}: The authors wish to thank Will Kirby and Hari Krovi for useful discussions on their implementations of the Schur transform. Gratitude is extended to Frank Verstraete for helpful suggestions regarding the section on generalised coupling. Thanks is also given to Nadish de Silva for helpful discussions on the section on CT-preservation. SS acknowledges support from the Royal Society University Research Fellowship and “Quantum Simulation Algorithms for Quantum Chromodynamics” grant (ST/W006251/1).

\bibliography{references}

\begin{thebibliography}{39}%
\makeatletter
\providecommand \@ifxundefined [1]{%
 \@ifx{#1\undefined}
}%
\providecommand \@ifnum [1]{%
 \ifnum #1\expandafter \@firstoftwo
 \else \expandafter \@secondoftwo
 \fi
}%
\providecommand \@ifx [1]{%
 \ifx #1\expandafter \@firstoftwo
 \else \expandafter \@secondoftwo
 \fi
}%
\providecommand \natexlab [1]{#1}%
\providecommand \enquote  [1]{``#1''}%
\providecommand \bibnamefont  [1]{#1}%
\providecommand \bibfnamefont [1]{#1}%
\providecommand \citenamefont [1]{#1}%
\providecommand \href@noop [0]{\@secondoftwo}%
\providecommand \href [0]{\begingroup \@sanitize@url \@href}%
\providecommand \@href[1]{\@@startlink{#1}\@@href}%
\providecommand \@@href[1]{\endgroup#1\@@endlink}%
\providecommand \@sanitize@url [0]{\catcode `\\12\catcode `\$12\catcode `\&12\catcode `\#12\catcode `\^12\catcode `\_12\catcode `\%12\relax}%
\providecommand \@@startlink[1]{}%
\providecommand \@@endlink[0]{}%
\providecommand \url  [0]{\begingroup\@sanitize@url \@url }%
\providecommand \@url [1]{\endgroup\@href {#1}{\urlprefix }}%
\providecommand \urlprefix  [0]{URL }%
\providecommand \Eprint [0]{\href }%
\providecommand \doibase [0]{https://doi.org/}%
\providecommand \selectlanguage [0]{\@gobble}%
\providecommand \bibinfo  [0]{\@secondoftwo}%
\providecommand \bibfield  [0]{\@secondoftwo}%
\providecommand \translation [1]{[#1]}%
\providecommand \BibitemOpen [0]{}%
\providecommand \bibitemStop [0]{}%
\providecommand \bibitemNoStop [0]{.\EOS\space}%
\providecommand \EOS [0]{\spacefactor3000\relax}%
\providecommand \BibitemShut  [1]{\csname bibitem#1\endcsname}%
\let\auto@bib@innerbib\@empty
\bibitem [{\citenamefont {Bravyi}\ \emph {et~al.}(2018)\citenamefont {Bravyi}, \citenamefont {Gosset},\ and\ \citenamefont {K{\"o}nig}}]{bravyi2018quantum}%
  \BibitemOpen
  \bibfield  {author} {\bibinfo {author} {\bibfnamefont {S.}~\bibnamefont {Bravyi}}, \bibinfo {author} {\bibfnamefont {D.}~\bibnamefont {Gosset}},\ and\ \bibinfo {author} {\bibfnamefont {R.}~\bibnamefont {K{\"o}nig}},\ }\bibfield  {title} {\bibinfo {title} {Quantum advantage with shallow circuits},\ }\href@noop {} {\bibfield  {journal} {\bibinfo  {journal} {Science}\ }\textbf {\bibinfo {volume} {362}},\ \bibinfo {pages} {308} (\bibinfo {year} {2018})}\BibitemShut {NoStop}%
\bibitem [{\citenamefont {Watts}\ and\ \citenamefont {Parham}(2023)}]{watts2023unconditional}%
  \BibitemOpen
  \bibfield  {author} {\bibinfo {author} {\bibfnamefont {A.~B.}\ \bibnamefont {Watts}}\ and\ \bibinfo {author} {\bibfnamefont {N.}~\bibnamefont {Parham}},\ }\bibfield  {title} {\bibinfo {title} {Unconditional quantum advantage for sampling with shallow circuits},\ }\href@noop {} {\bibfield  {journal} {\bibinfo  {journal} {arXiv preprint arXiv:2301.00995}\ } (\bibinfo {year} {2023})}\BibitemShut {NoStop}%
\bibitem [{\citenamefont {Gottesman}(1997)}]{gottesman1997stabilizer}%
  \BibitemOpen
  \bibfield  {author} {\bibinfo {author} {\bibfnamefont {D.}~\bibnamefont {Gottesman}},\ }\href@noop {} {\emph {\bibinfo {title} {Stabilizer codes and quantum error correction}}}\ (\bibinfo  {publisher} {California Institute of Technology},\ \bibinfo {year} {1997})\BibitemShut {NoStop}%
\bibitem [{\citenamefont {Jozsa}\ and\ \citenamefont {Miyake}(2008)}]{jozsa2008matchgates}%
  \BibitemOpen
  \bibfield  {author} {\bibinfo {author} {\bibfnamefont {R.}~\bibnamefont {Jozsa}}\ and\ \bibinfo {author} {\bibfnamefont {A.}~\bibnamefont {Miyake}},\ }\bibfield  {title} {\bibinfo {title} {Matchgates and classical simulation of quantum circuits},\ }\href@noop {} {\bibfield  {journal} {\bibinfo  {journal} {Proceedings of the Royal Society A: Mathematical, Physical and Engineering Sciences}\ }\textbf {\bibinfo {volume} {464}},\ \bibinfo {pages} {3089} (\bibinfo {year} {2008})}\BibitemShut {NoStop}%
\bibitem [{\citenamefont {Jordan}(2009)}]{jordan2009permutational}%
  \BibitemOpen
  \bibfield  {author} {\bibinfo {author} {\bibfnamefont {S.~P.}\ \bibnamefont {Jordan}},\ }\bibfield  {title} {\bibinfo {title} {Permutational quantum computing},\ }\href@noop {} {\bibfield  {journal} {\bibinfo  {journal} {arXiv preprint arXiv:0906.2508}\ } (\bibinfo {year} {2009})}\BibitemShut {NoStop}%
\bibitem [{\citenamefont {Havl{\'\i}{\v{c}}ek}\ and\ \citenamefont {Strelchuk}(2018)}]{havlicek2018quantum}%
  \BibitemOpen
  \bibfield  {author} {\bibinfo {author} {\bibfnamefont {V.}~\bibnamefont {Havl{\'\i}{\v{c}}ek}}\ and\ \bibinfo {author} {\bibfnamefont {S.}~\bibnamefont {Strelchuk}},\ }\bibfield  {title} {\bibinfo {title} {Quantum schur sampling circuits can be strongly simulated},\ }\href@noop {} {\bibfield  {journal} {\bibinfo  {journal} {Physical review letters}\ }\textbf {\bibinfo {volume} {121}},\ \bibinfo {pages} {060505} (\bibinfo {year} {2018})}\BibitemShut {NoStop}%
\bibitem [{\citenamefont {Havlicek}(2020)}]{HavlicekThesis}%
  \BibitemOpen
  \bibfield  {author} {\bibinfo {author} {\bibfnamefont {V.}~\bibnamefont {Havlicek}},\ }\emph {\bibinfo {title} {Studies in classical simulation vs quantum advantage for algorithms and communication}},\ \href@noop {} {Ph.D. thesis},\ \bibinfo  {school} {University of Oxford} (\bibinfo {year} {2020})\BibitemShut {NoStop}%
\bibitem [{\citenamefont {Marzuoli}\ and\ \citenamefont {Rasetti}(2005)}]{marzuoli2005computing}%
  \BibitemOpen
  \bibfield  {author} {\bibinfo {author} {\bibfnamefont {A.}~\bibnamefont {Marzuoli}}\ and\ \bibinfo {author} {\bibfnamefont {M.}~\bibnamefont {Rasetti}},\ }\bibfield  {title} {\bibinfo {title} {Computing spin networks},\ }\href@noop {} {\bibfield  {journal} {\bibinfo  {journal} {Annals of Physics}\ }\textbf {\bibinfo {volume} {318}},\ \bibinfo {pages} {345} (\bibinfo {year} {2005})}\BibitemShut {NoStop}%
\bibitem [{\citenamefont {Penrose}(1971)}]{penrose1971angular}%
  \BibitemOpen
  \bibfield  {author} {\bibinfo {author} {\bibfnamefont {R.}~\bibnamefont {Penrose}},\ }\bibfield  {title} {\bibinfo {title} {Angular momentum: an approach to combinatorial space-time},\ }\href@noop {} {\bibfield  {journal} {\bibinfo  {journal} {Quantum theory and beyond}\ ,\ \bibinfo {pages} {151}} (\bibinfo {year} {1971})}\BibitemShut {NoStop}%
\bibitem [{\citenamefont {Nest}(2009)}]{VdNProbMethods}%
  \BibitemOpen
  \bibfield  {author} {\bibinfo {author} {\bibfnamefont {M.}~\bibnamefont {Nest}},\ }\bibfield  {title} {\bibinfo {title} {Simulating quantum computers with probabilistic methods},\ }\href@noop {} {\bibfield  {journal} {\bibinfo  {journal} {arXiv preprint arXiv:0911.1624}\ } (\bibinfo {year} {2009})}\BibitemShut {NoStop}%
\bibitem [{\citenamefont {Zheng}\ \emph {et~al.}(2022)\citenamefont {Zheng}, \citenamefont {Li}, \citenamefont {Liu}, \citenamefont {Strelchuk},\ and\ \citenamefont {Kondor}}]{zheng2022super}%
  \BibitemOpen
  \bibfield  {author} {\bibinfo {author} {\bibfnamefont {H.}~\bibnamefont {Zheng}}, \bibinfo {author} {\bibfnamefont {Z.}~\bibnamefont {Li}}, \bibinfo {author} {\bibfnamefont {J.}~\bibnamefont {Liu}}, \bibinfo {author} {\bibfnamefont {S.}~\bibnamefont {Strelchuk}},\ and\ \bibinfo {author} {\bibfnamefont {R.}~\bibnamefont {Kondor}},\ }\bibfield  {title} {\bibinfo {title} {On the super-exponential quantum speedup of equivariant quantum machine learning algorithms with su ($ d $) symmetry},\ }\href@noop {} {\bibfield  {journal} {\bibinfo  {journal} {arXiv preprint arXiv:2207.07250}\ } (\bibinfo {year} {2022})}\BibitemShut {NoStop}%
\bibitem [{\citenamefont {Harrow}(2005)}]{harrow2005applications}%
  \BibitemOpen
  \bibfield  {author} {\bibinfo {author} {\bibfnamefont {A.~W.}\ \bibnamefont {Harrow}},\ }\bibfield  {title} {\bibinfo {title} {Applications of coherent classical communication and the schur transform to quantum information theory},\ }\href@noop {} {\bibfield  {journal} {\bibinfo  {journal} {arXiv preprint quant-ph/0512255}\ } (\bibinfo {year} {2005})}\BibitemShut {NoStop}%
\bibitem [{\citenamefont {Cirac}\ \emph {et~al.}(1999)\citenamefont {Cirac}, \citenamefont {Ekert},\ and\ \citenamefont {Macchiavello}}]{cirac1999optimal}%
  \BibitemOpen
  \bibfield  {author} {\bibinfo {author} {\bibfnamefont {J.~I.}\ \bibnamefont {Cirac}}, \bibinfo {author} {\bibfnamefont {A.}~\bibnamefont {Ekert}},\ and\ \bibinfo {author} {\bibfnamefont {C.}~\bibnamefont {Macchiavello}},\ }\bibfield  {title} {\bibinfo {title} {Optimal purification of single qubits},\ }\href@noop {} {\bibfield  {journal} {\bibinfo  {journal} {Physical review letters}\ }\textbf {\bibinfo {volume} {82}},\ \bibinfo {pages} {4344} (\bibinfo {year} {1999})}\BibitemShut {NoStop}%
\bibitem [{\citenamefont {Pauncz}(1967)}]{pauncz1967alternant}%
  \BibitemOpen
  \bibfield  {author} {\bibinfo {author} {\bibfnamefont {R.}~\bibnamefont {Pauncz}},\ }\href@noop {} {\emph {\bibinfo {title} {Alternant molecular orbital method}}},\ \bibinfo {number} {4}\ (\bibinfo  {publisher} {Saunders},\ \bibinfo {year} {1967})\BibitemShut {NoStop}%
\bibitem [{\citenamefont {Keyl}\ and\ \citenamefont {Werner}(2001)}]{keyl2001estimating}%
  \BibitemOpen
  \bibfield  {author} {\bibinfo {author} {\bibfnamefont {M.}~\bibnamefont {Keyl}}\ and\ \bibinfo {author} {\bibfnamefont {R.~F.}\ \bibnamefont {Werner}},\ }\bibfield  {title} {\bibinfo {title} {Estimating the spectrum of a density operator},\ }\href@noop {} {\bibfield  {journal} {\bibinfo  {journal} {Physical Review A}\ }\textbf {\bibinfo {volume} {64}},\ \bibinfo {pages} {052311} (\bibinfo {year} {2001})}\BibitemShut {NoStop}%
\bibitem [{\citenamefont {Knill}\ \emph {et~al.}(2000)\citenamefont {Knill}, \citenamefont {Laflamme},\ and\ \citenamefont {Viola}}]{knill2000theory}%
  \BibitemOpen
  \bibfield  {author} {\bibinfo {author} {\bibfnamefont {E.}~\bibnamefont {Knill}}, \bibinfo {author} {\bibfnamefont {R.}~\bibnamefont {Laflamme}},\ and\ \bibinfo {author} {\bibfnamefont {L.}~\bibnamefont {Viola}},\ }\bibfield  {title} {\bibinfo {title} {Theory of quantum error correction for general noise},\ }\href@noop {} {\bibfield  {journal} {\bibinfo  {journal} {Physical Review Letters}\ }\textbf {\bibinfo {volume} {84}},\ \bibinfo {pages} {2525} (\bibinfo {year} {2000})}\BibitemShut {NoStop}%
\bibitem [{\citenamefont {Kempe}\ \emph {et~al.}(2001)\citenamefont {Kempe}, \citenamefont {Bacon}, \citenamefont {Lidar},\ and\ \citenamefont {Whaley}}]{kempe2001theory}%
  \BibitemOpen
  \bibfield  {author} {\bibinfo {author} {\bibfnamefont {J.}~\bibnamefont {Kempe}}, \bibinfo {author} {\bibfnamefont {D.}~\bibnamefont {Bacon}}, \bibinfo {author} {\bibfnamefont {D.~A.}\ \bibnamefont {Lidar}},\ and\ \bibinfo {author} {\bibfnamefont {K.~B.}\ \bibnamefont {Whaley}},\ }\bibfield  {title} {\bibinfo {title} {Theory of decoherence-free fault-tolerant universal quantum computation},\ }\href@noop {} {\bibfield  {journal} {\bibinfo  {journal} {Physical Review A}\ }\textbf {\bibinfo {volume} {63}},\ \bibinfo {pages} {042307} (\bibinfo {year} {2001})}\BibitemShut {NoStop}%
\bibitem [{\citenamefont {Hayashi}(2002)}]{hayashi2002optimal}%
  \BibitemOpen
  \bibfield  {author} {\bibinfo {author} {\bibfnamefont {M.}~\bibnamefont {Hayashi}},\ }\bibfield  {title} {\bibinfo {title} {Optimal sequence of quantum measurements in the sense of stein's lemma in quantum hypothesis testing},\ }\href@noop {} {\bibfield  {journal} {\bibinfo  {journal} {Journal of Physics A: Mathematical and General}\ }\textbf {\bibinfo {volume} {35}},\ \bibinfo {pages} {10759} (\bibinfo {year} {2002})}\BibitemShut {NoStop}%
\bibitem [{\citenamefont {Bacon}\ \emph {et~al.}(2005)\citenamefont {Bacon}, \citenamefont {Chuang},\ and\ \citenamefont {Harrow}}]{bacon2005quantum}%
  \BibitemOpen
  \bibfield  {author} {\bibinfo {author} {\bibfnamefont {D.}~\bibnamefont {Bacon}}, \bibinfo {author} {\bibfnamefont {I.~L.}\ \bibnamefont {Chuang}},\ and\ \bibinfo {author} {\bibfnamefont {A.~W.}\ \bibnamefont {Harrow}},\ }\bibfield  {title} {\bibinfo {title} {The quantum schur transform: I. efficient qudit circuits},\ }\href@noop {} {\bibfield  {journal} {\bibinfo  {journal} {arXiv preprint quant-ph/0601001}\ } (\bibinfo {year} {2005})}\BibitemShut {NoStop}%
\bibitem [{\citenamefont {Kirby}\ and\ \citenamefont {Strauch}(2017)}]{kirby2017practical}%
  \BibitemOpen
  \bibfield  {author} {\bibinfo {author} {\bibfnamefont {W.~M.}\ \bibnamefont {Kirby}}\ and\ \bibinfo {author} {\bibfnamefont {F.~W.}\ \bibnamefont {Strauch}},\ }\bibfield  {title} {\bibinfo {title} {A practical quantum algorithm for the schur transform},\ }\href@noop {} {\bibfield  {journal} {\bibinfo  {journal} {arXiv preprint arXiv:1709.07119}\ } (\bibinfo {year} {2017})}\BibitemShut {NoStop}%
\bibitem [{\citenamefont {Krovi}(2019)}]{krovi2019efficient}%
  \BibitemOpen
  \bibfield  {author} {\bibinfo {author} {\bibfnamefont {H.}~\bibnamefont {Krovi}},\ }\bibfield  {title} {\bibinfo {title} {An efficient high dimensional quantum schur transform},\ }\href@noop {} {\bibfield  {journal} {\bibinfo  {journal} {Quantum}\ }\textbf {\bibinfo {volume} {3}},\ \bibinfo {pages} {122} (\bibinfo {year} {2019})}\BibitemShut {NoStop}%
\bibitem [{\citenamefont {Yutsis}\ \emph {et~al.}(1962)\citenamefont {Yutsis}, \citenamefont {Levinson},\ and\ \citenamefont {Vanagas}}]{yutsis1962mathematical}%
  \BibitemOpen
  \bibfield  {author} {\bibinfo {author} {\bibfnamefont {A.~P.}\ \bibnamefont {Yutsis}}, \bibinfo {author} {\bibfnamefont {I.~B.}\ \bibnamefont {Levinson}},\ and\ \bibinfo {author} {\bibfnamefont {V.~V.}\ \bibnamefont {Vanagas}},\ }\bibfield  {title} {\bibinfo {title} {Mathematical apparatus of the theory of angular momentum},\ }\href@noop {} {\bibfield  {journal} {\bibinfo  {journal} {Academy of Sciences of the Lithuanian SS R}\ } (\bibinfo {year} {1962})}\BibitemShut {NoStop}%
\bibitem [{\citenamefont {Varshalovich}\ \emph {et~al.}()\citenamefont {Varshalovich}, \citenamefont {Moskalev},\ and\ \citenamefont {Khersonskii}}]{varshalovich1988quantum}%
  \BibitemOpen
  \bibfield  {author} {\bibinfo {author} {\bibfnamefont {D.~A.}\ \bibnamefont {Varshalovich}}, \bibinfo {author} {\bibfnamefont {A.~N.}\ \bibnamefont {Moskalev}},\ and\ \bibinfo {author} {\bibfnamefont {V.~K.}\ \bibnamefont {Khersonskii}},\ }\href@noop {} {\emph {\bibinfo {title} {Quantum theory of angular momentum}}}\BibitemShut {NoStop}%
\bibitem [{\citenamefont {Haggard}(2011)}]{haggard2011asymptotic}%
  \BibitemOpen
  \bibfield  {author} {\bibinfo {author} {\bibfnamefont {H.~M.}\ \bibnamefont {Haggard}},\ }\href@noop {} {\emph {\bibinfo {title} {Asymptotic analysis of spin networks with applications to quantum gravity}}}\ (\bibinfo  {publisher} {University of California, Berkeley},\ \bibinfo {year} {2011})\BibitemShut {NoStop}%
\bibitem [{\citenamefont {De~Fazio}\ \emph {et~al.}(2003)\citenamefont {De~Fazio}, \citenamefont {Cavalli},\ and\ \citenamefont {Aquilanti}}]{de2003orthogonal}%
  \BibitemOpen
  \bibfield  {author} {\bibinfo {author} {\bibfnamefont {D.}~\bibnamefont {De~Fazio}}, \bibinfo {author} {\bibfnamefont {S.}~\bibnamefont {Cavalli}},\ and\ \bibinfo {author} {\bibfnamefont {V.}~\bibnamefont {Aquilanti}},\ }\bibfield  {title} {\bibinfo {title} {Orthogonal polynomials of a discrete variable as expansion basis sets in quantum mechanics: Hyperquantization algorithm},\ }\href@noop {} {\bibfield  {journal} {\bibinfo  {journal} {International journal of quantum chemistry}\ }\textbf {\bibinfo {volume} {93}},\ \bibinfo {pages} {91} (\bibinfo {year} {2003})}\BibitemShut {NoStop}%
\bibitem [{\citenamefont {Havl{\'\i}{\v{c}}ek}\ \emph {et~al.}(2019)\citenamefont {Havl{\'\i}{\v{c}}ek}, \citenamefont {Strelchuk},\ and\ \citenamefont {Temme}}]{SchurSampling}%
  \BibitemOpen
  \bibfield  {author} {\bibinfo {author} {\bibfnamefont {V.}~\bibnamefont {Havl{\'\i}{\v{c}}ek}}, \bibinfo {author} {\bibfnamefont {S.}~\bibnamefont {Strelchuk}},\ and\ \bibinfo {author} {\bibfnamefont {K.}~\bibnamefont {Temme}},\ }\bibfield  {title} {\bibinfo {title} {Classical algorithm for quantum su (2) schur sampling},\ }\href@noop {} {\bibfield  {journal} {\bibinfo  {journal} {Physical Review A}\ }\textbf {\bibinfo {volume} {99}},\ \bibinfo {pages} {062336} (\bibinfo {year} {2019})}\BibitemShut {NoStop}%
\bibitem [{\citenamefont {Aharonov}\ \emph {et~al.}(2006)\citenamefont {Aharonov}, \citenamefont {Jones},\ and\ \citenamefont {Landau}}]{aharonov2006polynomial}%
  \BibitemOpen
  \bibfield  {author} {\bibinfo {author} {\bibfnamefont {D.}~\bibnamefont {Aharonov}}, \bibinfo {author} {\bibfnamefont {V.}~\bibnamefont {Jones}},\ and\ \bibinfo {author} {\bibfnamefont {Z.}~\bibnamefont {Landau}},\ }\bibfield  {title} {\bibinfo {title} {A polynomial quantum algorithm for approximating the jones polynomial},\ }in\ \href@noop {} {\emph {\bibinfo {booktitle} {Proceedings of the thirty-eighth annual ACM symposium on Theory of computing}}}\ (\bibinfo {year} {2006})\ pp.\ \bibinfo {pages} {427--436}\BibitemShut {NoStop}%
\bibitem [{\citenamefont {{Wikipedia contributors}}(2022{\natexlab{a}})}]{CGWiki}%
  \BibitemOpen
  \bibfield  {author} {\bibinfo {author} {\bibnamefont {{Wikipedia contributors}}},\ }\href {https://en.wikipedia.org/w/index.php?title=Clebsch%E2%80%93Gordan_coefficients&oldid=1113904565} {\bibinfo {title} {Clebsch–gordan coefficients --- {Wikipedia}{,} the free encyclopedia}} (\bibinfo {year} {2022}{\natexlab{a}}),\ \bibinfo {note} {[Online; accessed 5-October-2022]}\BibitemShut {NoStop}%
\bibitem [{\citenamefont {{Weisstein, Eric W.}}(2022)}]{Wigner3jWolfram}%
  \BibitemOpen
  \bibfield  {author} {\bibinfo {author} {\bibnamefont {{Weisstein, Eric W.}}},\ }\href {https://mathworld.wolfram.com/Wigner3j-Symbol.html} {\bibinfo {title} {Wigner 3j-symbol}} (\bibinfo {year} {{2022}})\BibitemShut {NoStop}%
\bibitem [{\citenamefont {Gottesman}(1998)}]{gottesman1998heisenberg}%
  \BibitemOpen
  \bibfield  {author} {\bibinfo {author} {\bibfnamefont {D.}~\bibnamefont {Gottesman}},\ }\bibfield  {title} {\bibinfo {title} {The heisenberg representation of quantum computers},\ }\href@noop {} {\bibfield  {journal} {\bibinfo  {journal} {arXiv preprint quant-ph/9807006}\ } (\bibinfo {year} {1998})}\BibitemShut {NoStop}%
\bibitem [{\citenamefont {Dehaene}\ and\ \citenamefont {De~Moor}(2003)}]{dehaene2003clifford}%
  \BibitemOpen
  \bibfield  {author} {\bibinfo {author} {\bibfnamefont {J.}~\bibnamefont {Dehaene}}\ and\ \bibinfo {author} {\bibfnamefont {B.}~\bibnamefont {De~Moor}},\ }\bibfield  {title} {\bibinfo {title} {Clifford group, stabilizer states, and linear and quadratic operations over gf (2)},\ }\href@noop {} {\bibfield  {journal} {\bibinfo  {journal} {Physical Review A}\ }\textbf {\bibinfo {volume} {68}},\ \bibinfo {pages} {042318} (\bibinfo {year} {2003})}\BibitemShut {NoStop}%
\bibitem [{\citenamefont {Valiant}(2001)}]{valiant2001quantum}%
  \BibitemOpen
  \bibfield  {author} {\bibinfo {author} {\bibfnamefont {L.~G.}\ \bibnamefont {Valiant}},\ }\bibfield  {title} {\bibinfo {title} {Quantum computers that can be simulated classically in polynomial time},\ }in\ \href@noop {} {\emph {\bibinfo {booktitle} {Proceedings of the thirty-third annual ACM symposium on Theory of computing}}}\ (\bibinfo {year} {2001})\ pp.\ \bibinfo {pages} {114--123}\BibitemShut {NoStop}%
\bibitem [{\citenamefont {Alex}\ \emph {et~al.}(2011)\citenamefont {Alex}, \citenamefont {Kalus}, \citenamefont {Huckleberry},\ and\ \citenamefont {von Delft}}]{alex2011numerical}%
  \BibitemOpen
  \bibfield  {author} {\bibinfo {author} {\bibfnamefont {A.}~\bibnamefont {Alex}}, \bibinfo {author} {\bibfnamefont {M.}~\bibnamefont {Kalus}}, \bibinfo {author} {\bibfnamefont {A.}~\bibnamefont {Huckleberry}},\ and\ \bibinfo {author} {\bibfnamefont {J.}~\bibnamefont {von Delft}},\ }\bibfield  {title} {\bibinfo {title} {A numerical algorithm for the explicit calculation of su (n) and sl (n, c) clebsch--gordan coefficients},\ }\href@noop {} {\bibfield  {journal} {\bibinfo  {journal} {Journal of Mathematical Physics}\ }\textbf {\bibinfo {volume} {52}},\ \bibinfo {pages} {023507} (\bibinfo {year} {2011})}\BibitemShut {NoStop}%
\bibitem [{\citenamefont {Martin-Dussaud}(2019)}]{martin2019primer}%
  \BibitemOpen
  \bibfield  {author} {\bibinfo {author} {\bibfnamefont {P.}~\bibnamefont {Martin-Dussaud}},\ }\bibfield  {title} {\bibinfo {title} {A primer of group theory for loop quantum gravity and spin-foams},\ }\href@noop {} {\bibfield  {journal} {\bibinfo  {journal} {General Relativity and Gravitation}\ }\textbf {\bibinfo {volume} {51}},\ \bibinfo {pages} {1} (\bibinfo {year} {2019})}\BibitemShut {NoStop}%
\bibitem [{\citenamefont {{Wikipedia contributors}}(2022{\natexlab{b}})}]{6jwiki}%
  \BibitemOpen
  \bibfield  {author} {\bibinfo {author} {\bibnamefont {{Wikipedia contributors}}},\ }\href {https://en.wikipedia.org/w/index.php?title=6-j_symbol&oldid=1113995866} {\bibinfo {title} {6-j symbol --- {Wikipedia}{,} the free encyclopedia}} (\bibinfo {year} {2022}{\natexlab{b}}),\ \bibinfo {note} {[Online; accessed 6-December-2022]}\BibitemShut {NoStop}%
\bibitem [{\citenamefont {Bravyi}\ \emph {et~al.}(2022)\citenamefont {Bravyi}, \citenamefont {Gosset},\ and\ \citenamefont {Liu}}]{bravyi2022simulate}%
  \BibitemOpen
  \bibfield  {author} {\bibinfo {author} {\bibfnamefont {S.}~\bibnamefont {Bravyi}}, \bibinfo {author} {\bibfnamefont {D.}~\bibnamefont {Gosset}},\ and\ \bibinfo {author} {\bibfnamefont {Y.}~\bibnamefont {Liu}},\ }\bibfield  {title} {\bibinfo {title} {How to simulate quantum measurement without computing marginals},\ }\href@noop {} {\bibfield  {journal} {\bibinfo  {journal} {Physical Review Letters}\ }\textbf {\bibinfo {volume} {128}},\ \bibinfo {pages} {220503} (\bibinfo {year} {2022})}\BibitemShut {NoStop}%
\bibitem [{\citenamefont {Nielsen}\ and\ \citenamefont {Chuang}(2002)}]{nielsen2002quantum}%
  \BibitemOpen
  \bibfield  {author} {\bibinfo {author} {\bibfnamefont {M.~A.}\ \bibnamefont {Nielsen}}\ and\ \bibinfo {author} {\bibfnamefont {I.}~\bibnamefont {Chuang}},\ }\href@noop {} {\bibinfo {title} {Quantum computation and quantum information}} (\bibinfo {year} {2002})\BibitemShut {NoStop}%
\bibitem [{\citenamefont {Andrews}()}]{stackexchange}%
  \BibitemOpen
  \bibfield  {author} {\bibinfo {author} {\bibfnamefont {T.}~\bibnamefont {Andrews}},\ }\href@noop {} {\bibinfo {title} {Stack exchange discussion: https://math.stackexchange.com/questions/79861/arctan2-a-rational-multiple-of-pi}}\BibitemShut {NoStop}%
\bibitem [{\citenamefont {Krattenthaler}(2006)}]{krattenthaler2006growth}%
  \BibitemOpen
  \bibfield  {author} {\bibinfo {author} {\bibfnamefont {C.}~\bibnamefont {Krattenthaler}},\ }\bibfield  {title} {\bibinfo {title} {Growth diagrams, and increasing and decreasing chains in fillings of ferrers shapes},\ }\href@noop {} {\bibfield  {journal} {\bibinfo  {journal} {Advances in Applied Mathematics}\ }\textbf {\bibinfo {volume} {37}},\ \bibinfo {pages} {404} (\bibinfo {year} {2006})}\BibitemShut {NoStop}%
\end{thebibliography}%
\appendix
\section{Run Time Analysis for the Schur Transform and PQC Unitary Algorithms}\label{runtime}

We would like to analyse the run time (gate sequence length) for the above algorithms - for both the Schur transform and the general PQC unitaries. In both cases, we wish to find the run time of the unitaries themselves, as well as the run time of preparing individual Schur states and PQC states in the same way in which state preparation was achieved in Section \ref{statePrep}. We would also like to analyse the run time of the modified Schur transform algorithm of Section \ref{logManyAncillas} that uses logarithmically many ancillas. It will not be shown explicitly, but it can be easily checked from Claim \ref{schurTransformRuntime} that the run time of the modified version of the algorithm is asymptotically the same as the original version.

To analyse the run time, we use the same ideas for run time analysis as the authors in \cite{kirby2017practical}. The idea is to first decompose the algorithm into a sequence of two-level rotations, whose length is the `two-level gate sequence length'. Each of these may then be decomposed further into gates from the Clifford + T universal gate set, which may be implemented in a fault-tolerant manner. The length of the final sequence of Clifford + T gates is the `Clifford + T gate sequence length'. 

To translate the two-level gate sequence length to the Clifford + T gate sequence length, we use the same result as \cite{kirby2017practical} that a two-level rotation on $n$ qubits may be decomposed to an accuracy of $\delta$ (in the trace norm) in $O(n\;log(1/\delta))$ Clifford + T operations. If our two-level gate sequence length is p, given that each of our two-level rotations acts on $O(log(n))$ qubits, our Clifford + T gate sequence length is $O(p\;log(n)\;log(p/\epsilon))$, where $\epsilon$ is the desired final accuracy of our implementation in terms of the trace norm. 

With this in mind, we present each gate sequence length for each of the algorithms in question, all of which will be proved below.
\renewcommand{\arraystretch}{1.8}
\begin{table}[h]

\centering
\begin{tabular}{c|c|c||c|c}
&\multicolumn{2}{c||}{Schur}&\multicolumn{2}{c}{General PQC}\\\hline
     &Two-Level&Clifford + T & Two-Level & Clifford + T\\\hline
     Unitary & $O(n^3)$&$O(n^3log(n)log(\frac{n}{\epsilon}))$&$O(n^6)$&$O(n^6log(n)log(\frac{n}{\epsilon}))$\\
     \makecell{Individual State\\ Preparation} & $O(n^2)$ & $O(n^2log(n)log(\frac{n}{\epsilon}))$&$O(n^3)$&$O(n^3log(n)log(\frac{n}{\epsilon}))$
		   
\end{tabular}\label{runtimes}
\caption{Gate sequence lengths. The gate sequence lengths shown for the Schur unitary apply to both our original algorithm and the modified version of Section \ref{logManyAncillas} that uses logarithmically many ancillas. This will not be proved but can be easily checked.}
\end{table}

Again in \cite{kirby2017practical}, it is also used that a unitary $U$ may be decomposed into a sequence of two-level rotations whose length equals the number of non-zero elements on or below the main diagonal of U, not counting 1's on the main diagonal. In what follows, we will bound this above simply by the number of non-zero entries in U, not counting 1's on the main diagonal.

We will now prove all of the above two-level run times, which immediately give us the Clifford + T run times.

\begin{claim}
Our algorithm for the Schur transform may be decomposed into a sequence of $O(n^3)$ two-level rotations.\label{schurTransformRuntime}
\end{claim}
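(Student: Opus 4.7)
The plan is to break the Schur transform into its two stages (pre-mapping and coupling) and bound the two-level gate count contributed by each of the $n-1$ unitaries in each stage, using the fact invoked from \cite{kirby2017practical} that the length of a two-level decomposition of $U$ is bounded by the number of non-zero entries of $U$ minus the number of $1$'s on its diagonal. Since each unitary in both stages acts on $O(\log n)$ qubits and does nothing outside an easily described ``valid'' subspace of encoded tuples (where it behaves like identity and so contributes only $1$'s on the diagonal that do not count), it will suffice to bound the number of non-zero entries associated with the valid inputs at each step.

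For the pre-mapping stage (Algorithm~\ref{premapSchur}), the $i$-th step is a basis-to-basis permutation $\ket{j_i}\ket{m_i}\ket{x_{i+2}} \mapsto \ket{j_i}\ket{j_{i+1}}\ket{m_{i+1}}$, so each valid column has exactly one non-zero entry. First I would show that the number of valid input tuples is at most $2 \sum_{j_i} (2j_i + 1) = O(i^2)$, using the fact that $j_i \in \{0 \text{ or } \tfrac{1}{2}, \dots, \tfrac{i+1}{2}\}$ and $m_i \in \{-j_i, \dots, j_i\}$. Summing over $i = 1, \dots, n-2$ (plus $O(1)$ for the first step) yields a total of $O(n^3)$ two-level rotations for the pre-mapping stage.

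For the coupling stage (Algorithm~\ref{couplingSchur}), the $i$-th step performs
\begin{equation*}
\ket{j_i}\ket{j_{i+1}}\ket{m_{i+1}} \mapsto \sum_{m_i,\, x_{i+2}} C^{j_{i+1},\, m_{i+1}}_{j_i,\, m_i;\, 1/2,\, x_{i+2}} \ket{j_i}\ket{m_i}\ket{x_{i+2}}.
\end{equation*}
The key observation is that conservation of angular momentum forces $m_i + x_{i+2} = m_{i+1}$, and since $x_{i+2} \in \{-\tfrac{1}{2}, \tfrac{1}{2}\}$, each valid column has at most two non-zero entries. The count of valid triples $(j_i, j_{i+1}, m_{i+1})$ is $O(i^2)$ by the same reasoning as before (noting that $j_i$ is determined from $j_{i+1}$ up to two choices), so the total number of non-zero entries in this unitary is $O(i^2)$. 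Summing over all $n-1$ coupling steps gives another $O(n^3)$ contribution.

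Adding the two stages yields the claimed $O(n^3)$ two-level gate count. The main subtlety, rather than an obstacle, is being careful that the unitaries act as identity on the ``invalid'' portion of the $O(\log n)$-qubit registers (so the enormous $\mathrm{poly}(n) \times \mathrm{poly}(n)$ ambient matrix contributes only $1$'s on the diagonal outside the valid subspace, which the counting rule discards), and that the conservation-of-angular-momentum structure of the Clebsch--Gordan coefficients is what keeps the coupling columns sparse with only two non-zero entries apiece rather than $O(i)$.
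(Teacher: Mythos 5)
Your proposal is correct and follows essentially the same route as the paper's proof: bound each of the $O(n)$ steps by the number of valid input configurations ($O(n^2)$, or your slightly sharper $O(i^2)$) times the number of non-zero entries per column (one for the basis-permutation pre-mapping steps, at most two for the coupling steps by conservation of angular momentum together with $x_{i+2} \in \{\pm\tfrac{1}{2}\}$). Your extra remark about the unitaries acting as the identity off the valid subspace is a sensible clarification of a point the paper leaves implicit, but it does not change the argument.
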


\begin{proof}
The first stage, the pre-mapping, consists of $O(n)$ steps, which may generically be represented as $\ket{j_k}\ket{m_k}\ket{x_{k+2}} \mapsto \ket{j_k}\ket{j_{k+1}}\ket{m_{k+1}}$. The registers on the left-hand side take $O(n^2)$ values and they are each mapped to an individual computational basis state. Thus, the two-level gate sequence length of the pre-mapping stage is $O(n^3)$.

In the coupling stage, there are again $O(n)$ steps, where we may write a generic one as $\ket{j_k}\ket{j_{k+1}}\ket{m_{k+1}} \mapsto \sum_{m_k, x_{k+2}} C^{j_{k+1}, m_{k+1}}_{j_k, m_k; 1/2, x_{k+2}}\ket{j_k}\ket{m_k}\ket{x_{k+2}}$. Notice that the sum on the right-hand side may run over at most 2 elements, because of conservation of angular momentum and $x_{k+2} \in \left\{\pm \frac{1}{2}\right\}$. The registers on the left-hand side may again run over $O(n^2)$ values, because $j_k \leq \frac{n}{2}$, $|j_{k+1} - j_k| = \frac{1}{2}$ and $m_{k+1} \in \{-j_{k+1}, -j_{k+1} + 1, ..., j_{k+1}\}$, which means that the two-level gate sequence length of the coupling stage is also $O(n^3)$.
\end{proof}

\begin{claim}
Our algorithm for individual state preparation, when applied to Schur states, may be decomposed into a sequence of $O(n^2)$ two-level rotations.
\end{claim}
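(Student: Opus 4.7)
The plan is to carry over the same non-zero-matrix-entry counting argument used in the preceding claim for the Schur transform, but exploit the crucial fact that in preparing a single Schur state each $j$-register holds one fixed value. This drops an entire factor of $n$ from the cost of each coupling step, turning the $O(n^3)$ bound for the full transform into $O(n^2)$ for individual state preparation.

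First I would recall the structure of the preparation algorithm from Section \ref{statePrep}: it begins from the state $\ket{j_1}\ket{j_2}\cdots\ket{j_{n-2}}\ket{J}\ket{M}$ (each register encoded in the minimal number of qudits, since each index takes only one value) and applies $O(n)$ coupling unitaries, one per vertex of the Schur tree, moving from the root to the leaves. A generic step has the form
\[ \ket{j_{k-1}}\ket{j_k}\ket{m_k}\;\mapsto\;\sum_{m_{k-1},\,x_{k+1}} C^{j_k,\, m_k}_{j_{k-1},\, m_{k-1};\; 1/2,\, x_{k+1}}\ket{j_{k-1}}\ket{m_{k-1}}\ket{x_{k+1}}. \]

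Next I would count the non-zero entries of the matrix realising this step. Since both $j_{k-1}$ and $j_k$ are fixed throughout the computation, the only varying input label is $m_k$, which ranges over $O(n)$ values (as $|m_k|\leq j_k\leq n/2$). For each such input, conservation of angular momentum forces $m_{k-1}+x_{k+1}=m_k$, and combined with $x_{k+1}\in\{\pm\tfrac12\}$ this yields at most two valid output terms. So this unitary has $O(n)$ non-zero matrix entries and hence admits a two-level decomposition of length $O(n)$. The root step $\ket{j_{n-2}}\ket{J}\ket{M}\mapsto\sum_{m_{n-2},x_n} C^{J,M}_{j_{n-2}, m_{n-2};\,1/2, x_n}\ket{j_{n-2}}\ket{m_{n-2}}\ket{x_n}$ has only $O(1)$ non-zero entries, and the final leaf coupling $\ket{j_1}\ket{m_1}\mapsto\sum_{x_1,x_2}C^{j_1, m_1}_{1/2, x_1;\,1/2, x_2}\ket{x_1}\ket{x_2}$ is constant-sized, so neither affects the asymptotic bound.

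Finally, summing over the $O(n)$ vertices gives a total two-level gate sequence length of $O(n)\cdot O(n)=O(n^2)$, matching the claim. I do not anticipate a serious obstacle; the only subtlety is the bookkeeping needed to ensure that the savings genuinely come from fixing the $j$-values, i.e.\ that at each step one implements the unitary on the reduced registers (with $j_{k-1}$ and $j_k$ held constant) rather than the full $O(n^2)$-dimensional one used for the transform. Since Section \ref{statePrep} explicitly adopts the minimal encoding in which constant-valued registers consume no qubits, this reduction is immediate.
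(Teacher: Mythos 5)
Your proposal is correct and follows essentially the same argument as the paper: start from the labelled state $\ket{j_1}\cdots\ket{j_{n-2}}\ket{J}\ket{M}$ with all $j$-values fixed, observe that each of the $O(n)$ coupling steps acts on registers taking only $O(n)$ values (since the $j$-registers are constant) with at most two non-zero terms per input by conservation of angular momentum, and multiply to get $O(n^2)$. The extra bookkeeping you include about the root and leaf steps is consistent with, and slightly more explicit than, the paper's treatment.
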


\begin{proof}
As in Section \ref{genCoupling}, we may start with labelled individual qubits in any state, say $\ket{0}$:

\begin{equation}
    \ket{j_1}...\ket{j_{n-2}}\ket{J}\ket{M}
\end{equation}

\noindent so that each spin eigenvalue is fixed to its target value. Then, a generic operation looks like $\ket{j_k}\ket{j_{k+1}}\ket{m_{k+1}} \mapsto \sum_{m_k, x_{k+2}} C^{j_{k+1}, m_{k+1}}_{j_k, m_k; 1/2, x_{k+2}}\ket{j_k}\ket{m_k}\ket{x_{k+2}}$. The registers on the left-hand side may take $O(n)$ values, and again, the sum on the right-hand side runs over at most 2 values. Since we make $O(n)$ of these operations, our final two-level gate sequence length is $O(n^2)$.
\end{proof}

\begin{claim}
Our algorithm for the implementation of a general PQC unitary may be decomposed into a sequence of $O(n^6)$ two-level rotations.
\end{claim}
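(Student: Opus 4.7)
The plan is to analyse the two stages of the algorithm for a general PQC unitary (Algorithms \ref{premapPQC} and \ref{couplingPQC}) separately, bounding the two-level gate sequence length of each of the $O(n)$ sub-operations, and then summing the contributions. As in the claims for the Schur transform, I will bound the number of two-level rotations needed to implement a unitary $U$ by the number of non-zero off-diagonal (and non-$1$ diagonal) entries of $U$ in the relevant computational basis, following the same decomposition used in \cite{kirby2017practical}.

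First I would handle the pre-mapping stage. A generic step takes the form $\ket{j_{l(i)}}\ket{m_{l(i)}}\ket{j_{r(i)}}\ket{m_{r(i)}} \mapsto \ket{j_{l(i)}}\ket{j_{r(i)}}\ket{j_i}\ket{m_i}$ and is a basis-preserving operation on four registers, each of which encodes a spin eigenvalue bounded in absolute value by $n/2$ and hence takes $O(n)$ values. The input basis therefore has $O(n^4)$ elements, each mapped to a distinct computational basis state, yielding at most $O(n^4)$ non-zero entries in the unitary. With $n-1$ such steps, the whole pre-mapping stage requires $O(n^5)$ two-level rotations.

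Next I would handle the coupling stage, which will be the dominant contribution. A generic step has the form
\begin{equation}
\ket{j_{l(i)}}\ket{j_{r(i)}}\ket{j_i}\ket{m_i} \mapsto \sum_{m_{l(i)}, m_{r(i)}} C^{j_i, m_i}_{j_{l(i)}, m_{l(i)}; j_{r(i)}, m_{r(i)}} \ket{j_{l(i)}}\ket{m_{l(i)}}\ket{j_{r(i)}}\ket{m_{r(i)}}.
\end{equation}
The input basis again has $O(n^4)$ states, so I would count non-zero entries by examining the right-hand superposition for each input. Here the crucial observation is conservation of angular momentum: the Clebsch-Gordan coefficient vanishes unless $m_{l(i)} + m_{r(i)} = m_i$. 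Hence for each fixed input, $m_{l(i)}$ alone determines $m_{r(i)}$, and since $|m_{l(i)}| \leq j_{l(i)} \leq n/2$, the sum runs over only $O(n)$ non-zero terms. Each coupling sub-operation therefore has at most $O(n^4) \cdot O(n) = O(n^5)$ non-zero entries, and the $O(n)$ coupling steps together contribute $O(n^6)$ two-level rotations.

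Combining the two stages gives $O(n^5) + O(n^6) = O(n^6)$, as claimed. The only subtle step is the $O(n)$ (rather than $O(n^2)$) bound on the size of the superposition in each coupling sub-operation, which relies on the conservation-of-angular-momentum selection rule for Clebsch-Gordan coefficients; everything else is a direct count using the fact that each register stores a spin eigenvalue of size at most $O(n)$. The Clifford+T run time $O(n^6 \log(n)\log(n/\epsilon))$ recorded in the table then follows immediately from the general bound for compiling a two-level rotation on $O(\log n)$ qubits, exactly as in the Schur case.
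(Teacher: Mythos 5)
Your proof is correct and follows essentially the same route as the paper's: counting non-zero entries per step, using the $O(n^4)$ range of the four registers, and invoking conservation of angular momentum to reduce the coupling-stage sum to $O(n)$ terms, giving $O(n^5)$ per coupling step and $O(n^6)$ overall. Your added justification that $m_{l(i)}$ determines $m_{r(i)}$ is just a more explicit statement of the selection rule the paper uses implicitly.
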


\begin{proof}
Each of the $O(n)$ steps of the pre-mapping stage may be written as $\ket{j_{l(i)}}\ket{m_{l(i)}}\ket{j_{r(i)}}\ket{m_{r(i)}} \mapsto$\newline $\ket{j_{l(i)}}\ket{j_{r(i)}}\ket{j_i}\ket{m_i}$. The registers of the left-hand side may take $O(n^4)$ values, and each map to an individual computational basis state, meaning that the pre-mapping has two-level gate sequence length $O(n^5)$. 

Then, the coupling stage comprises $O(n)$ steps of the form $\ket{j_{l(i)}}\ket{j_{r(i)}}\ket{j_i}\ket{m_i} \mapsto$\newline$ \sum_{m_{l(i)}, m_{r(i)}}C^{j_i, m_i}_{j_{l(i)}, m_{l(i)}; j_{r(i)}, m_{r(i)}}\ket{j_{l(i)}}\ket{m_{l(i)}}\ket{j_{r(i)}}\ket{m_{r(i)}}$. The left-hand registers may again take $O(n^4)$ values, but now the sum on the right-hand side runs over $O(n)$ elements, given conservation of angular momentum. Each step therefore has a two-level gate sequence length of $O(n^5)$, giving the whole stage a two-level gate sequence length of $O(n^6)$.
\end{proof}

\begin{claim}
Our algorithm for individual state preparation, when applied to general PQC states, may be decomposed into a sequence of $O(n^3)$ two-level rotations.
\end{claim}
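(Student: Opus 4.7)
The plan is to directly adapt the counting argument used in the preceding claim for individual Schur-state preparation to an arbitrary PQC tree. The starting observation is that in the state-preparation variant of Algorithm~\ref{couplingPQC}, all $j$-labels (internal as well as the root label $J$) together with $M$ are fixed by the target state, so every $j$-register and the $M$-register can be encoded as a single qubit in a fixed computational basis state; only the $m_i$-registers at internal vertices will ever carry a genuine superposition. Hence the initial state is effectively $\ket{j_1}\ldots\ket{j_{n-2}}\ket{J}\ket{M}$ with every register holding a single fixed value.

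Next I would count the two-level gate cost of a single generic step, which acts as
\begin{equation*}
\ket{j_{l(i)}}\ket{j_{r(i)}}\ket{j_i}\ket{m_i} \;\mapsto\; \sum_{m_{l(i)}, m_{r(i)}} C^{j_i, m_i}_{j_{l(i)}, m_{l(i)}; j_{r(i)}, m_{r(i)}} \ket{j_{l(i)}}\ket{m_{l(i)}}\ket{j_{r(i)}}\ket{m_{r(i)}}.
\end{equation*}
The number of non-zero entries of the corresponding unitary matrix is at most (number of distinct inputs) times (maximum output superposition size). With the three $j$-registers frozen, only $m_i$ varies, giving at most $O(n)$ distinct inputs since $|m_i| \le j_i \le n/2$. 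For each such input, conservation of angular momentum fixes $m_{r(i)} = m_i - m_{l(i)}$, so the output sum contains at most $O(n)$ non-zero terms (bounded by $|m_{l(i)}| \le j_{l(i)} \le n/2$). Each step thus contributes $O(n^2)$ two-level rotations.

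Finally, since the PQC tree has $n-1$ internal vertices, summing this per-step bound over the $O(n)$ steps yields the claimed $O(n^3)$ total. I do not expect any significant obstacle: the argument is a routine product of bounds. The one point worth double-checking is that fixing the $j$-labels really does collapse the input-register size from the $O(n^4)$ that appears in the corresponding step of the full PQC-unitary analysis (where all four registers vary) down to $O(n)$ here -- this is exactly the factor-of-$n^3$ saving that turns the unitary's $O(n^6)$ cost into the state-preparation cost $O(n^3)$.
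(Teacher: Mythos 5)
Your proposal is correct and follows essentially the same route as the paper's proof: the $j$-registers are frozen at their target values so each generic coupling step has only $O(n)$ inputs (from $m_i$) and an output sum of $O(n)$ terms by conservation of angular momentum, giving $O(n^2)$ two-level rotations per step and $O(n^3)$ over the $O(n)$ vertices. No gaps.
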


\begin{proof}
Again, we may start with $n$ individual qubits labelled as

\begin{equation}
\ket{j_1}...\ket{j_{n-2}}\ket{J}\ket{M}
\end{equation}

\noindent and operate $O(n)$ times as $\ket{j_{l(i)}}\ket{j_{r(i)}}\ket{j_i}\ket{m_i} \mapsto \sum_{m_{l(i)}, m_{r(i)}}C^{j_i, m_i}_{j_{l(i)}, m_{l(i)}; j_{r(i)}, m_{r(i)}}\ket{j_{l(i)}}\ket{m_{l(i)}}\ket{j_{r(i)}}\ket{m_{r(i)}}$, where all the $j_i$ are fixed to their target values, as always. The left-hand registers may therefore take $O(n)$ values and the right-hand sum runs over $O(n)$ values, again by conservation of angular momentum. Each operation therefore has a two-level gate sequence length of $O(n^2)$ and so the whole algorithm has a two-level gate sequence length of $O(n^3)$.
\end{proof}

\section{The Necessity of the Schur Transform as a Unitary Operation}\label{cleanTransform}

In this work, we presented algorithms for the quantum Schur transform as a unitary operation on a quantum computer. We now discuss where the differences lie in the differing notions of the Schur transform that exist, namely the operation as an isometry versus as a unitary, and why the unitary implementation (which results from our `pre-mapping' stage, see Section \ref{schurAlgoSection}) is essential in some applications.

In Section \ref{PQCUnitaries}, we defined Schur states as $n$-qubit states labelled $\ket{j_1, ..., j_{n-2},J,M}$ having total angular momentum (spin) $j_1$ on their first two qubits, $j_2$ on their first three qubits, and up to spin $j_{n-2}$ on their first $n-1$ qubits, with total spin $J$ on all their qubits. Finally, this state has $Z$-angular momentum $M$ on all its qubits. In terms of $SU(2)$ Clebsch-Gordan coefficients, this state may be written as in Equation \eqref{SchurState}:

\begin{equation}
    \ket{j_1, j_2, ..., j_{n-2},J,M} = \sum_{x \in B_n} C^{j_1, x_1 + x_2}_{1/2, x_1; 1/2, x_2} C^{j_2, x_1 + x_2 + x_3}_{j_1, x_1 + x_2; 1/2, x_3} ... C^{J, M}_{j_{n-2}, x_1 + ... + x_{n-1}; 1/2, x_n} \ket{x}
\end{equation}
where again, for convenience, we sum over the set $B_n = \{\pm\frac{1}{2}\}^n$ and label computational basis states by $\ket{\pm \frac{1}{2}}$. As is common in more physical conventions, we defined the Schur transform as the unitary operation rotating the computational basis on $n$ qubits, $\{\ket{x}\}_{x \in B_n}$, to the Schur basis on $n$ qubits, $\ket{j_1, ..., j_{n-2}, J,M}$, with no particular order specified (the user may specify an order if desired). We recall that we achieved this in Section \ref{schurAlgoSection} in two stages - first was the pre-mapping stage which performs\footnote{Here, we only discuss the first version of the algorithm with $\mathcal{O}(n\log(n))$ ancillary qubits, as opposed to the modified version with logarithmically many qubits, for simplicity. Note that the same discussion goes through in this case.}

\begin{equation}
    \ket{x} \mapsto \ket{j_1}\ket{j_2}...\ket{j_{n-2}}\ket{J}\ket{M}
\end{equation}
where, again, the right-hand side is a computational basis state on $\mathcal{O}(n\log(n))$ qubits. Each ket $\ket{j_i}$, $\ket{J}$ and $\ket{M}$ separately encodes a spin eigenvalue and the sequence of eigenvalues on the right-hand side must be allowed by the rules of angular momentum - in particular the $j$-values must differ by $\pm 1/2$ from their neighbours and must be non-negative, while we must have $M \in \{-J, -J+1, ..., J\}$\footnote{The generalisation to qudits and $SU(d)$ representation theory is straightforward - see Section \ref{SUN}.}. There are $2^n$ possibilities for such valid sequences of eigenvalues, so such a mapping is indeed possible. No particular order is specified here, but one could use a canonical one, for example that given by the RSK correspondence \cite{krattenthaler2006growth}.

The second stage of our algorithm, to complete the Schur transform unitary, is the coupling stage:

\begin{equation}
    \ket{j_1}\ket{j_2}...\ket{j_{n-2}}\ket{J}\ket{M} \mapsto \ket{j_1, j_2, ..., j_{n-2},J,M}\label{couplingSchurStage}
\end{equation}
where we emphasise that in this operation ancillary qubits have become unentangled.

An alternative definition of the Schur transform, as is considered in, for example, \cite{harrow2005applications,kirby2017practical}, is the operation performing

\begin{equation}
    \ket{j_1, ..., j_{n-2},J,M} \mapsto \ket{j_1}\ket{j_2}...\ket{j_{n-2}}\ket{J}\ket{M}\label{mathematicalSchur}.
\end{equation}
This is the definition that is not adopted in this work. The history of this definition may be traced back to more mathematical representation theoretic work and so one might refer to the two notions as the `physical' and `mathematical' notions respectively. Sometimes, also, the unitary operation is referred to as a `clean' transform, given that the ancillary qubits have become unentangled by the end. There are two points that can lead to confusion between these two definitions. The first and most important is that the physical notion is a unitary operation, mapping $n$ qubits into $n$ qubits, whereas the mathematical notion is an isometry, mapping $n$ qubits into $\mathcal{O}(n\log(n))$ qubits. Secondly, the mathematical Schur transform is akin to the inverse of the physical Schur transform. Indeed, one can see that the mathematical definition of the Schur transform presented in Equation \eqref{mathematicalSchur} is exactly the inverse of the coupling stage in Equation \eqref{couplingSchurStage}.

For many applications, the mathematical Schur transform suffices and, accordingly, most of the literature on the Schur transform has been implementing only this isometry. However, for some ``PQC-like'' applications, for example the algorithms of \cite{zheng2022super}, the unitary is a necessity, hence the emphasis placed on unitarity in the present work.

Let us consider why it is necessary to perform the unitary in some applications. Indeed, suppose one wished to perform the inverse (unitary, physical) Schur transform, perform some unitary gate (say, a Hadamard on the first qubit) and then perform the forwards (unitary, physical) Schur transform. One notes that, in the same way there are $2^n$ computational basis states $\ket{x}$, there are $2^n$ valid Schur encodings $\ket{j_1}\ket{j_2}...\ket{j_{n-2}}\ket{J}\ket{M}$, and so could we not just perform the gate $H \otimes I^{\otimes (n-1)}$ on the $2^n$ valid computational basis states $\ket{j_1}\ket{j_2}...\ket{j_{n-2}}\ket{J}\ket{M}$? In theory, this could be done, but it is not clear \textit{a priori} how it can be done efficiently. Indeed, one must check each of the $\mathcal{O}(n\log(n))$ qubits to see if a given computational basis state forms a valid Schur encoding and so one \textit{a priori} has to act on every qubit simultaneously, which is inefficient. Thus, in this situation, the pre-mapping stage exactly gives us the ability to efficiently package the $2^n$ valid $\ket{j_1}\ket{j_2}...\ket{j_{n-2}}\ket{J}\ket{M}$ into the $2^n$ computational basis states $\ket{x}$ so that we can operate on them. More generally, for any algorithm requiring a Schur transform followed by an $n$-qubit unitary, the implementation of the Schur transform as a unitary is a necessity.

We finish by mentioning one other small point of confusion, although this is easily alleviated, and that is differences in notation. In the mathematical context, Schur states $\ket{j_1, j_2, ..., j_{n-2}, J, M}$ may be denoted as, for example $\ket{\lambda, q_\lambda, p_\lambda}$, whereas their encodings $\ket{j_1}\ket{j_2}...\ket{j_{n-2}}\ket{J}\ket{M}$ may be denoted $\ket{\lambda}\ket{q_\lambda}\ket{p_\lambda}$, as is seen in, for example, \cite{bacon2005quantum}. The total angular momentum (spin) of all $n$ qubits, $J$, corresponds to the partition of $n$, $\lambda$. $q_\lambda$, indexing within irreps of the unitary group, corresponds to $M$, and lastly $p_\lambda$, the index labelling states within irreps of the symmetric group, corresponds to all internal $j$ spins $(j_1, ..., j_{n-2})$.

\end{document}